\newcommand{\ncm}[2]{\newcommand{#1}{\ensuremath{#2}\xspace}}
\ncm{\fa}{A}
\newtheorem{theorem}{Theorem}
\newtheorem{proposition}{Proposition}
\newtheorem{corollary}{Corollary}
\newtheorem{lemma}{Lemma}
\newlength{\tempind}
\newlength{\tempskip}
\newenvironment{proof}
{ \setlength{\tempind}{\parindent} \setlength{\tempskip}{\parskip}
\setlength{\parindent}{0.25in} \setlength{\parskip}{2 pt} \noindent
\textbf{Proof.} }
{%
\setlength{\parindent}{\tempind} \setlength{\parskip}{\tempskip}
\hfill $\blacksquare$ \vspace{2mm} \\}
\let\OLDthebibliography\thebibliography
\renewcommand\thebibliography[1]{
	\OLDthebibliography{#1}
	\setlength{\parskip}{0pt}
	\setlength{\itemsep}{0pt}
	\vspace*{-12pt}
}
\begin{document}

\begin{center}
{\LARGE A Dynamic Model for Managing Volunteer Engagement}\\[12pt]
{\small Baris Ata, Booth School of Business, University of Chicago,
5807 S. Woodlawn Ave, Chicago, IL 60637, USA, baris.ata@chicagobooth.edu, p:773-834-2344} \\ [3pt]
{\small Mustafa H. Tongarlak, Bo\u{g}azi\c{c}i University, Bebek, 34342 Be\c{s}ikta\c{s}/Istanbul, T\"{u}rkiye, tongarlak@boun.edu.tr, p:+90 212 359 6503} \\ [3pt]
{\small Deishin Lee, Ivey Business School at Western University, 1255 Western Rd, London, ON N6G0N1, Canada, dlee@ivey.ca, p:519-661-3288} \\ [3pt]
{\small Joy Field, Boston College Carroll School of Management, 140 Commonwealth Avenue, Chestnut Hill, MA 02467, USA, joy.field@bc.edu, p:617-552-0442} \\ [12pt]

{\small October 23, 2023} \\
\end{center}

\doublespacing
\begin{abstract}
\noindent Non-profit organizations that provide food, shelter, and other services to people in need, rely on volunteers to deliver their services. Unlike paid labor, non-profit organizations have less control over unpaid volunteers’ schedules, efforts, and reliability. However, these organizations can invest in volunteer engagement activities to ensure a steady and adequate supply of volunteer labor. We study a key operational question of how a non-profit organization can manage its volunteer workforce capacity to ensure consistent provision of services. 
In particular, we formulate a multiclass queueing network model to characterize the optimal engagement activities for the non-profit organization to minimize the costs of enhancing volunteer engagement, while maximizing productive work done by volunteers. 
Because this problem appears intractable, we formulate an approximating Brownian control problem in the heavy traffic limit and study the dynamic control of that system. Our solution is a nested threshold policy with explicit congestion thresholds that indicate when the non-profit should optimally pursue various types of volunteer engagement activities. 
A numerical example calibrated using data from a large food bank shows that our dynamic policy for deploying engagement activities can significantly reduce the food bank's total annual cost of its volunteer operations while still maintaining almost the same level of social impact.
This improvement in performance does not require any additional resources -- it only requires that the food bank strategically deploy its engagement activities based on the number of volunteers signed up to work volunteer shifts.

\noindent \textbf{Keywords:} Volunteer, food bank, food insecurity, dynamic control

\end{abstract}

\singlespacing

\newpage
\doublespacing
\abovedisplayskip=9pt
\abovedisplayshortskip=0pt
\belowdisplayskip=9pt
\belowdisplayshortskip=9pt


\section{Introduction} \label{sec:intro}

Non-profit organizations (“non-profits”) and, in particular, public charities such as human services organizations that provide food, shelter, and other services to people in need, rely on volunteers to deliver their services. A 2018 Volunteering in America report by the Corporation for National and Community Service estimates that 77.3 million adults volunteered in 2017, working for a total of 6.9 billion hours, and generating \$167 billion in economic value \citep{americorps-2019}.  Yet, many organizations struggle with how to manage this ``charity workforce" in support of their social mission \citep{simmonds-2014}. Thus, in this paper, we study a key operational question of how a non-profit organization can manage its volunteer workforce capacity to ensure consistent provision of services.

In this study, we focus on non-profit operations that provide ongoing services over the long-term to promote community self-sufficiency and sustainability \citep{berenguer-shen-2020}. For example, food banks provide daily meals or food products to their community. In Las Vegas, Three Square food bank delivers food to a service network of nearly 1,400 organizations, schools, after school, and feeding sites.  In 2019, it distributed more than 50 million pounds of food and grocery products, equivalent to more than 41 million meals \citep{threesquare-2021}.  Each day, 200-300 volunteers are needed to make and pack meals and fill orders for these partner organizations \citep{goheen-2018}. For Three Square and other similar types of human services non-profits, these organizations must staff their operations with the appropriate number of volunteers to get the necessary work done on an on-going basis.

However, managing a volunteer workforce presents challenges distinct from those of paid employees. For example, hiring and paying employees entitle the employer to a predetermined number of hours and schedule of work, and promotes continuity in his/her positions.  In contrast, the non-profit has less control over unpaid volunteers’ schedules, efforts, and reliability \citep{ellis-2010}.  While employees follow a schedule that is determined by the capacity needs of the organization, volunteers decide when and how much they would like to work. Even if a volunteer signs up for a particular shift, the volunteer can cancel at any time. As a result, the amount of volunteer capacity that will actually be available in any given period is often uncertain, yet non-profit organizations need consistent working capacity to meet their commitments. Moreover, individuals volunteer at varying frequencies. Some enthusiastic volunteers may work once or twice a week, whereas others may wait years before returning to volunteer again, or never return. At a national level, more than one out of three individuals who volunteer in one year do not volunteer at all in the next year \citep{eisner-etal-2009}.

For these non-profit operations using a volunteer workforce, a key operational priority then becomes how to reduce or mitigate this uncertainty and inconsistency in workforce supply \citep{berenguer-shen-2020}. Although they cannot hire and pay volunteers, non-profits have some managerial levers for increasing volunteer engagement.  For example, Three Square focuses on ensuring that volunteers understand the organization’s mission and the impact volunteers make, and creating an experience that is meaningful for volunteers.  These are operationalized through recruiting activities (e.g., presentations to civic organizations), tours of the Three Square facility, volunteer orientation and training, group pictures, follow up emails to volunteers quantifying their impact, and continued contact with past volunteers to keep Three Square on their radar.  Across a sample of non-profit agencies, \citet{wisner-etal-2005} supports the importance of these and other specific efforts to promote volunteer satisfaction and further involvement with the organization. These activities can improve volunteer engagement, but they differ in the amount of time, effort, and cost to implement. Of course, the non-profit organization is not always looking to increase volunteer work capacity. The goal, as in for-profit organizations, is to match demand for working capacity with supply.

In this paper, we investigate how a non-profit organization should best deploy its volunteer engagement efforts in order to match working capacity demand with volunteer capacity, taking into account how the propensity to volunteer is affected by these efforts. The non-profit organization has work shifts it must staff. Volunteers can sign up to work these shifts using an online system or by calling the volunteer manager. In order to attract and retain volunteers, the organization uses engagement activities such as speaking engagements at organizations (e.g., companies in the area) or electronic communications (e.g., newsletters). These engagement activities are costly to the non-profit. However, if not enough volunteers sign up, shifts are not filled and the non-profit will not be able to produce the desired output (e.g., the target number of meals per day). The non-profit's problem is to determine how much to spend on engagement activities: spending too much wastes valuable resources if there are already enough volunteers, but spending too little increases the risk of not having enough volunteers to do the work, thereby reducing its social impact.

We model the non-profit's volunteer process as a queueing network with multiple classes of volunteers, distinguished by the frequency in which they volunteer. The system consists of many single-class, infinite server queues corresponding to the types of volunteers and one multiclass, single-server queue corresponding to the volunteers who have signed up to work. We formulate an approximating Brownian control problem in the heavy traffic limit and study the dynamic control of that system. Our solution is a nested threshold policy with explicit congestion thresholds that indicate when the non-profit should optimally pursue various types of volunteer engagement efforts. We perform a numerical study using a simulation model with parameters calibrated using process characteristics of a large food bank volunteer operation in the southwestern United States. Using our dynamic policy for deploying engagement activities, we show that the food bank can significantly reduce the total annual cost of its volunteer operation while still maintaining almost the same level of social impact. This improvement in performance does not require any additional resources -- it only requires that the food bank strategically deploy its engagement activities based on the number of volunteers signed up to work volunteer shifts.

Methodologically, this paper contributes to the literature on drift-rate control problems for diffusion models. First, it approximates a queueing network model of volunteer management by a diffusion model, which involves controlling the drift rate of a reflected Ornstein-Uhlenbeck diffusion process. The corresponding Bellman equation involves boundary conditions at the origin and at infinity, both of which are of Neumann type. Initially, we ignore the boundary condition at infinity and solve for a family of initial value problems starting at the origin parametrically. Studying how their derivatives at infinity vary, we choose the unique member of this family of solutions that satisfies the boundary condition of the Bellman equation at infinity. This is the solution of the Bellman equation.

\paragraph{Literature Review.}

This study is related to the stream of literature that studies volunteer engagement (see \citealt{snyder-omoto-2008} and \citealt{wilson-etal-2015} for reviews). \citet[page~131]{vecina-etal-2012} characterizes volunteer engagement as “an energetic and affective connection with their work.”  The volunteer recruitment process, as well as the design and management of the volunteer work experience, impact volunteer engagement (\citealt{haski-leventhal-etal-2011}; \citealt{brayko-etal-2016}; \citealt{einolf-2018}; \citealt{nesbit-etal-2018}). Volunteer engagement determines, to a large extent, volunteer satisfaction and organizational commitment - outcomes important to both the volunteer and non-profit.  These, in turn, are associated with the volunteer's intention to continue volunteering with the non-profit (\citealt{rehnberg-2009}; \citealt{ellis-2010}; \citealt{gazley-2012}; \citealt{vecina-etal-2012}; \citealt{henderson-sowa-2019}).  Thus, we study how the active management of volunteer processes and targeted activities by non-profits can be used to increase volunteer engagement to aid recruitment and retention.

An example of such a targeted activity is to use appeals that are matched to motives for volunteering, such as expressing humanitarian values or making social or career contacts \citep{clary-etal-1994}. \citet{bussell-forbes-2002} suggests creating “recruitment niches” and targeting recruitment activities based on the differing motives of each niche.  When recruiting volunteers, personal appeals are generally more effective than other channels, although not as scalable as presentations to organizations, direct marketing, and online marketing \citep{wymer-starnes-2001}. The pathway to volunteering also affects volunteer retention rates, with volunteers who were recruited directly by someone in the non-profit having the highest retention rates \citep{foster-bey-etal-2007}.

\citet{wisner-etal-2005} finds that schedule flexibility, orientation and training, empowerment, social interaction, reflection (i.e., the volunteer’s understanding of the organization’s mission and the role the volunteer plays in fulfilling the mission), and recognition and symbolic rewards (e.g., appreciation lunches) are antecedents of satisfaction with the volunteer experience, which is positively associated with retention.  Studies of volunteers in human service organizations \citep{cnaan-cascio-1998} and Wikipedia \citep{gallus-2017} also find strong support for symbolic rewards on volunteer satisfaction and retention.  In a survey study of older adult volunteers in human service and environment non-profits, training and role recognition are the two most important organizational facilitators of volunteer retention \citep{tang-etal-2009}.  \citet[page~35]{eisner-etal-2009} states that volunteer satisfaction and retention is supported “by creating an experience that is meaningful for the volunteer, develops skills, demonstrates impact, and taps into volunteers’ abilities and interests.”  More generally, activities that support volunteers with resources to facilitate the work that they do increase satisfaction and promote retention \citep{rehnberg-2009}.  

While some of these volunteer engagement activities, such as symbolic rewards, are low cost, others can be expensive.  For example, establishing mentoring relationships with volunteers promotes retention but can be costly \citep{mcbride-lee-2012}.  Big Brothers Big Sisters (BBBS) invests an average of \$1,000 to ensure an appropriate match between a youth and mentor \citep{brayko-etal-2016}.  After the first year, costs to sustain a match decrease significantly; thus, losing and replacing trained volunteers imposes a substantial effort and financial burden on BBBS.  Hiring a volunteer coordinator is also an expensive proposition but enables non-profits to perform more volunteer engagement activities \citep{urbaninstitute-2004}.  

We distinguish this study from the work on volunteer engagement found in the discipline-specific literature discussed above. The extant literature, mostly empirical and survey-based, focuses on identifying the  factors that increase volunteer engagement, and their subsequent effects on volunteer recruitment and retention. In our study, we build on this research by developing an operating policy that gives guidance to non-profits on when to pursue targeted volunteer engagement activities for the purpose of managing volunteer capacity to meet demand for services.

There is also a body of work in operations management that studies volunteer operations in non-profit organizations. \citet{ata-etal-2019} and \citet{sampson-2006} study the scheduling of volunteers. \citet{manshadi-rodilitz-2021} develops volunteer notification policies for volunteer-based crowdsourcing platforms that take into account volunteer preferences for tasks and sensitivity to excessive notifications. \citet{hewitt-etal-2015} addresses logistical issues such as approaches for consolidating home meal delivery for Meals on Wheels while minimizing operational disruptions and meeting client needs. \citet{urrea-etal-2019} investigates how volunteer experience and congestion at a charity storehouse impact the time to prepare and delays in completing food orders. In the area of supply chain management, \citet{ataseven-etal-2018} examines the role of intellectual capital in food bank supply chain integration. Our paper contributes to this growing stream of operations management literature.

We draw on and contribute to the literature on the dynamic control of queueing systems. In particular, we use the heavy traffic approximation approach pioneered by \citet{Harrison_88} which approximates the original control problem for a queueing system with a diffusion control problem that is easier to analyze; see \citet{Harrison_Wein_89,Harrison_Wein_90} for early examples of this approach. A series of papers have studied a class of problems related to ours -- heavy traffic approximations resulting in drift rate control problems. \citet{ata-harrison-shepp-2005} considers a drift rate control problem on a bounded interval under a general cost of control but no holding costs. \citet{Ata_Thin_Arrival_Streams} builds on \citet{ata-harrison-shepp-2005} and approximates a multi-class make-to-order production system with a drift rate control problem on a bounded interval that has a piecewise linear convex cost of control. Our paper relates methodologically to \citet{Ata_Thin_Arrival_Streams}, however, it also incorporates abandonment and allows for increasingly costly control interventions on a per class basis. Incorporating these two important model features leads to an analysis that is significantly more complex. 

In a series of papers, \citet{Ghosh_2007, Ghosh_2010} extend \citet{ata-harrison-shepp-2005} by incorporating holding costs and allowing the system manager to choose the bounded interval where the process lives endogenously, and introducing abandonments, respectively. See \citet{Ata_MakeToOrder_2009}, \citet{Ghamami_Ward_2013}, \citet{Ata_Tongarlak_Queueing_2013}, and \citet{Sun_2020} for similar formulations with abandonments. In a related paper, \citet{ata-etal-2019} approximates a gleaning operation using a drift rate control problem. The authors derive a nested threshold policy as the optimal staffing policy. The drift rate of their control problem has a different structure because the paper derives an approximation in the many server asymptotic regime. Thus, their analysis does not apply to our context.

In a recent paper, \citet{ata-barjesteh-2019} considers optimal control of a make-to-stock production system. The authors study dynamic pricing, scheduling, and outsourcing decisions simultaneously. They formulate a drift-rate control problem with quadratic cost of control as an approximation. The associated Bellman equation is a Riccati type differential equation, which admits a closed form solution in terms of Airy function. Leveraging this solution, the authors derive a closed form dynamic pricing policy.  \citet{Budhiraja_2011} studies an admission and service rate control problem for a queueing network and derives asymptotically optimal policies in the heavy traffic limit. Several other authors derive asymptotically optimal policies in other relates settings, see for example, \citet{Bell_Williams_2001, bell-williams-2005}, \citet{ata-kumar-2005}, \citet{Ata_Olsen_2009, Ata_Olsen_Queueing_2013}.

A related stream of literature uses Markov decision process formulations to study  service rate or admission control problems; see for example \citet{Crabill_72, Crabill_74}. \citet{Stidham_Weber_89} studies monotone service and arrival rate control policies for a queueing network. \citet{George_Harrison_2001} considers the dynamic service rate control problem for an M/M/1 queue.  Similarly, \citet{Ata_Shneorson_2006} builds on that and solves a dynamic arrival and service rate control problem.
In the context of queueing systems arising in wireless communications applications, \citet{Ata_2005} and \citet{Ata_Zachariadis_2007} solve related service rate control problems  explicitly; also see \citet{Hasenbein_2010} and \citet{Lewis_2013} for other related research.

More specifically, our paper relates to the asymptotic analysis of closed queueing networks with infinite-server queues, see for example \citet{krichagina-puhalskii-1997}, \citet{kogan-lipster-1993}, \citet{kogan-etal-1986}, \citet{smorodinskii-1986}, and \citet{alwan-ata-2020}. \citet{krichagina-puhalskii-1997} studies a closed queueing model containing a single-server queue and an infinite-server queue that has general service time distributions. \citet{kogan-lipster-1993} considers a closed queueing network with one infinite-server and many single-server queues, where only one single-server queue can be in heavy traffic. \citet{kogan-etal-1986} and \citet{smorodinskii-1986} prove limit theorems for related systems. \citet{alwan-ata-2020} considers a closed queueing network with multiple infinite-server queues and multiple single-server queues to study a ride-hailing system. The authors prove a heavy traffic limit theorem, but do not consider the control of that system. 

In this paper, we consider a queueing network with many single-class, infinite-server queues, each corresponding to a particular type of volunteer in repose, and one multiclass single-server queue, corresponding to the volunteers who have signed up to volunteer and are waiting to do so. We formally derive an approximating Brownian control problem in the heavy traffic limit and focus on the dynamic control of that system,  thus contributing to the body of work mentioned above.

\section{Model} \label{sec:model}

The non-profit organization generates social impact by providing goods and/or services to those in need. The working resources that produce the goods or provide the services are volunteers. 
The non-profit organization is served by two types of volunteers: repeat volunteers and one-time volunteers. Repeat volunteers volunteer again after a period of time whereas one-time volunteers volunteer for one shift and never return. There are $J (\tilde{J})$ classes of repeat (one-time) volunteers. Different classes of volunteers may differ in the types of tasks they can perform as a volunteer, their cancellation rates, and the tools available to the non-profit to enhance their engagement. Different classes of repeat volunteers can also differ in their frequencies of volunteering. Upon volunteering, each repeat volunteer enters a repose state, whose duration is random. After the repose period, the volunteer is available and signs up to volunteer again.

\begin{figure}[htbp]
\centering
\includegraphics[scale=0.40]{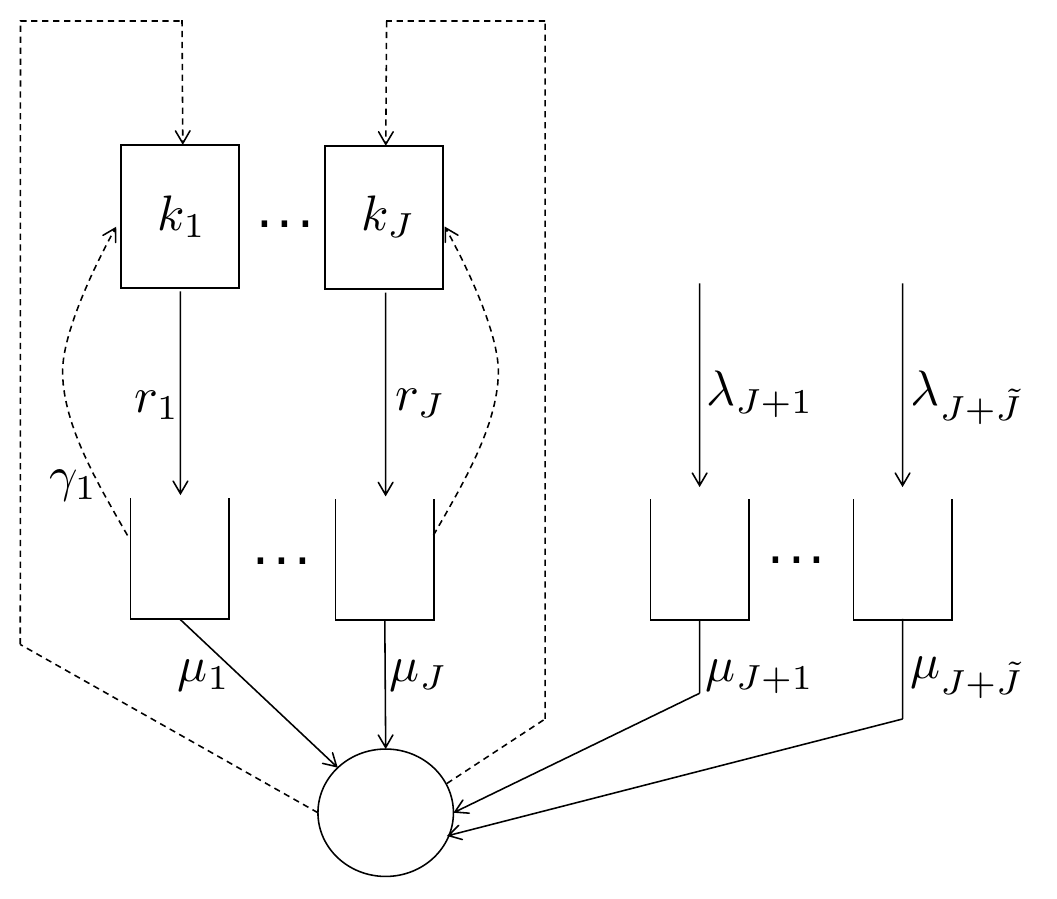}
\caption{A queueing network with $J$ infinite-server nodes, $J$ buffers for repeat volunteer classes, $\tilde{J}$ buffers for one-time volunteer classes, and one multiclass, single-server queue.}
\label{fig:system}
\end{figure}

We model the evolution of the non-profit's volunteer sign-up list using a multiclass queueing network with $J+\tilde{J} $ buffers as depicted in Figure~\ref{fig:system}. The repeat volunteer classes are indexed from 1~through~$J$, and they correspond to $J$ infinite-server nodes, one for each repeat volunteer class, modeling the volunteers of that class who are in repose. There are $k_j$ volunteers in class $j$ ($j = 1, \dots, J$) and $k = \sum_{j =1}^J k_j$ denotes the total number of repeat volunteers. One-time volunteer classes are indexed from $J+1$ through $J+\tilde{J}$, and they each correspond to a buffer in Figure~\ref{fig:system} with the corresponding indices. A class $j$ volunteer spends $m_j$ time units on average during each visit to the non-profit; the reciprocal $\mu_j = 1/m_j$ can be viewed as the non-profit's ``service rate'' of class $j$ volunteers (from the sign-up list). The $m_j$ time units do not include waiting time -- it only includes service time. We assume that the service times are exponentially distributed. 

One-time volunteers of class $j$ arrive according to a Poisson process with rate $\lambda_j$ for $j= J+1, \dots, J+ \tilde{J}$. Similarly, we model the repose time of a repeat volunteer of class $j$ as an exponential random variable with mean $1/r_j$. Crucially, the non-profit can engage in various activities to shorten the repose times (equivalently, increasing the repose exit rates $r_j$ for $j = 1, \dots, J$). These engagement activities can increase the arrival rates of the one-time volunteers, too. To be specific, we assume that the non-profit has $L$ activities (indexed by $l = 1, \dots, L$) to enhance the engagement of volunteers. 
Each activity may involve multiple classes of volunteers. We let $\mathcal{R}_l(\mathcal{S}_l)$ denote the set of repeat (one-time) volunteer classes targeted by activity $l$ for $l = 1, \dots, L$.
Similarly, we let $\mathcal{L}_j$ denote the set of engagement activities that targets class $j$ for $j = 1, \dots, J+\tilde{J}$.
We denote the non-profit's decisions of whether to engage in such efforts by
\begin{align}
\delta_{l} (t) = \Bigg\{ 
     \begin{array}{ll}
          1, & \mbox{if the non-profit engages in activity $l$ at time $t$}, \\
          0, & \mbox{otherwise}.     
     \end{array}
\label{eqn:deltajl}
\end{align}

We assume that engaging in activity $l$ increases the repose exit rate of each repeat volunteer of class $j$ by $\hat{r}_{jl}$ for $j \in \mathcal{R}_l$. Similarly, it increases the arrival rate of one-time volunteers of class $j$ by $\hat{\lambda}_{jl}$ for $j \in \mathcal{S}_l$. The cost rate associated with activity $l$ per unit of time is $F_l$. 
Given the non-profit's engagement decisions $\delta = (\delta_{l})_{l=1}^L$, we model the instantaneous repose exit rate of a repeat volunteer of class $j$, denoted by $R_j (\delta)$, as follows:
\begin{align*}
R_j (\delta) = \Bigg( r_j + \sum_{l \in \mathcal{L}_j} \hat{r}_{jl} \delta_l \Bigg), \,\, j = 1, \dots, J.
\end{align*}
Similarly, given the engagement decisions, the arrival rate of one-time volunteers of class $j$ is given as follows:
\begin{align*}
\Lambda_j (\delta) = \Bigg( \lambda_j + \sum_{l \in \mathcal{L}_j} \hat{\lambda}_{jl} \delta_l \Bigg), \,\, j = J+1, \dots, J+\tilde{J}.
\end{align*}

The volunteers on the sign-up list may cancel. We model this by endowing each class $j$ volunteer on the sign-up list with an exponential random variable with rate $\gamma_j$ which corresponds to their time to cancel, or abandon the sign-up list. We assume that volunteer cancellations are observable to the non-profit, because it follows up with them as their commitment date draws near.
 
When the sign-up list is empty, the non-profit loses potential value that could have been generated. Therefore, the non-profit seeks to keep the sign-up list at a desired length. In particular, if it deems the list to be shorter than ideal, the non-profit can engage in the aforementioned activities to enhance volunteer engagement thereby increasing the repose exit rates or arrival rates of various classes of volunteers. The non-profit also manages the relative magnitudes of the sign-up lists (and hence, also manages the delay experienced by) different volunteer classes by making dynamic scheduling decisions. In what follows, we assume that the non-profit follows a scheduling policy (e.g., First-Come-First-Served (FCFS)) that keeps the sign-up lists for different volunteer classes in fixed proportions over time. Namely, letting $Q_{j}(t)$ denote the number of class $j$ volunteers in the system (waiting or in service), the non-profit strives to make sure
\begin{align}
Q_{j}(t) \approx \frac{x_j}{m_j}\sum_{i=1}^{J+\tilde{J}} m_i Q_i(t), \,\,\, j=1, \dots, J+\tilde{J} \mbox{ and } 
   t \ge 0, \label{eqn:Qj}
\end{align}
where $x_j > 0$ for $j=1, \dots, J+\tilde{J}$ are scheduling policy parameters to be chosen by the non-profit such that $\sum_{j=1}^{J+\tilde{J}}x_j = 1$. To elaborate on the physical meaning of them, note that $\sum_{i=1}^{J+\tilde{J}} m_i Q_i(t)$ denotes the (expected) hours of work for server embodied by the volunteers currently on the sign-up list. Thus, Equation~(\ref{eqn:Qj}) strives to ensure fraction $x_j$ of that work is kept in class $j$. Given $x_j$ for $j = 1, \dots, J+\tilde{J}$, it is straightforward to express each queue length as a function of the total number of volunteers in the system (waiting or in service) as follows: 
\begin{align}
Q_j(t) \approx \frac{x_j \mu_j}{\sum_{i=1}^{J+\tilde{J}} x_i \mu_i} \, \sum_{i=1}^{J+\tilde{J}} Q_i(t), \,\, j = 1, \dots, J+\tilde{J}. \label{eqn:Qj2y}
\end{align}
The class of scheduling policies we consider includes the FCFS service discipline under which we expect to have
\begin{align}
Q_j(t) &\approx \frac{k_j r_j }{\sum_{i=1}^{J} k_i r_i + \sum_{i=J+1}^{J+\tilde{J}} \lambda_i} \, \sum_{i=1}^{J+\tilde{J}} Q_i(t),
     j = 1, \dots, J,  \nonumber \\
Q_j(t) &\approx \frac{\lambda_j }{\sum_{i=1}^{J} k_i r_i + \sum_{i=J+1}^{J+\tilde{J}} \lambda_i} \, \sum_{i=1}^{J+\tilde{J}} Q_i(t), 
     j = J+1, \dots, J+\tilde{J}. \label{eqn:Qj3y}
\end{align}
Thus, setting $x_j = k_j r_j / \mu_j$ for $j=1, \dots, J$ and $x_j = \lambda_j / \mu_j$ for $j = J+1, \dots, J+\tilde{J}$ corresponds to the FCFS discipline.

We model the non-profit's dynamic prioritization decisions of the various classes of volunteers on the sign-up list by the nondecreasing processes $T_j(t)$ for $j=1, \dots, J+\tilde{J}$, where $T_j(t)$ denotes the cumulative amount of time the server spends working on class $j$ volunteers during $[0, t]$. Thus, the cumulative number of class $j$ volunteers served by the non-profit is given by $N_j^s (\mu_j T_j(t))$, where $N_j^s (\cdot)$ is a rate-one Poisson process. The cumulative server idleness process, denoted by $I(\cdot)$, is defined as follows:
\begin{align}
I(t) = t - \sum_{j=1}^{J+\tilde{J}} T_j(t), \,\,\, t \ge 0. \label{eqn:It}
\end{align}

We model the cumulative number of class $j$ (repeat) volunteers signing up to volunteer by time $t$, denoted by $A_j(t)$, as follows: 
\begin{align}
A_j(t) = N_j^a \left(
     \int_0^t \Big( R_j ( \delta(s))     
     (k_j - Q_j(s)) ds\right), \,\, j = 1, \dots, J \mbox{ and } t \ge 0. \label{eqn:Aj}
\end{align}
Similarly, the cumulative number of class $j$ (one-time) volunteers signing up to volunteer by time $t$ is modeled as follows:
\begin{align}
A_j(t) = N_j^a \left(
     \int_0^t \Big( \Lambda_j ( \delta(s)) ds\right), \,\, j = J+1, \dots, J+\tilde{J} \mbox{ and } t \ge 0, \label{eqn:Aj.2}
\end{align}
where $N_j^a$ is a rate-one Poisson process for $j = 1, \dots, J+\tilde{J}$. Recall that class $j$ volunteers on the sign-up list may cancel at rate $\gamma_j$. Thus, we model the cumulative number of cancellations by class $j$ volunteers on the sign-up list by time $t$, denoted by $\Gamma_j(t)$, as follows:
\begin{align}
\Gamma_j(t) = N_j^b \left( \int_0^t \gamma_j Q_j(s) ds \right), \label{eqn:Gammaj}
\end{align}
where $N_j^b$ is a rate-one Poisson process; and $N_j^a$ $N_j^b$, and $N_j^s$ for $j = 1, \dots, J+\tilde{J}$ are mutually independent. When volunteers cancel, idleness could result or the non-profit organization may increase engagement activity, incurring higher cost. These costs are captured in the idleness penalty and increased engagement activity cost. In practice, it seems that volunteers chose to cancel for personal reasons rather than the delay, therefore, we do not include a cancellation cost above the lost throughput and increased engagement costs.

Assuming the sign-up list is empty initially, i.e., all volunteers are in repose, we characterize the evolution of the class $j$ queue length, i.e., the number of class $j$ volunteers on the sign-up list, as follows: For $j=1, \dots, J+\tilde{J}$, and $t \ge 0$,
\begin{align}
Q_j(t) = A_j(t) - N_j^s(\mu_j T_j(t)) - \Gamma_j(t). \label{eqn:Qj2x}
\end{align}

To facilitate the analysis to follow, define the non-profit's cumulative engagement controls as follows: For $l = 1, \dots, L$,
\begin{align}
\Delta_{l}(t) = \int_0^t \delta_{l}(s) ds, \,\,\, t\ge0. \label{eqn:Deltajl}
\end{align}
Letting $\Delta = (\Delta_{l})$ and $T = (T_j)$, we denote the non-profit's policy by $(T, \Delta)$ and call it admissible if it is non-anticipating and the following hold: For $j=1, \dots, J+\tilde{J}$ and $l = 1, \dots, L$,
\begin{align}
&T, \Delta, I \mbox{ are nondecreasing with } T(0) = \Delta(0) = I(0)=0, \label{eqn:TDI} \\
&Q_j(t) \in [0, k_j], \,\, j=1, \dots, J,  \label{eqn:Qj3x} \\
&Q_j(t) \ge 0, \,\, j = J+1, \dots, J+\tilde{J}, \label{eqn:Qj3x.1} \\
& \int_0^t \mathbbm{1}_{\{ \sum_{j=1}^{J+\tilde{J}} Q_j(s) > 0 \} } dI(s) = 0, \label{eqn:idlenessincrease}
\end{align}
where the last constraint indicates the server idleness increases only if the sign-up list is empty. That is, the service policy is work conserving.

Given an admissible policy $(T, \Delta)$, the associated cumulative cost incurred up to time $t$ is given as follows: 
\begin{align}
C(t) = 
     \int_0^t \sum_{l=1}^{L} F_l \delta_l(s) ds    
     +pI(t), \label{eqn:C}
\end{align}
where $p>0$ is the cost rate of idleness. The non-profit's problem is to choose an admissible policy $(T, \Delta)$ so as to
\begin{align}
\mbox{Minimize } \overline{\lim}_{t \rightarrow \infty} \frac{1}{t} \mathbb{E}[ C(t) ] \,\,\mbox{ subject to } (\ref{eqn:Qj}).
\end{align}

In words, the non-profit makes scheduling and engagement effort decisions dynamically to minimize long-run average costs, which include the costs of enhancing volunteer engagement and the cost of idling, i.e., having an empty sign-up list. This problem appears intractable. Therefore, we study this system in an asymptotic regime where the number of (repeat) volunteers and the arrival rates of one-time volunteers get large, and derive a tractable approximation.

\section{An Approximating Brownian Control Problem} \label{sec:approx}

As mentioned earlier, the problem formulated in Section~\ref{sec:model} is not tractable in its exact form. Therefore, by considering a sequence of closely related systems indexed by $n$, we formulate an approximate, yet far more tractable approximation. A superscript $n$ will be attached to the quantities of interest in the $n^{th}$ system. We assume that $k_j^n = \hat{k}_j n$ for $j = 1, \dots, J$.
Thus, $k^n = \sum_{j=1}^{J} \hat{k}_j n$, where $\hat{k}_j$ is a positive constant for $j = 1, \dots, J$. We also assume that for $l=1, \dots, L$ 
\begin{align}
F_l^n = \sqrt{n} F_l, \label{eqn:costF}
\end{align} 
and for $j \in \mathcal{R}_l$ that
\begin{align}
\hat{r}_{jl}^n = \frac{\hat{r}_{jl}}{\sqrt{n}},  \label{eqn:etajln-repeat}
\end{align}
and for $j \in \mathcal{S}_l$ that
\begin{align}
\hat{\lambda}_{jl}^n = \sqrt{n} \, \hat{\lambda}_{jl},  \label{eqn:etajln-onetime}
\end{align}
where $F_l$, $\hat{r}_{jl}$, and $\hat{\lambda}_{jl}$ are positive constants. Then, letting $\alpha_j >0$ denote a constant for $j = 1, \dots, J$, we make the following assumption, which lends itself to the Brownian approximation.

\paragraph{Heavy Traffic Assumption.} For $n \ge 1$, we have that
\begin{align}
   r_j^n &= r_j - \frac{\alpha_j}{\sqrt{n}},  \,\,\, j = 1, \dots, J, \label{eqn:rjn} \\
   \lambda_j^n &= n \lambda_j - \sqrt{n} \alpha_j, \,\,\, j = J+1, \dots, J+ \tilde{J}, \label{eqn:lambdajn} \\
   \mu_j^n &= n \mu_j, \,\,\, j =1, \dots, J+ \tilde{J}. \label{eqn:mujn}
\end{align}
Note that $\alpha_j$ is a measure of excess capacity allocated to class $j$.
Additionally, the following balanced load condition holds:
\begin{align}
\sum_{j=1}^J \frac{r_j \hat{k}_j}{\mu_j} + \sum_{j=J+1}^{J+ \tilde{J}} \frac{\lambda_j}{\mu_j} =1. \label{eqn:balancedload}
\end{align}

Note that letting  
\begin{align}
\theta_0 = -\left( \sum_{j=1}^J \frac{\hat{k}_j \alpha_j}{\mu_j} + \sum_{j=J+1}^{J+\tilde{J}} \frac{ \alpha_j}{\mu_j}  \right) 
   <0 \,\,\, \mbox{and} \,\,\, 
   \rho^n = \sum_{j=1}^J \frac{r_j^n \Halfspace k_j^n}{\mu_j^n}
   +\sum_{j=J+1}^{J+\tilde{J}}  \frac{\lambda_j^n}{\mu_j^n}, \,\, n \ge1, \label{eqn:theta0}
\end{align}
we conclude from the heavy traffic assumption that $(\rho^n -1) \sqrt{n} \rightarrow \theta_0 < 0$ as $n \rightarrow \infty$. As the reader will see below, $\theta_0$ corresponds to the drift parameter of the underlying Brownian motion in our workload formulation (when the non-profit does not engage in any activities), and $\rho^n$ is the traffic intensity of the $n^{th}$ system.

In other words, we study a large balanced-flow system, focusing primarily on the non-profit's sign-up list. 
Under the heavy traffic assumption, the queue lengths at the multiclass, single-server queue, i.e., $Q_j^n(\cdot)$ for $j=1, \dots, J+\tilde{J}$, are expected to be of order $\sqrt{n}$, whereas the number of volunteers in repose is expected to be of order $n$; see for example \citet{kogan-lipster-1993}. Thus, we scale them accordingly. Note that the presence of infinite-server queues leads to the scaling below (\ref{eqn:Zjn}), which differs from the usual scaling for closed networks of singe-server queues in heavy traffic. In that setting, queue lengths are of the same order of magnitude as the total number of jobs in the system; see for example, \citet{harrison-etal-1990}, \citet{harrison-wein-1990}, \citet{chevalier-wein-1993}, and \citet{ata-etal-2020}. Here, we define the scaled queue-length process as follows: 
\begin{align}
Z_j^n(t) = \frac{Q_j^n(t)}{\sqrt{n}} \,\,\, \mbox{for} \,\, j = 1, ..., J \mbox{ and } t \ge 0. \label{eqn:Zjn}
\end{align}
For $j=1, ..., J$, note that $r_j \hat{k}_j/ \mu_j$ denotes the long-run average fraction of time the non-profit should spend serving class $j$ (repeat) volunteers. Similarly, $\lambda_j/\mu_j$ denotes the average fraction of time the non-profit should spend serving class $j$ (one-time) volunteers for $j= J+1, \dots, J+\tilde{J}$.
We then define the centered and the scaled allocation processes $Y_j^n$ and the scaled idleness process $U^n$ as follows: 
\begin{align}
Y_j^n(t) &= \sqrt{n} \left( \frac{r_j \hat{k}_j}{\mu_j} t - T_j(t) \right), \,\,\, j = 1, \dots, J, \label{eqn:Yjn} \\
Y_j^n(t) &= \sqrt{n} \left( \frac{\lambda_j}{\mu_j} t - T_j(t) \right), \,\,\, j = J+1, \dots, J+\tilde{J},\label{eqn:Yjn.2} \\
U^n(t) &=\sqrt{n}\, I(t), \,\, t \ge 0. \label{eqn:Unt}
\end{align}

Recall from the heavy traffic assumption that $\mu_j^n = n \mu_j$ so that one unit of server idleness corresponds to $O(n)$ services that could have been completed. Therefore, we assume $p^n = np$ and define the scaled cost process, denoted by $\xi^n$, as follows:
\begin{align}
\xi^n(t) = \frac{C^n(t)}{\sqrt{n}}, \,\,\, t \ge 0. \label{eqn:xin}
\end{align}
Using the strong approximations \citep{csorgo-horvath-1993} and following \citet{Harrison_88} and \citet{Harrison2000}, the Online Supplement Section~A formally derives the approximating Brownian control problem as $n$ gets large. In the approximating Brownian control problem, the processes $\xi^n, Z^n, Y^n$, and $U^n$ are replaced with their formal limits $\xi, Z, Y$, and $U$, which jointly satisfy the following:

\begin{align}
&\xi(t) = \sum_{l=1}^{L} F_l \Delta_l(t)  
     + p U(t), \,\, t \ge 0, \label{eqn:BCP-xi} \\
&Z_j(t) = X_j(t) + \hat{k}_j  \sum_{l \in \mathcal{L}_j} \hat{r}_{jl} \Delta_{l}(t)  - \hat{k}_j \alpha_j t  
     - \int_0^t (r_j  + \gamma_j) Z_j(s) ds + \mu_j Y_j(t), \,\, j = 1, \dots, J, \,\, t \ge 0, \label{eqn:BCP-Zj1} \\
&Z_j(t) = X_j(t) + \sum_{l \in \mathcal{L}_j} \hat{\lambda}_{jl} \Delta_{l}(t)  \!-\! \alpha_j t  
     - \int_0^t \gamma_j Z_j(s) ds + \mu_j Y_j(t), \,\, j = J\!+\!1, \dots, J+\tilde{J}, \,\, t \ge 0, \label{eqn:BCP-Zj1.2} \\
&Z_j(t) \ge 0, \,\, j=1, \dots , J+\tilde{J}, \,\, t \ge 0, \label{eqn:BCP-Zj2} \\
& Z_j(t) = \frac{x_j}{m_j} \sum_{l=1}^{J} m_l Z_l(t), \,\, j=1, \dots, J+\tilde{J}, \,\, t\ge 0, \label{eqn:BCP-Zj3} \\
& \delta_{l}(t) \in [0,1] \,\, \mbox{ and } \,\, \Delta_{l}(t) = \int_{0}^{t} \delta_{l}(s) ds
     \,\, \mbox{ for } \,\,  l=1, \dots, L, \,\,\, t \ge 0, \label{eqn:BCP-Zj4} \\
&U(t) = \sum_{j=1}^{J+\tilde{J}} Y_j(t), \,\, t \ge0, \label{eqn:BCP-L1} \\
&U \mbox{ is nondecreasing with } U(0) = 0, \label{eqn:BCP-L2} \\
& \int_0^{\infty} 1_{ \left\{ \sum_{j=1}^{J+\tilde{J}} Z_j(t) > 0 \right\} } dU(t) =0, \label{eqn:BCP-workconserving}
\end{align}
where the last constraint expresses the work-conserving property of the non-profit's service policy. That is, the idleness does not increase unless the sign-up list is empty. Moreover, $X_j(t)$ is a $(0, \sigma_j^2)$ Brownian motion where $\sigma_j^2 = 2 r_j \hat{k}_j$ for $j=1, \dots, J$ and $\sigma_j^2 = 2 \lambda_j$  for $j=J+1, \dots, J+\tilde{J}$.

In the approximating Brownian control problem (BCP), the non-profit strives to
\begin{align}
\mbox{Minimize } \overline{\lim}_{t \rightarrow \infty} \frac{1}{t} \mathbb{E} [ \xi(t) ] \,\,\, \mbox{subject to} \,\,\,
     (\ref{eqn:BCP-xi})-(\ref{eqn:BCP-workconserving}). \label{eqn:BCPobj}
\end{align}

\noindent \textbf{Remark.} Equation~(\ref{eqn:BCP-Zj4}) allows $\delta(\cdot)$ to take fractional values. This is done for technical convenience. However, as the reader will see below, the optimal policy we ultimately propose sets $\delta(t) \in \{ 0, 1 \}$ for all $j, t$, cf. Equation~(\ref{eqn:thetas}).

In what follows, we simplify the Brownian control problem and arrive at the so-called equivalent workload formulation. To this end, define the workload process, denoted by $\{ W(t), t \ge0 \}$ as follows:
\begin{align}
W(t) = \sum_{j=1}^{J+\tilde{J}} m_j Z_j(t), \,\,\, t\ge0, \label{eqn:W}
\end{align}
which can be interpreted as the expected total hours of work for the server at time $t \ge0$.
Also, note that Equation~(\ref{eqn:BCP-Zj3}) corresponds to the following equation:
\begin{align}
Z_j(t) = x_j \mu_j W(t), \,\, j =1, \dots, {J+\tilde{J}}, \,\, t\ge0. \label{eqn:Zjx}
\end{align}

To facilitate the analysis, we define
\begin{align}
     \kappa &= \sum_{j=1}^J (r_j + \gamma_j ) x_j + \sum_{j=J+1}^{J+\tilde{J}} \gamma_j x_j, \label{eqn:handk} \\
     \eta_l &= \sum_{j \in \mathcal{R}_l} \frac{\hat{k}_j}{\mu_j} \hat{r}_{jl} + \sum_{j \in \mathcal{S}_l} \frac{\hat{\lambda}_{jl}}{\mu_j},
        \,\, l = 1, \dots, L. \label{eqn:preserve-eta}
\end{align}
Then, substituting (\ref{eqn:BCP-Zj1})-(\ref{eqn:BCP-Zj1.2}) and (\ref{eqn:Zjx}) into (\ref{eqn:W}), and using (\ref{eqn:BCP-Zj4}), (\ref{eqn:BCP-L1}), and (\ref{eqn:handk}), we arrive at the following equivalent workload formulation: Choose the non-anticipating control $\delta(\cdot) = (\delta_{l}(\cdot))$ to
\begin{align}
&\mbox{Minimize } \overline{\lim}_{t \rightarrow \infty} \frac{1}{t} \mathbb{E} \left[
     \int_0^t \sum_{l=1}^L F_l \delta_l(s)  ds + p U(t) \right]
     \label{eqn:workload-obj} \\
&\,\,\, \mbox{subject to} \nonumber \\
&\,\,\, W(t) = X(t) + \theta_0 t + \sum_{l=1}^L 
     \int_0^t \eta_l \delta_{l}(s) ds - \int_0^t \kappa W(s) ds + U(t), \,\, t \ge0, \label{eqn:workload-W1} \\
&\,\,\, W(t) \ge 0, \,\, t \ge0, \label{eqn:workload-W2} \\
&\,\,\, \delta_{l}(t) \in [0,1] \,\, \mbox{ for } \,\, l=1, \dots, L \,\, \mbox{ and }
     \,\, t \ge 0, \label{eqn:workload-W3} \\
&\,\,\, U \mbox{ is nondecreasing with } U(0)=0, \label{eqn:workload-L1} \\
&\,\,\, \int_0^t 1_{\{W(t) > 0 \}} dU(t)=0, \label{eqn:workload-L2} 
\end{align}
where $X(t)$ is a $(0, \sigma_w^2)$ Brownian motion with $\sigma_w^2 = \sum_{j=1}^{J} 2 r_j \hat{k}_j/\mu_j^2 + \sum_{j=J+1}^{J+\tilde{J}} 2 \lambda_j / \mu_j^2$.

The following proposition establishes the equivalence of the Brownian control problem and the equivalent workload problem. It is proved in the Online Supplement Section~B.
\begin{proposition} \label{prop:brownianworkload}
The formulation (\ref{eqn:workload-obj})-(\ref{eqn:workload-L2}) is equivalent to the BCP~(\ref{eqn:BCPobj}) in the following sense: A feasible policy $\delta(\cdot)$ of the BCP~(\ref{eqn:BCPobj}) is also feasible for the workload formulation~(\ref{eqn:workload-obj})-(\ref{eqn:workload-L2}) with the same cost. Similarly, a feasible policy $\delta(\cdot)$ of the formulation~(\ref{eqn:workload-obj})-(\ref{eqn:workload-L2}) is also feasible for the BCP~(\ref{eqn:BCPobj}) and it has the same cost in both formulations. 
\end{proposition}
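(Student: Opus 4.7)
The plan is to prove the equivalence by direct verification in both directions, viewing the workload formulation as a one-dimensional projection of the multiclass BCP onto the ``expected hours of work'' coordinate $W(t) = \sum_{j=1}^{J+\tilde{J}} m_j Z_j(t)$.

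For the forward direction, I would start from a feasible BCP policy $\delta(\cdot)$ together with its associated processes $(Z, Y, U, \Delta)$, and then multiply the dynamics for each $Z_j$ by $m_j = 1/\mu_j$ and sum over $j$. Four aggregation identities drive the calculation: (i) the deterministic drift contributions $-\sum_{j=1}^J m_j \hat{k}_j \alpha_j - \sum_{j=J+1}^{J+\tilde{J}} m_j \alpha_j$ collapse to $\theta_0$ by its definition; (ii) swapping summation order in the engagement terms and using the definition of $\eta_l$ yields $\sum_{l=1}^L \eta_l \Delta_l(t)$; (iii) the state space collapse relation $Z_j = x_j \mu_j W$ (equivalently $m_j Z_j = x_j W$) together with $\sum_j x_j = 1$ rewrites the feedback integral as $\int_0^t \kappa W(s)\, ds$; and (iv) the martingale term $\sum_j m_j X_j$ is a $(0,\sigma_w^2)$ Brownian motion with $\sigma_w^2 = \sum_j m_j^2 \sigma_j^2$, while $\sum_j \mu_j m_j Y_j = \sum_j Y_j = U$. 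This yields the workload SDE. The remaining constraints ($W \ge 0$, $U$ nondecreasing, $U(0) = 0$, work conservation) transfer immediately from their BCP counterparts, using the equivalence $\sum_j Z_j > 0 \iff W > 0$ which holds because $x_j, \mu_j > 0$.

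For the reverse direction, given a feasible workload policy $(W, U, \delta)$, I would set $Z_j(t) = x_j \mu_j W(t)$, which automatically enforces state space collapse and nonnegativity of $Z_j$. The processes $Y_j$ are then defined algebraically by solving the BCP dynamics for $Z_j$ as affine equations in $Y_j$. The nontrivial check is that $\sum_j Y_j = U$: summing the $Y_j$ formulas weighted by $m_j \mu_j = 1$ produces an expression whose $X$, $\theta_0 t$, $\sum_l \eta_l \Delta_l$, and $\int \kappa W$ components are exactly those appearing in the workload equation, so they cancel and leave precisely $U(t)$. Work conservation for the BCP transfers back to its workload counterpart via the same equivalence $\sum_j Z_j > 0 \iff W > 0$.

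Finally, both formulations share the identical cost process $\int_0^t \sum_l F_l \delta_l(s)\, ds + p\, U(t)$ and are driven by the same control $\delta(\cdot)$, so the long-run average costs coincide. The only real obstacle is the bookkeeping in assembling $\theta_0$, $\eta_l$, $\kappa$, and $\sigma_w^2$ while keeping the repeat-class and one-time-class sums correctly separated; once those weighted-sum identities are verified, the proof reduces to direct substitution and routine consistency checks using $\sum_j x_j = 1$ and $m_j \mu_j = 1$.
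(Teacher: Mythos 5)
Your proposal is correct and follows essentially the same route as the paper's proof: the forward direction multiplies the $Z_j$ dynamics by $m_j$, sums, and uses the summation-interchange identities defining $\theta_0$, $\eta_l$, $\kappa$, and $\sigma_w^2$, while the reverse direction sets $Z_j = x_j\mu_j W$ and defines $Y_j$ algebraically from the BCP dynamics, exactly as in the paper. The only cosmetic difference is that your step (iii) cites $\sum_j x_j = 1$, which is not actually needed there -- the definition of $\kappa$ already absorbs the $x_j$ weights.
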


To further simplify the analysis, we define
\begin{align}
c_l = \frac{F_l }{\eta_l},
     \,\, l = 1, \dots, L. \label{eqn:cjmlm}
\end{align}
By relabeling the engagement activities if needed, we assume that $c_1 < c_2 < \dots < c_L < p$.

To facilitate the analysis to follow, let
\begin{align}
\theta_l = \theta_0 + \sum_{i=1}^l \eta_i, \,\, l = 1, ..., L. \label{eqn:thetam}
\end{align}
Then let $A = [\theta_0, \theta_L]$ and define the piecewise linear, convex increasing cost function $c(\cdot): A \rightarrow \mathbb{R}_+$ as follows: 
\begin{align}
c(x) = \left\{ \begin{array}{ll}
		c_1 (x - \theta_{0}), & \mbox{if }\,\,\theta_{0} \le x \le \theta_1 \\
		\sum_{i=1}^{l-1} c_i \eta_i + c_l (x - \theta_{l-1}), & \mbox{if }\,\, \theta_{l-1} < x \le \theta_l, \,\, l = 2, ..., L.
     \end{array} \right.  \label{eqn:cx}
\end{align}

Figure~\ref{fig:cx} displays an illustrative $c(\cdot)$ function with $L=4$.

\begin{figure}[htbp]
\centering
\includegraphics[scale=0.65]{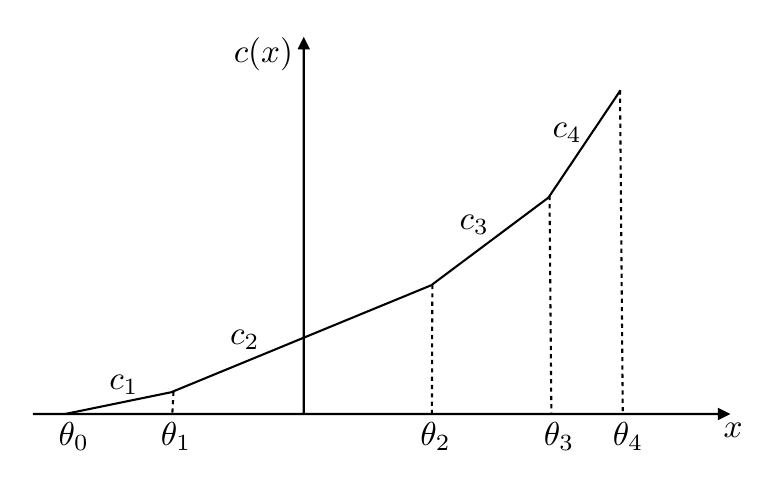}
\caption{An illustrative $c(\cdot)$ function with $L=4$.}
\label{fig:cx}
\end{figure}

Then we consider the following drift-rate control problem: Choose $\theta(\cdot) : [0, \infty) \rightarrow A$ so as to
\begin{align}
&\mbox{Minimize } \overline{\lim}_{t \rightarrow \infty} \frac{1}{t} \mathbb{E} \left[  
   \int_0^t c(\theta(s)) ds + p U(t) \right] \label{eqn:driftrate-obj} \\
& \,\,\, \mbox{subject to} \nonumber \\
& \,\,\, W(t) = X(t) + \int_0^t \theta(s) ds - \int_0^t \kappa W(s) ds + U(t), \,\, t \ge 0, \label{eqn:driftrate-W1} \\
& \,\,\, W(t) \ge 0, \,\, t \ge0, \label{eqn:driftrate-W2} \\
& \,\,\, U(t) \mbox{ is nondecreasing with } U(0) = 0,  \label{eqn:driftrate-L1} \\
& \,\,\, \int_0^t 1_{ \{ W(s) > 0 \}} dU(s) = 0. \label{eqn:driftrate-L2}
\end{align}
Proving the existence and uniqueness of the workload process rigorously involves studying an underlying Skorohod map. Such analysis is undertaken in the literature for similar processes, see for example \citet{reed-etal-2013}. 

The next proposition establishes the equivalence of the drift rate control problem~(\ref{eqn:driftrate-obj})-(\ref{eqn:driftrate-L2}) and the workload formulation~(\ref{eqn:workload-obj})-(\ref{eqn:workload-L2}); see the Online Supplement Section~B for its proof.

\begin{proposition} \label{prop:driftrate} 
The drift-rate control problem~(\ref{eqn:driftrate-obj})-(\ref{eqn:driftrate-L2}) and the workload formulation~(\ref{eqn:workload-obj})-(\ref{eqn:workload-L2}) are equivalent in the following sense: Given a feasible policy $\delta(\cdot)$ for the workload formulation, we set 
$\theta(t) = \theta_0 + \sum_{l=1}^{L} \sum_{j \in \mathcal{R}_l} (\hat{k}_j  \hat{r}_{jl} / \mu_{j}) \delta_{l} (t) + \sum_{l=1}^{L} \sum_{j \in \mathcal{S}_l}  (\hat{\lambda}_{jl} / \mu_{j}) \delta_{l} (t)$ 
for $t \ge 0$. This constitutes a feasible policy for the drift-rate control problem and its cost is less than or equal to that of $\delta(\cdot)$ in the workload formulation. Similarly, given a feasible policy $\theta(\cdot)$ for the drift-rate control problem, we define $\delta(\cdot) = (\delta_{l}(\cdot))$ as follows. If $\theta(t) = \theta_0$, then we set $\delta_{l}(t) = 0$ for all $l$. Otherwise, we have $\theta(t) \in (\theta_{m-1}, \theta_m]$ for $m \in \{ 1, \dots, M\}$, and we set 
\begin{align}
\delta_{l}(t) = \left\{
     \begin{array}{ll}
          1, & \mbox{if } l \le m-1, \\
          \frac{\theta(s) - \theta_{m-1}}{\eta_m}, & \mbox{if } l = m, \\
          0, & \mbox{if } l \ge m+1,
     \end{array} \right. \label{eqn:driftrate}
\end{align}
for $l=1, \dots, L$. This policy is feasible for the workload problem and has the same cost as policy $\theta(\cdot)$ in the drift-rate control problem.
\end{proposition}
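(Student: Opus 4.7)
The plan is to exploit the fact that the workload equations~(\ref{eqn:workload-W1}) and~(\ref{eqn:driftrate-W1}) depend on the controls only through the aggregate drift: $\theta_0 + \sum_l \eta_l \delta_l(t)$ on one side, $\theta(t)$ on the other. Whenever these two aggregate drift paths coincide, the pair $(W,U)$ is pathwise identical (by uniqueness of the reflected solution referenced just after~(\ref{eqn:driftrate-L2})), so the equivalence reduces entirely to a pointwise comparison of running engagement costs. The cost function $c(\cdot)$ defined in~(\ref{eqn:cx}) was constructed precisely so that $c(\theta)$ is the minimum engagement cost required to sustain aggregate drift $\theta$, achieved by activating the engagement activities in order of increasing unit cost $c_l = F_l/\eta_l$.

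For the forward direction, given a feasible $\delta(\cdot)$ for the workload formulation, I set $\theta(t) = \theta_0 + \sum_{l=1}^{L} \eta_l \delta_l(t)$. Since $\delta_l(t) \in [0,1]$, we have $\theta(t) \in [\theta_0, \theta_L] = A$, so $\theta(\cdot)$ is feasible for the drift-rate problem, and the state equations then imply that $W$ and $U$ coincide in both formulations. For the cost, the pointwise inequality $c(\theta(t)) \le \sum_l F_l \delta_l(t)$ follows from a standard LP-exchange argument: if some $\delta_{l_1}(t) < 1$ and $\delta_{l_2}(t) > 0$ with $l_1 < l_2$, then shifting a small mass from activity $l_2$ to activity $l_1$ preserves the aggregate drift but strictly lowers the cost because $c_{l_1} < c_{l_2}$; the unique optimum is the greedy nested allocation, whose cost is exactly $c(\theta(t))$ by construction. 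Integrating and dividing by $t$ gives the required long-run average cost inequality.

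For the backward direction, given a feasible $\theta(\cdot)$, I define $\delta(\cdot)$ via the nested rule in~(\ref{eqn:driftrate}). When $\theta(t) \in (\theta_{m-1}, \theta_m]$, the prescription $\delta_m(t) = (\theta(t) - \theta_{m-1})/\eta_m$ lies in $[0,1]$ because $\theta_m - \theta_{m-1} = \eta_m$, and a telescoping calculation gives $\theta_0 + \sum_l \eta_l \delta_l(t) = \theta_{m-1} + (\theta(t) - \theta_{m-1}) = \theta(t)$, so the workload dynamics agree. Plugging the prescribed $\delta_l(t)$ values into the running engagement cost yields $\sum_l F_l \delta_l(t) = \sum_{l < m} F_l + F_m(\theta(t) - \theta_{m-1})/\eta_m = \sum_{l < m} c_l \eta_l + c_m(\theta(t) - \theta_{m-1}) = c(\theta(t))$, the last equality being definition~(\ref{eqn:cx}). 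No single step is difficult; the main piece of substantive content is the LP-exchange argument in the forward direction, which relies on the relabeling $c_1 < c_2 < \dots < c_L$ adopted after~(\ref{eqn:cjmlm}). Measurability and nonanticipation of the constructed controls follow immediately because the maps involved are continuous and piecewise linear in the original control.
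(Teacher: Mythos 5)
Your proposal is correct and follows essentially the same route as the paper's proof: both directions rest on the observation that the workload dynamics depend on the controls only through the aggregate drift $\theta_0+\sum_l \eta_l\delta_l(t)$, so the comparison reduces to the pointwise cost inequality $c(\theta(t))\le\sum_l F_l\delta_l(t)$ (with equality for the nested allocation~(\ref{eqn:driftrate})). The only difference is presentational: where the paper asserts the inequality directly from~(\ref{eqn:thetam}) and~(\ref{eqn:cx}) using $c_1<\cdots<c_L$, you justify it with an explicit exchange argument, which is a fine elaboration of the same step.
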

In light of Proposition~\ref{prop:driftrate}, we will refer to both formulations as the workload problem below.

\section{Solution to the Equivalent Workload Formulation} \label{sec:solutionworkload}

This section solves the equivalent workload problem. In doing so, attention is restricted to the stationary Markov policies to minimize technical complexity. As a preliminary to our analysis, we next consider the Bellman equation for the formulation~(\ref{eqn:driftrate-obj})-(\ref{eqn:driftrate-L2}): Find a twice-continuously differentiable function $f$ and the constant $\beta>0$ that jointly satisfy the following differential equation:
\begin{align}
& \beta = \min_{x \in A} \left\{ \frac{1}{2} \sigma^2 f''(w) + (x - \kappa w) f'(w) + c(x)  \right\}, \label{eqn:f1} \\
& \mbox{such that } f'(0) = -p \mbox{ and } f' \mbox{ is increasing with } \lim_{w \rightarrow \infty} f'(w) = 0. \label{eqn:f2}
\end{align}
In dynamic programming, the unknown function $f$ is often interpreted as the relative value function and $\beta$ is a guess at the optimal average cost.  Presumably our analysis can be extended to include more general history-dependent and non-stationary policies, but that avenue will not be explored here, see Section 3.3.4 of \citet{ata-2003} for a sketch of such an extension for a related setting. 

Next, for technical convenience, we simplify the Bellman equation~(\ref{eqn:f1})-(\ref{eqn:f2}) in three steps. First, we define \vspace{-10pt}
\begin{align}
g(w) = f(w) + pw, \,\, w \ge 0, \label{eqn:gw}
\end{align} 
and express the Bellman equation in terms of this function by substituting~(\ref{eqn:gw}) into (\ref{eqn:f1})-(\ref{eqn:f2}): Find a twice-continuously differentiable, convex increasing function $g$ and the constant $\beta > 0$ that jointly satisfy the following:
\begin{align}
& \beta = \min_{x \in A} \left\{
   \frac{1}{2} \sigma^2 g''(w) + (x - \kappa w ) [ g'(w) -p] + c(x) \right\} \label{eqn:g1} \\
& \mbox{such that } g'(0)=0, \,\, g' \mbox{ is increasing with } \lim_{w \rightarrow \infty}
   g'(w) = p. \label{eqn:g2} 
\end{align}

Second, we note that~(\ref{eqn:g1})-(\ref{eqn:g2}) is really a first-order differential equation because it does not include the unknown function $g$ itself. Thus, we let $v(w) = g'(w)$ and rewrite the Bellman equation as follows: Find a nondecreasing, continuously differentiable function $v$ and a constant $\beta >0$ that jointly satisfy the following:
\begin{align}
& \beta = \min_{x \in A} \left\{
   \frac{1}{2} \sigma^2 v'(w) + (x - \kappa w ) [ v(w) -p] + c(x) \right\} \label{eqn:v1} \\
& \mbox{such that } v(0)=0, \,\, v \mbox{ is increasing with } \lim_{w \rightarrow \infty}
   v(w) =  p. \label{eqn:v2} 
\end{align}

Lastly, in order to further streamline the analysis to follow, we define the convex conjugate of the cost function $c(\cdot)$ as follows:
\begin{align}
\phi(y) = \sup_{x \in A} \{ yx - c(x) \}, \,\, y \in \mathbb{R}. \label{eqn:phi}
\end{align}
Although there may be multiple maximizers, there exists a minimal one (see, for example, \citealt{ata-harrison-shepp-2005}) which we denote by $\psi(y)$, where
\begin{align}
\psi(y) = \inf \arg \max_{x \in A} \{ yx - c(x) \}, \,\, y \in \mathbb{R}. \label{eqn:psi}
\end{align}
The function $\psi$ efficiently captures all relevant aspects of the cost function $c(\cdot)$. It is straightforward to show that $\psi$ is a left-continuous step function, whereas $\phi$ is a piecewise linear, convex function (see the Online Supplement Section~E for further details). 

Substituting~(\ref{eqn:phi}) in (\ref{eqn:v1}) and rearranging the terms gives rise to the final form of the Bellman equation: Find a continuously differentiable function $v$ and a constant $\beta > 0$ that jointly satisfy the following:
\begin{align}
& \beta = \frac{1}{2} \sigma^2 v'(x) + x \kappa(p - v(x)) - \phi(p - v(x)), \,\, x \ge 0, \label{eqn:bellfinal1} \\
& \mbox{such that } v(0) = 0 \mbox{ and } v \mbox{ is increasing with}
   \lim_{x \rightarrow \infty} v(x) = p. \label{eqn:bellfinal2}
\end{align}

The following result establishes the existence of the solution to the Bellman equation. Its proof involves several steps, involving various technical subtleties. It is provided in the Online Supplement Section~F.
\begin{theorem} \label{thm:bellman}
The Bellman equation~(\ref{eqn:bellfinal1})-(\ref{eqn:bellfinal2}) has a solution $(\beta^*, v)$  where $\beta^* >0$ and $v$ is increasing and continuously differentiable.
\end{theorem}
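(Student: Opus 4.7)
The strategy, as outlined in the Introduction, is a shooting argument: treat (\ref{eqn:bellfinal1})--(\ref{eqn:bellfinal2}) as a family of initial-value problems parameterized by the unknown constant $\beta$, solve each IVP from $v(0) = 0$, and then pin down the unique $\beta^*$ for which the resulting trajectory satisfies the boundary condition at infinity. To set this up, I would solve (\ref{eqn:bellfinal1}) for $v'$ and write
\begin{equation*}
v'(x) \;=\; \frac{2}{\sigma^2}\,\bigl[\beta \;+\; \phi(p - v(x)) \;-\; x\kappa\,(p - v(x))\bigr],
\end{equation*}
and for each $\beta > 0$ let $v_\beta$ denote the solution of this ODE with $v_\beta(0) = 0$. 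Because $\phi$ is convex and piecewise linear (hence globally Lipschitz), the right-hand side is continuous in $x$ and Lipschitz in $v$ on bounded sets, so standard ODE theory produces a unique $C^1$ solution on its maximal interval of existence, depending continuously on $\beta$.

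The next step is a comparison argument showing that $\beta \mapsto v_\beta(x)$ is strictly increasing for every fixed $x > 0$: if $\beta_1 < \beta_2$ yet the two solutions agreed at some first crossing point $x^* > 0$, then the ODE would force $v_{\beta_2}'(x^*) > v_{\beta_1}'(x^*)$, contradicting the definition of $x^*$. This monotonicity, together with continuous dependence on $\beta$, makes the extended-real limit $h(\beta) := \lim_{x \to \infty} v_\beta(x)$ a nondecreasing function of $\beta$ on the set where $v_\beta$ stays bounded, which is what allows an intermediate-value argument to succeed.

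The heart of the proof is the two-sided analysis of $h(\beta)$. For $\beta$ small, I would compare $v_\beta$ against an explicit ``no-engagement'' upper bound obtained by replacing $\phi(\cdot)$ with $\phi(0) = 0$ in the ODE; the resulting linear equation is a reflected Ornstein--Uhlenbeck-type relation whose asymptotic value is strictly below $p$, so $h(\beta) < p$. For $\beta$ large, I would show that $v_\beta$ must exceed $p$ in finite $x$: at any point where $v_\beta(x) = p$ the ODE reduces to $v_\beta'(x) = 2\beta/\sigma^2 > 0$, and the large initial slope $v_\beta'(0) = 2(\beta + \phi(p))/\sigma^2$ forces $v_\beta$ up before the restoring term $-x\kappa(p-v)$ can take over. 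Combining monotonicity with these two-sided bounds yields a unique critical value
\begin{equation*}
\beta^* \;:=\; \inf\{\beta > 0 : h(\beta) \ge p\} \;>\; 0,
\end{equation*}
and continuity in $\beta$ forces $\lim_{x \to \infty} v_{\beta^*}(x) = p$ exactly.

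Finally I would verify that $v_{\beta^*}$ is genuinely increasing on $[0,\infty)$. Since $v_{\beta^*}(0) = 0 < p$, the limit at infinity equals $p$, and the extremality of $\beta^*$ precludes $v_{\beta^*}$ from crossing $p$ in finite $x$, one has $v_{\beta^*}(x) < p$ throughout. At any putative critical point $x_0$ where $v_{\beta^*}'(x_0) = 0$, the ODE gives $\phi(p - v_{\beta^*}(x_0)) - x_0\kappa(p - v_{\beta^*}(x_0)) = -\beta^*$, and tracking the trajectory forward using the explicit piecewise-linear structure of $\phi$ from Section~E of the Online Supplement leads to a contradiction with the boundary condition at infinity. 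The main obstacle throughout is managing the kinks of $\phi$: the trajectory may cross between the regimes $p - v \in (\theta_{l-1}, \theta_l]$ multiple times, so the small-$\beta$ and large-$\beta$ asymptotics, as well as the monotonicity of $v_{\beta^*}$, require comparison arguments that are uniform across these regime changes. The detailed bookkeeping is what makes the rigorous proof technically subtle.
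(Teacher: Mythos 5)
Your proposal follows essentially the same route as the paper's proof in Online Supplement Section~F: a one-parameter family of initial value problems $v_\beta$ started at $v_\beta(0)=0$, uniqueness and continuity via the Lipschitz property of $\phi$, strict monotonicity of $v_\beta$ in $\beta$ by a first-crossing comparison, a two-sided analysis showing small $\beta$ fails and large $\beta$ overshoots $p$, and selection of the critical $\beta^*$ (the paper takes $\beta^*=\inf\mathcal{I}$ where $\mathcal{I}$ is the set of $\beta$ for which $v_\beta$ is increasing) at which the boundary condition at infinity holds. The structure, including your final step ruling out interior critical points of $v_{\beta^*}$, matches the paper's dichotomy argument (once $v_\beta'$ vanishes the trajectory decreases to $-\infty$, so it cannot satisfy the limit condition).

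Two of your sketched steps would not survive as written, though the conclusions they aim at are correct and are what the paper proves by other means. First, for small $\beta$ you propose bounding $v_\beta$ from above by replacing $\phi(\cdot)$ with $\phi(0)=0$; but $\phi$ is convex with $\phi(0)=0$ and $\phi(p-v)$ need not be nonpositive on the relevant range (for $p-v>\hat c_1$ one has $\phi(p-v)=\theta_{m-1}(p-v)-c(\theta_{m-1})$, which can be positive), so this substitution does not yield an upper bound on $v_\beta'$. The paper's proof that the ``decreasing'' set $\mathcal{D}$ is nonempty is instead a three-case analysis organized around the minimizer $y^*$ of $\phi$ on $(-\infty,p]$ and the hitting point $\hat x=\inf\{x: p-v_\beta(x)\le y^*\}$. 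Second, the ``intermediate-value argument'' on $h(\beta)=\lim_{x\to\infty}v_\beta(x)$ cannot be a literal IVT: $h$ jumps from $-\infty$ (for $\beta<\beta^*$) to $+\infty$ (for $\beta>\beta^*$), so continuity of $h$ is exactly what fails. The paper closes this by using continuity of $\beta\mapsto v_\beta(x)$ at \emph{fixed finite} $x$ to show $\beta^*\in\mathcal{I}$ and $v_{\beta^*}$ bounded, and then a separate ODE argument (if $\lim v_{\beta^*}=K<p$ then $v_{\beta^*}'(x)\to-\infty$) to force the limit to equal $p$. You would need to supply these two pieces to complete your outline.
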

Now, the function $\psi$ naturally suggests a candidate policy. Thus, our proposed policy for the equivalent workload formulation sets
\begin{align}
\theta^*(z) =\psi(p - v(z)), \,\,\, z \ge 0. \label{eqn:thetascand}
\end{align}
In addition, in order to provide a solution of the Bellman equation (\ref{eqn:f1})-(\ref{eqn:f2}), we define
\begin{align*}
f(z) = \int_0^z v(s) ds - pz, \,\, z \ge 0.
\end{align*}
\begin{corollary} \label{cor:bell}
The pair $(\beta^*, f)$ solve the Bellman equation (\ref{eqn:f1})-(\ref{eqn:f2}).
\end{corollary}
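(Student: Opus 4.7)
The plan is a direct verification: transform the Bellman equation~(\ref{eqn:f1})-(\ref{eqn:f2}) via the substitution $v = g' = f' + p$ and check that the established solution of~(\ref{eqn:bellfinal1})-(\ref{eqn:bellfinal2}) from Theorem~\ref{thm:bellman} pulls back to a solution of~(\ref{eqn:f1})-(\ref{eqn:f2}). First I would note that $f(z) = \int_0^z v(s)\,ds - pz$ gives $f'(z) = v(z) - p$ and $f''(z) = v'(z)$. Since Theorem~\ref{thm:bellman} guarantees that $v$ is continuously differentiable, $f$ is automatically twice continuously differentiable, which handles the smoothness requirement implicit in~(\ref{eqn:f1}).

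Next I would verify the boundary conditions in~(\ref{eqn:f2}). The condition $v(0)=0$ gives $f'(0) = -p$ directly. Monotonicity of $f'$ follows from monotonicity of $v$, and $\lim_{w \to \infty} f'(w) = \lim_{w \to \infty}(v(w) - p) = 0$ by the corresponding limit in~(\ref{eqn:bellfinal2}).

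The remaining step is to show that the differential equation~(\ref{eqn:f1}) holds with constant $\beta^*$. Substituting $f'(w) = v(w) - p$ and $f''(w) = v'(w)$ into the bracketed expression in~(\ref{eqn:f1}) and rearranging gives
\begin{align*}
\frac{1}{2}\sigma^2 v'(w) + (x-\kappa w)(v(w)-p) + c(x)
   = \frac{1}{2}\sigma^2 v'(w) + \kappa w(p - v(w)) - \bigl[ x(p-v(w)) - c(x) \bigr].
\end{align*}
Taking the minimum over $x \in A$ on the left is the same as taking the supremum of $x(p-v(w)) - c(x)$ on the right, which equals $\phi(p-v(w))$ by the definition~(\ref{eqn:phi}). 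Hence
\begin{align*}
\min_{x \in A}\Bigl\{ \tfrac{1}{2}\sigma^2 f''(w) + (x-\kappa w)f'(w) + c(x) \Bigr\}
   = \tfrac{1}{2}\sigma^2 v'(w) + \kappa w(p - v(w)) - \phi(p-v(w)),
\end{align*}
and by~(\ref{eqn:bellfinal1}) this right-hand side equals $\beta^*$ for every $w \ge 0$. The attaining $x$ may be taken to be $\psi(p-v(w))$ from~(\ref{eqn:psi}), which lies in $A$, so the minimum is genuinely achieved.

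I do not foresee a substantive obstacle: the work is essentially a change-of-variable bookkeeping argument, and everything delicate (existence, regularity, and the limit at infinity of $v$) has already been absorbed into Theorem~\ref{thm:bellman}. The one point that warrants a line of care is confirming that the three successive reformulations leading from~(\ref{eqn:f1})-(\ref{eqn:f2}) to~(\ref{eqn:bellfinal1})-(\ref{eqn:bellfinal2}) are reversible, so that a solution of the final form really does furnish a solution of the original form; this is immediate from the explicit substitutions $g = f + pw$ and $v = g'$ used in the derivation preceding Theorem~\ref{thm:bellman}.
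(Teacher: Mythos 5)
Your verification is correct and is exactly the argument the paper leaves implicit: Corollary~\ref{cor:bell} is stated without a separate proof, being the reversal of the chain of substitutions $g=f+pw$, $v=g'$, and the conjugate identity $\min_x\{(x-\kappa w)(v-p)+c(x)\}=\kappa w(p-v)-\phi(p-v)$ from Section~\ref{sec:solutionworkload}, combined with Theorem~\ref{thm:bellman}. Your bookkeeping of the boundary conditions and the attainment of the minimum at $\psi(p-v(w))\in A$ is accurate, so nothing is missing.
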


The next subsection characterizes the proposed policy explicitly.

\subsection{An Explicit Characterization of the Proposed Policy} \label{sub:threshtau}

This subsection characterizes the optimal policy as a nested threshold policy. It also characterizes the corresponding thresholds explicitly. Recall from Theorem \ref{thm:bellman} that the function $v$ is increasing. Therefore, it is invertible. We denote that inverse by $v^{-1}$ and define the thresholds:
\begin{align}
\tau_l = v^{-1}(p - \hat{c}_l), \,\, l = 1, ..., L. \label{eqn:taum}
\end{align}
We let $\hat{c}_{L+1} = p$ and $\tau_{L+1} = 0$ for notational convenience. It follows from the monotonicity of $v$ that $\tau_1 > \tau_2 > ... > \tau_L > \tau_{L+1} = 0$.
Then follows from the definition of $\psi$ (see Equation~(\ref{eqn:psi}), also see the Online Supplement Section~E) that the candidate policy is the nested threshold policy: 
\begin{align}
\theta^*(z) = \left\{
   \begin{array}{ll}
      \theta_L, & \mbox{if } z < \tau_L^*, \\
      \theta_l, & \mbox{if } \tau_{l+1}^* \le z < \tau_l^*, \,\, l=1, ..., L-1, \\
      \theta_0, & \mbox{if } \tau_1^* < z.
   \end{array}
\right. \label{eqn:thetas}
\end{align}

To facilitate the proof of the optimality of the candidate policy, we next introduce the class of admissible policies. Our definition amounts to ensuring that the system is stable under an admissible policy. To this end, we let
\begin{align*}
\bar{\theta} = \max \{ \theta_l : \theta_l <0, l = 0,1, \dots, L \}.
\end{align*}

\paragraph{Definition (Admissible Policies).} A policy $\theta(\cdot)$ is admissible if it is stationary Markov and if there exists a threshold $\bar{z}$ such that $\theta(z) \le \bar{\theta}  < 0$ for $z \ge \bar{z}$.

To repeat, this condition ensures that under an admissible policy the system is stable. This is a natural requirement because an unstable system results in infinite cost, making its policy not worthy of consideration. The next result establishes that the candidate policy given in~(\ref{eqn:thetas}) is indeed optimal; see the Online Supplement Section~C for its proof. 
\begin{theorem} \label{thm:policy}
The candidate policy $\theta^*(\cdot)$ given in~(\ref{eqn:thetas}) is optimal for the workload problem, and its long-run average cost is $\beta^*$.
\end{theorem}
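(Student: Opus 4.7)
The plan is a standard verification argument via Itô's formula applied to the value function $f$ from Corollary~\ref{cor:bell}, yielding a lower bound for every admissible policy and an equality for the candidate policy $\theta^*(\cdot)$.

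For any admissible $\theta(\cdot)$ with associated reflected diffusion $W(\cdot)$, I would apply Itô's formula to $f(W(t))$ using the SDE~(\ref{eqn:driftrate-W1}). Because $dU$ increases only when $W=0$ and $f'(0)=-p$, the reflection term contributes exactly $-pU(t)$. Substituting $x = \theta(s)$, $w = W(s)$ into the pointwise Bellman inequality
\[
\tfrac{1}{2}\sigma^2 f''(w) + (x-\kappa w)f'(w) + c(x) \ge \beta^*, \quad x \in A,\; w \ge 0,
\]
and rearranging, I obtain
\[
\int_0^t c(\theta(s))\,ds + pU(t) \;\ge\; \beta^* t + f(0) - f(W(t)) + \int_0^t \sigma f'(W(s))\,dX(s).
\]
Since $f' = v - p \in [-p,0]$ is bounded by Theorem~\ref{thm:bellman}, the stochastic integral is a true martingale of zero mean. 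Taking expectations, dividing by $t$, and passing to $\limsup$ gives the lower bound $\beta^* \le \limsup_{t\to\infty} t^{-1}\,\mathbb{E}[\int_0^t c(\theta(s))\,ds + pU(t)]$, provided I can verify that $\mathbb{E}[f(W(t))]/t \to 0$.

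For the candidate $\theta^*(\cdot)$ defined in~(\ref{eqn:thetas}), the identity $\theta^*(z) = \psi(p - v(z))$ together with the convex-conjugate relation~(\ref{eqn:phi}) and the definition of $\psi$ ensures that for every $z \ge 0$ the minimum in the Bellman equation~(\ref{eqn:v1}) is attained at $x = \theta^*(z)$. Hence the pointwise Bellman inequality holds with equality along the controlled trajectory, and repeating the Itô expansion yields
\[
\mathbb{E}\!\left[\int_0^t c(\theta^*(s))\,ds + pU^*(t)\right] = \beta^* t + f(0) - \mathbb{E}[f(W^*(t))].
\]
Dividing by $t$ and letting $t \to \infty$ then shows that $\theta^*$ attains the lower bound $\beta^*$, which proves optimality.

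The main obstacle is showing $\mathbb{E}[f(W(t))]/t \to 0$. Since $|f(w)| \le pw + |f(0)|$ (from $|f'|\le p$), it suffices to show that $\mathbb{E}[W(t)]$ is bounded uniformly in $t$. Under any admissible policy, the drift $\theta(z) - \kappa z$ is bounded above by $\bar\theta - \kappa z < 0$ for $z \ge \bar z$, giving both a linear restoring term and a strictly negative piece at infinity. A Foster--Lyapunov argument using $V(w)=w^2$, whose generator satisfies $\mathcal{L}V(w) \le -2\kappa w^2 + 2\bar\theta w + \sigma^2 + C$ for $w \ge \bar z$ (plus a bounded contribution on $[0,\bar z]$), yields a uniform bound $\sup_{t\ge 0} \mathbb{E}[W(t)^2] < \infty$ and in particular the required control. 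The candidate policy is itself admissible since $\theta^*(z) = \theta_0 \le \bar\theta$ for $z > \tau_1$, so the same moment bound applies and closes the verification.
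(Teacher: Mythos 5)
Your proposal follows essentially the same verification argument as the paper's proof: It\^o's formula applied to $f$, the pointwise Bellman inequality for arbitrary admissible policies with equality along the candidate $\theta^*$, the boundary contribution $f'(0)U(t) = -pU(t)$, and the vanishing of the stochastic integral from the boundedness $f' = v - p \in [-p,0]$. The only difference is that where the paper's auxiliary lemma simply asserts $\mathbb{E}[W(t)]/t \to 0$ under any admissible policy, you supply an explicit Foster--Lyapunov justification via $V(w)=w^2$ using the stability condition $\theta(z)\le\bar\theta<0$ for $z\ge\bar z$, which soundly (indeed, more strongly) closes that step.
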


To translate the optimal policy of the equivalent workload formulation to the original formulation, we first define the (unscaled) workload process:
\begin{align}
\mathcal{W}^n(t) = \sum_{j=1}^{J+\tilde{J}} m_j Q_j^n(t), \,\, t \ge0. \label{eqn:Wnt}
\end{align}
We also define the workload thresholds $w_1^* > \cdots > w_L^* >0$ as follows:
\begin{align}
w_l^* = \sqrt{n} \tau_l, \,\, l = 1, \dots , L. \label{eqn:wms}
\end{align}
The proposed policy engages in activities $1, \dots, l$ whenever $\theta_l$ is chosen in the workload problem. In other words, the natural interpretation of the optimal policy (given in~(\ref{eqn:thetas})) for the original problem is that the non-profit should engage in these activities whenever 
\begin{align}
\mathcal{W}^n(t) < w_l^* \mbox{ for } l=1, \dots, L, \label{eqn:Wmt}
\end{align}
and should do nothing otherwise, i.e., $\mathcal{W}^n(t) \ge w_1^*$ (see Figure~\ref{fig:proposedpolicyorig2}).
\begin{figure}[htbp]
\centering
\includegraphics[scale=0.50]{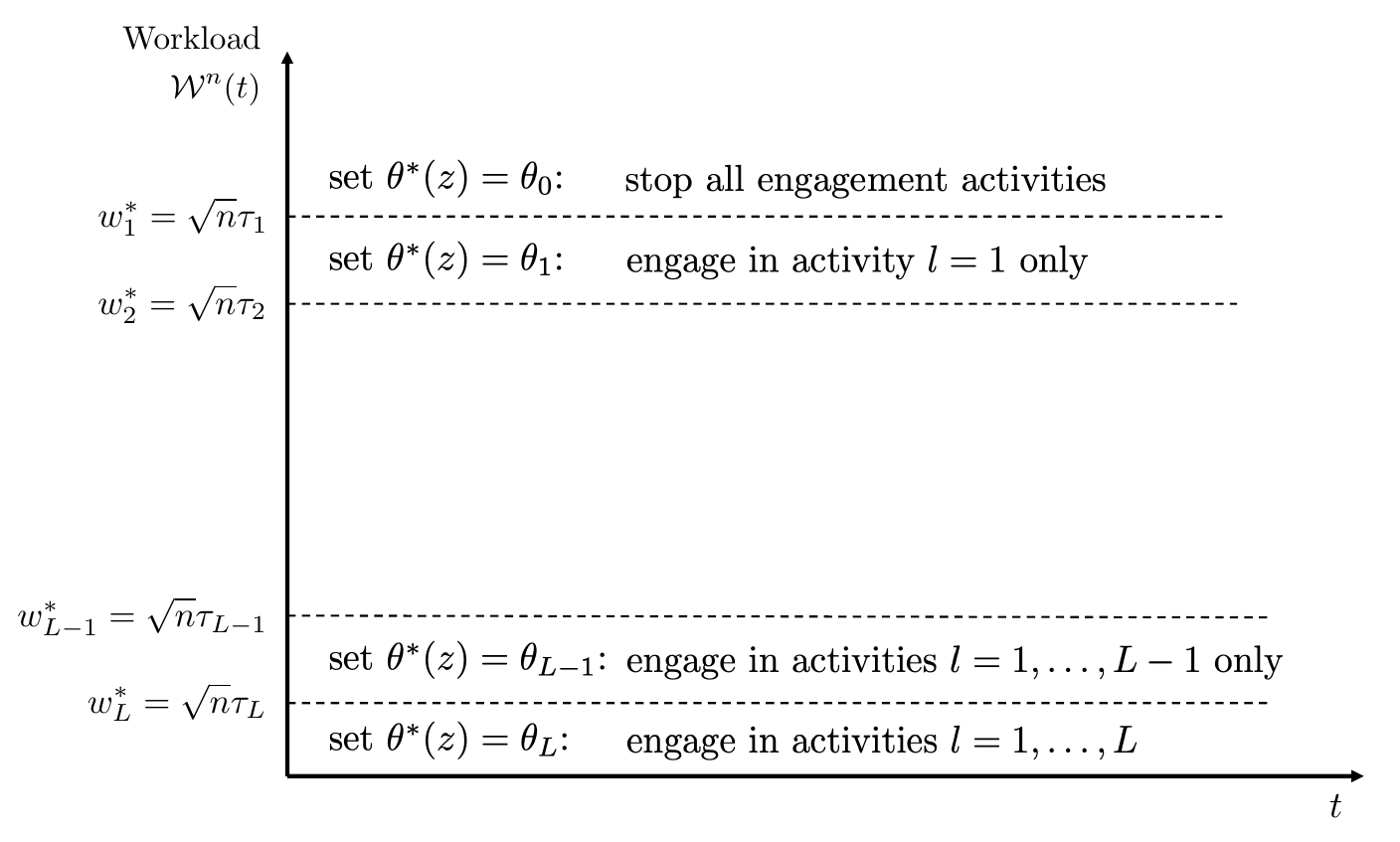}
\caption{The proposed policy for the original system and its relationship to the proposed policy of the scaled workload problem.}
\label{fig:proposedpolicyorig2}
\end{figure}

An alternative way to express the proposed policy is to specify when each volunteer engagement activity will be used. That is, we set
\begin{align}
\delta_{l}(t) = \left\{
   \begin{array}{ll}
      1, & \mbox{if } \mathcal{W}^n(t) < w_{l}^*, \\
      0, & \mbox{otherwise}.
   \end{array}
\right. \label{eqn:deltajl1}
\end{align}

Next, we specify the server's prioritization policy. To ensure Equation~(\ref{eqn:Qj}) is satisfied, for each class $j$, we calculate the penalty function
\begin{align*}
p_j(t) = Q_j^n(t) - \frac{x_j}{m_j} \mathcal{W}^n(t), \,\, t \ge 0.
\end{align*}
Then, the server works on the volunteer class $j$ for which $p_j(t)$ is highest. In case of a tie, the class with the lowest index (among tied ones) is prioritized. As an aside, the non-profit can also ensure Equation~(\ref{eqn:Qj}) is satisfied using a discrete-review policy, which makes resource allocation decisions at predetermined review times. See \citet{Harrison_96, Harrison_98}, \citet{Maglaras_1999, Maglaras_2000}, \citet{ata-kumar-2005}, and \citet{Ata_Olsen_2009, Ata_Olsen_Queueing_2013} for such policies and their analysis under fluid and diffusion approximations.

Lastly, we characterize thresholds $\tau_1, \ldots, \tau_L$ explicitly. To do so, we next provide a closed-form formula for $v(\cdot)$. And,	inverting it yields the thresholds, cf. Equation (\ref{eqn:taum}). 

For notational brevity, we define $u_l(\cdot)$ for all $x$ and for $l=1,\ldots, L+1$ as follows:
\begin{align}
 u_l(x) &=  p  -  \hat{c}_l  \exp \left\{ \frac{- 2 \theta_{l-1} (x-\tau_l) + \kappa (x^2-\tau_l^2)}{\sigma^2} \right\}  \nonumber \\ 
& +  \frac{2(\beta - c(\theta_{l-1}))\sqrt{\pi}}{\sigma \sqrt{\kappa}}   
\exp \left\{ \frac{(\kappa x - \theta_{l-1})^2}{\kappa\sigma^2} \right\} \left[ \Phi \left( \frac{\kappa x - \theta_{l-1}}{\sigma \sqrt{\kappa/2}} \right) - \Phi \left( \frac{\kappa \tau_l- \theta_{l-1}}{\sigma \sqrt{\kappa/2}} \right) \right], \nonumber 
\end{align}
where $\Phi$ denotes the cumulative distribution function for a standard normal random variable. The following proposition characterizes the thresholds $\tau_1, \dots, \tau_L$; see the Online Supplement Section~D for its proof.
%
\begin{proposition} \label{prop:valuefunc}
For $x \in [\tau_l, \tau_{l-1})$, $l=1, \ldots, L+1$, we have that 
$v(x) = u_{l}(x)$ and $\tau_l = v^{-1}(p-\hat{c}_l)$.
\end{proposition}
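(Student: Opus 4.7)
The plan is to derive the explicit formula for $v$ interval-by-interval, starting from the Bellman equation (\ref{eqn:bellfinal1})--(\ref{eqn:bellfinal2}) and using the nested threshold structure of the optimal control $\psi(p-v(\cdot))$ established in Theorem~\ref{thm:policy}. On each interval $[\tau_l, \tau_{l-1})$ the optimal minimizer in (\ref{eqn:bellfinal1}) equals $\theta_{l-1}$, so by the definition of $\phi$ in (\ref{eqn:phi}) we have $\phi(p-v(x)) = \theta_{l-1}(p-v(x)) - c(\theta_{l-1})$. Substituting this identity into (\ref{eqn:bellfinal1}) reduces the Bellman equation on that interval to the linear first-order ODE
\begin{equation*}
\tfrac{1}{2}\sigma^2 v'(x) \;=\; \beta - c(\theta_{l-1}) - (\kappa x - \theta_{l-1})(p-v(x)),
\end{equation*}
with the boundary condition $v(\tau_l) = p - \hat{c}_l$ coming directly from (\ref{eqn:taum}).

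The next step is to solve this ODE explicitly. Writing $u(x) = p - v(x)$, it takes the form $u'(x) = \tfrac{2(\kappa x - \theta_{l-1})}{\sigma^2}u(x) - \tfrac{2(\beta - c(\theta_{l-1}))}{\sigma^2}$, which admits the integrating factor $\exp\bigl(-(\kappa x^2 - 2\theta_{l-1}x)/\sigma^2\bigr)$. Integrating from $\tau_l$ to $x$ and invoking the boundary value $u(\tau_l) = \hat{c}_l$ expresses $u(x)$ as the sum of an exponential term (with coefficient $\hat{c}_l$) and a term proportional to $\int_{\tau_l}^{x}\exp\bigl(-(\kappa s^2 - 2\theta_{l-1}s)/\sigma^2\bigr)\,ds$. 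Completing the square, $\kappa s^2 - 2\theta_{l-1} s = (\kappa s - \theta_{l-1})^2/\kappa - \theta_{l-1}^2/\kappa$, and then changing variables to $t = (\kappa s - \theta_{l-1})/(\sigma\sqrt{\kappa/2})$ recasts this Gaussian integral in terms of the standard normal CDF $\Phi$. Collecting the exponential prefactors yields $v(x) = p - u(x) = u_l(x)$ as stated, with the factor $2\sqrt{\pi}/(\sigma\sqrt{\kappa})$ and the prefactor $\exp\bigl((\kappa x - \theta_{l-1})^2/(\kappa\sigma^2)\bigr)$ arising precisely from this change of variable.

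The claim $\tau_l = v^{-1}(p-\hat{c}_l)$ is then simply the definition (\ref{eqn:taum}), justified by the fact that $v$ is strictly increasing (Theorem~\ref{thm:bellman}) and hence invertible. To complete the picture, one must verify that the pieces $u_l$ and $u_{l+1}$ agree at their common endpoint $\tau_l$; this is automatic because the integral in $u_l$ vanishes at its lower limit and the exponential prefactor equals $1$ there, so both expressions collapse to $p - \hat{c}_l$. Continuous differentiability of $v$ across the thresholds then follows from the Bellman equation, which forces $v'(\tau_l^-) = v'(\tau_l^+)$ once $v(\tau_l)$ and the minimizing drift values are fixed.

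Beyond careful bookkeeping, the main obstacle is executing the completion-of-the-square and the change of variables cleanly so that the prefactors $\exp\bigl((\kappa x - \theta_{l-1})^2/(\kappa\sigma^2)\bigr)$ and $2\sqrt{\pi}/(\sigma\sqrt{\kappa})$ emerge in the exact form appearing in $u_l$; the remaining manipulations are routine. The constant $\beta$ used throughout is the common value $\beta^*$ from Theorem~\ref{thm:bellman}, pinned down by the requirement $\lim_{x\to\infty} v(x) = p$, and no additional matching across intervals is needed because each $u_l$ is already initialized at $\tau_l$ by construction.
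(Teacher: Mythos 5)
Your proof is correct and follows essentially the same route as the paper: on each interval $[\tau_l,\tau_{l-1})$ the conjugate linearizes to $\phi(p-v)=\theta_{l-1}(p-v)-c(\theta_{l-1})$, the Bellman equation reduces to a linear first-order ODE solved with the integrating factor $\exp\{(2\theta_{l-1}x-\kappa x^2)/\sigma^2\}$ and initial value $v(\tau_l)=p-\hat{c}_l$, and the resulting Gaussian integral is rewritten via $\Phi$ after completing the square. The only (cosmetic) difference is your substitution $u=p-v$, which makes the inhomogeneous term constant and lets you get by with the single integral identity in Lemma~\ref{lem:integral-exp}(i), whereas the paper works with $v$ directly, carries the extra $\kappa p x$ forcing term, and therefore also needs the identity for $\int z\exp\{(2\theta z-\kappa z^2)/\sigma^2\}\,dz$ in Lemma~\ref{lem:integral-exp}(ii).
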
 
The Online Supplement Section~F shows the solution to the Bellman equation (\ref{eqn:bellfinal1})-(\ref{eqn:bellfinal2}).

\section{Numerical/Simulation Study} \label{sec:numerical}

To investigate the effectiveness of our dynamic policy, we conducted a numerical study. This study was based primarily on the volunteer operations of a food bank in the Southwestern United States which we will call ``Food Bank~A''. Additionally, we also contacted a sample of 155 food banks in the U.S. using the GuideStar database on nonprofit organizations and searched on “Food Banks.” Out of these, 66 food banks agreed to answer a Qualtrics survey about how they engage volunteers in their organization. We matched the responses from the survey with information from the Form 990 filed by the organization. The respondents included food banks from 36 states. Gross receipts, including financial donations, in-kind donations, grants, and fees for services, ranged from \$54K to \$105M, with a median of \$3.4M. Information collected in the survey included the types of activities volunteers perform, the criticalness of these activities to food bank performance, top cost categories associated with volunteer labor, total volunteer hours per year, full-time equivalent (FTE) employees, probability that a volunteer also donates financially, and, if so, average volunteer financial donation amount.
Most germane to this study, we note that 24\% of the respondents listed volunteer recruiting as one of their top three volunteer-related costs and 55\% mentioned managing and training volunteers. As organizations that provide continuous services, the following were mentioned as ongoing volunteer activities: sorting food donations (80\%), packing and bagging food for clients (61\%), distribution and delivery (48\%), stocking shelves (29\%), and clerical tasks (21\%).

Our simulation process logic and the parameters used to calculate the dynamic policy thresholds are based on the operations of Food Bank~A. Food Bank~A runs a large, volunteer-staffed meal preparation operation that delivers food to a service network of nearly 1{,}400 organizations, schools, after-school, and feeding sites. We collected operational information from Food Bank A through an on-site visit, three interviews (the first interview was conducted with three managers, the Director of Volunteer and Agency Services, Business Support Manager, and Volunteer Engagement Manager; the second and third interviews were conducted with the Volunteer Engagement Manager), and follow-up email correspondences. Using the simulation, we compare the performance of our dynamic policy to all possible static policies. In particular, we assess and compare the total annual cost, abandonment percentage, activity cost, and lost throughput resulting from the respective policies.

\subsection{Simulation Process Logic}

Our simulation process logic closely follows the actual process at Food Bank~A (i.e., without making the simplifying assumptions used to derive the dynamic policy). In particular, there are two simplifying assumptions made in the analytical model (Section~\ref{sec:model}) that we do not incorporate into the simulation. 
First, although the model is in continuous time, we build a discrete time simulation where each period represents one day. Moreover, volunteers that work the same shift are synchronized in their arrival to the food bank. 
Second, in the simulation, we set distinct times for possible engagement activities based on an approximate schedule from Food Bank~A. If the food bank chooses to execute the activity, it will result in more volunteer signups from the affected class(es) during the impact period of the engagement activity. Also, instead of continuously adjusting engagement activities, we will check system congestion before each engagement activity to determine whether to conduct that activity.

Each period of the simulation is equivalent to one day. The food bank runs the volunteer operation 6 days per week. The following describes the simulation process logic from the volunteer's perspective:
\begin{enumerate}

\item A volunteer arrives to sign up for a volunteer shift. The volunteer enters a queue for a volunteer shift (i.e., waits to be processed by the food bank, where ``processed'' means to have the volunteer work a shift). The arrival rate of each class of volunteer depends on the inherent characteristics of the volunteer class and the engagement activities performed by the food bank.

\item The volunteer advances in the first-come-first-served queue until the day they can work a volunteer shift. Any time before the volunteer works the shift, the volunteer can abandon the queue.

\item On the day of the volunteer shift, the volunteer works simultaneously with all other volunteers who are working the shift on that day.

\item After the day of the volunteer shift, the volunteer enters repose if they are in a class of repeat volunteers, or exits the system entirely if they are in a class of one-time volunteers.

\end{enumerate}
The following describes the simulation process logic from the food bank's perspective:
\begin{enumerate}

\item  For every working day of the volunteer operation, the food bank processes the first 250 volunteers in the queue. If there are fewer than 250 volunteers in the queue, some volunteer slots for that day are left unfilled and the food bank loses the corresponding amount of output.

\item The food bank has engagement opportunities on specific days of the year. Each type of engagement opportunity occurs on a periodic basis, e.g., once per day, once per week, etc. Using a static policy, the food bank performs the engagement activity according to the schedule. Using the dynamic policy, before each engagement activity opportunity, the food bank checks the number of volunteers in the queue, and determines whether to perform the engagement activity based on the policy thresholds. If the food bank chooses to perform the engagement activity, the arrival rates of the affected volunteer classes are boosted until the next opportunity for the same engagement activity. If the food bank does not perform the engagement activity, the arrival rates of the volunteer classes remain at the base arrival rate.

\end{enumerate}

\subsection{Simulation Parameters}

Most of the parameters of the simulation were calibrated using operational information from Food Bank~A. However, the abandonment rate, $\gamma_j$, was estimated using qualitative descriptions from the managers during our interviews and therefore, we perform sensitivity analysis on this parameter. 
Using these parameters, we derived the thresholds of the dynamic policy using the approximating Brownian control problem described in Section~\ref{sec:approx}.

The following simulation parameters are based on the meal preparation operation of Food Bank~A. (See the Online Supplement Section~H for detailed descriptions and calculations of parameter values).

\paragraph{Volunteers.} There are $3$ classes of volunteers: corporate volunteers (class $j=1$), individuals (class $j=2$), and social groups such as school groups or soccer clubs (class $j=3$).  Based on information from Food Bank~A, corporate volunteers and individuals are more likely to be repeat volunteers, whereas social groups tend to come and go. Therefore, we consider volunteers in classes $j=1,2$ to be repeat volunteers and volunteers in class $j=3$ to be one-time volunteers, resulting in $J=2$ and $\tilde{J} = 1$. Estimates from Food Bank~A indicate that there are approximately $k_1^n = 11{,}200$ corporate volunteers and $k_2^n = 16{,}800$ individual volunteers.

There are two types of corporate volunteers ($j=1$): regulars who volunteer on average once every 8 months and sporadic volunteers who volunteer on average once every 4 years. Of the volunteer slots filled by corporate volunteers, 70\% are filled by regulars and 30\% are filled by the sporadic type. Using these approximate volunteering frequencies from Food Bank~A, we derive the resulting arrival rate of class $j=1$ volunteers to be $r_1^n k_1^n = 6720$ arrivals per year. 
Individual volunteers ($j=2$) consist of 4 types of volunteers. In decreasing order of volunteering frequency: 0.5\% volunteer on average once per week, 17.5\% volunteer on average once per month, 32\% volunteer on average twice per year, and 50\% on average volunteer once every 4 years. Thus, we derive the arrival rate of class $j=2$ volunteers to be $r_2^n k_2^n = 52{,}500$ arrivals per year. 
Food Bank~A estimates that approximately 25\% of volunteer arrivals are social group volunteers ($j=3$), therefore, we estimate that the arrival rate of class $j=3$ is $\lambda_3^n = 19{,}540$ arrivals per year.

Combining the arrival rates of all three classes, the total annual arrival rate is $ r_1^n k_1^n + r_2^n k_2^n + \lambda_3^n = 78{,}760$ arrivals per year. See the Online Supplement Section~H.1 for detailed descriptions and calculations related to volunteer arrival rates.

\paragraph{Service rate.} 

The food bank is able to accommodate 250 volunteers per day, i.e., it strives to fill 250 volunteer slots per day. Meal preparation operates 6 days per week for 52 weeks per year. Any class of volunteer can fill a volunteer slot. Therefore, $\mu_1^n = \mu_2^n = \mu_3^n =  78{,}000$ slots per year.

\paragraph{Engagement activities.} 

The food bank can engage in $L=4$ different volunteer engagement activities. Each engagement activity follows a periodic schedule. In the simulation, at each point when an engagement activity is scheduled, the activation of the activity will depend on the congestion of the system and the thresholds of the dynamic policy. If the congestion is low enough (i.e., not enough volunteers), the engagement activity will be activated and during the period of time until the next scheduled occurrence of the activity, the arrivals of the volunteers increase accordingly. Otherwise, the food bank will not activate the engagement activity and the arrival rates of volunteers will remain unchanged. Following are brief descriptions of the engagement activities.

\begin{itemize}
   \item \textbf{Orientation during volunteer shift ($l=1$)}: A brief orientation at the beginning and closing remarks at the end of each volunteer shift explain the social impact that the volunteer work has on the population in need. This activity affects the arrival rate of volunteer classes $\mathcal{R}_{1} \cup \mathcal{S}_{1} = \{ 1,2,3\}$. On average, Food Bank~A estimates that the orientation activity increases volunteer signups by 10-15 volunteers per week. The typical frequency of this activity is once per volunteer shift (i.e., 6 times per week), thus, we calculate that this activity increases volunteer arrivals by 2 volunteers per engagement activity. This activity takes approximately 10 minutes of employee time.
   
   \item \textbf{Electronic communication ($l=2$)}: The food bank sends targeted emails and other forms of electronic communication to volunteers to notify them of volunteer activities. This activity affects the arrival rate of volunteer classes $\mathcal{R}_{2} \cup \mathcal{S}_{2} = \{ 1,2,3\}$. On average, Food Bank~A estimates that electronic communication increases volunteer signups by 15 volunteers per week. The typical frequency of this activity is once per week, thus, we calculate that this activity increases volunteer arrivals by 15 volunteers per engagement activity. It takes an employee approximately one hour to formulate an e-communication (e.g., newsletter).

   \item \textbf{Speaking engagement ($l=3$)}: Food bank staff can make presentations at organizations in the region to raise awareness. This activity affects the arrival rate of volunteer classes $\mathcal{R}_{3} = \{ 3\}$. On average, Food Bank~A estimates that speaking engagements increase volunteer signups by 20 volunteers per month. The typical frequency of this activity is once per month, thus, we calculate that this activity increases volunteer arrivals by 20 volunteers per engagement activity. Food bank employees must travel to and spend time speaking with organizations in the region.

   \item \textbf{Tabling at a fair ($l=4$)}: 
Food bank staff can set up an information table at fairs throughout the year. This activity affects the arrival rate of volunteer classes $\mathcal{R}_{4} = \{ 1,3\}$. On average, Food Bank~A estimates that tabling at fairs increases volunteer signups by 30 volunteers per month. The typical frequency of this activity is once per month, thus, we calculate that this activity increases volunteer arrivals by 30 volunteers per engagement activity. Food bank employees must travel to and spend time staffing the table at the fair.

\end{itemize}

Parameters related to engagement activities are shown in Table~\ref{tab:engage-params}. See the Online Supplement Section~H.2 for detailed descriptions of and calculations for these engagement activities. 

\begin{table} [tbp]
\centering
\begin{tabular}{ccccc}
& & & Increase in & Fixed cost \\
& Engagement   & Volunteer  &  annual arrivals & of activity, $F_l^n$ \\  
& activity  &  classes affected &  [arrivals/year]    &  [\$/year] \\ \hline

$l=1$ & Orientation & $j=1,2,3$    &   624 & 936  \\ \hline

$l=2$ & E-communication & $j=1,2,3$    &  780 &  1820   \\ \hline
$l=3$ & Speaking & $j=3$   &  240 &  720   \\ \hline
$l=4$ & Tabling & $j=1,3$  &  360 &  1800   \\ \hline
\end{tabular}
\caption{Engagement activity parameters.}
\label{tab:engage-params}
\end{table}

\paragraph{Other parameters.}

We assume that the cost rate of idleness (throughput loss) in the system is equal to the opportunity cost of forgone social impact, i.e., it is equivalent to the value of the meals produced by volunteers per year. Based on Food Bank~A's cost of the meal, we calculate that $p = \$50$ (see the Online Supplement Section~H.3 for details).
Discussions with Food Bank A revealed that there are always some volunteers that abandon the queue but the number is typically low. We correspondingly set the cancellation/abandonment rate to be low: $\gamma_j = 0.01$, $j=1,2,3$. However, we perform sensitivity analysis on $\gamma_j$ to determine the effect of varying $\gamma_j$. 
Lastly, we assume a first-come-first-served service discipline, therefore, $x_j =  k_j r_j / \mu_j$ for $j = 1, 2$ and $x_j = \lambda_j / \mu_j$ for $j=3$.
Hence, $x_1 = 0.085, x_2 = 0.067, x_3 = 0.248$. See the Online Supplement Section~H.3 for detailed descriptions and calculations related to the parameters described above. 

\subsection{Optimal policy thresholds} \label{sub:numericalthresh}

To calculate the threshold values of the optimal policy, we first scale the parameters of the $n^{th}$ system to parameters for the limit system. We assume $n$ is large and roughly equal to the number of potential volunteers Food Bank~A can reach with its engagement activities, thus we set $n=56{,}000$. The calculations for parameter values $\lambda_j$, $\mu_j$, $\hat{r}_{l}$, $\hat{\lambda}_l$, $F_l$, $p$, $\kappa$, and $\sigma_w^2$ are given in the Online Supplement Section~H.4. 

Using Equation~(\ref{eqn:cjmlm}), we calculate $\hat{c}_l$ for $l=1, \dots, 4$ and note that $\hat{c}_1 < \dots < \hat{c}_4$. We use Equation~(\ref{eqn:theta0}) to find $\theta_0$, and then use Equations~(\ref{eqn:thetam}) and (\ref{eqn:cx}) to calculate $\theta_l$ and $c(\theta_l)$, respectively. Since the dynamic policy is derived using an approximate system, we further tune the policy thresholds by adjusting $\theta_0$ (which results in adjustments to the thresholds). Because the dynamic policy is nested, adjusting $\theta_0$ gives us a principled way to change the thresholds (i.e., preserving their relationship to each other). For example, in our base case simulation scenario, Equation~(\ref{eqn:theta0}) gives an initial value of $\theta_0 = -2.3$. We performed a search over a range of values around that value of $\theta_0$ and found that $\theta_0 = -1.4$ gave the thresholds with the best performance.

\begin{figure}[htbp]
\centering
\includegraphics[scale=0.55]{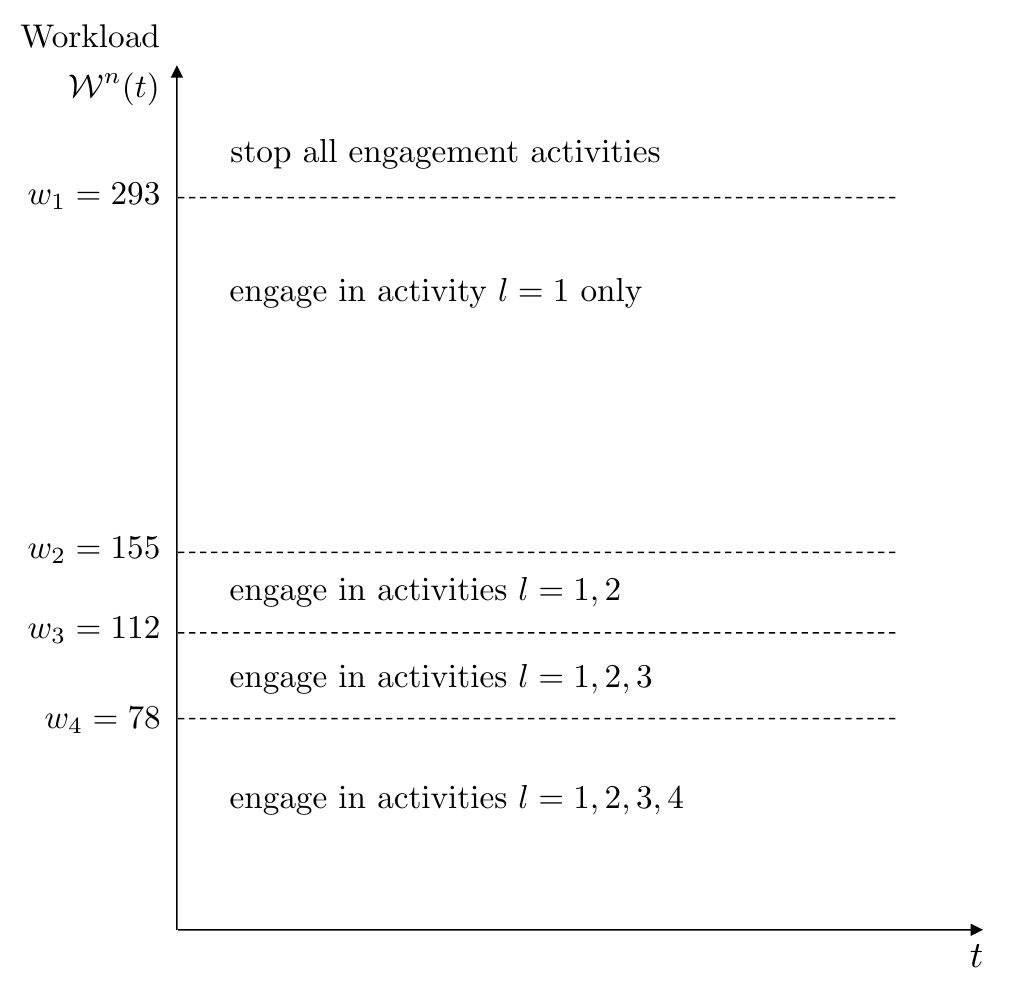}
\caption{The optimal dynamic policy. }
\label{fig:optpolicy}
\end{figure}

The Online Supplement Section~H.5 shows the equations used to determine optimal thresholds, $\tau_1, \dots, \tau_4$. Then, using Equation~(\ref{eqn:wms}), we determine the optimal workload thresholds for the optimal dynamic policy.
Figure~\ref{fig:optpolicy} depicts the optimal dynamic policy indicating when to deploy specific engagement activities.

\paragraph{Performance of optimal policy.} Each replication of the simulation runs for 25 years with a warm up period of 20 years. To calculate the annual performance, the last 5 years of each simulation run were averaged. Each scenario was simulated using 20 replications. We determined the best static policy among all possible static policies (i.e., all possible combinations of activities $l=1,2,3,4$ and no activities). We then compared the performance of our dynamic policy to the best static policy. 

\begin{figure}[htbp]
\centering
\includegraphics[scale=0.5]{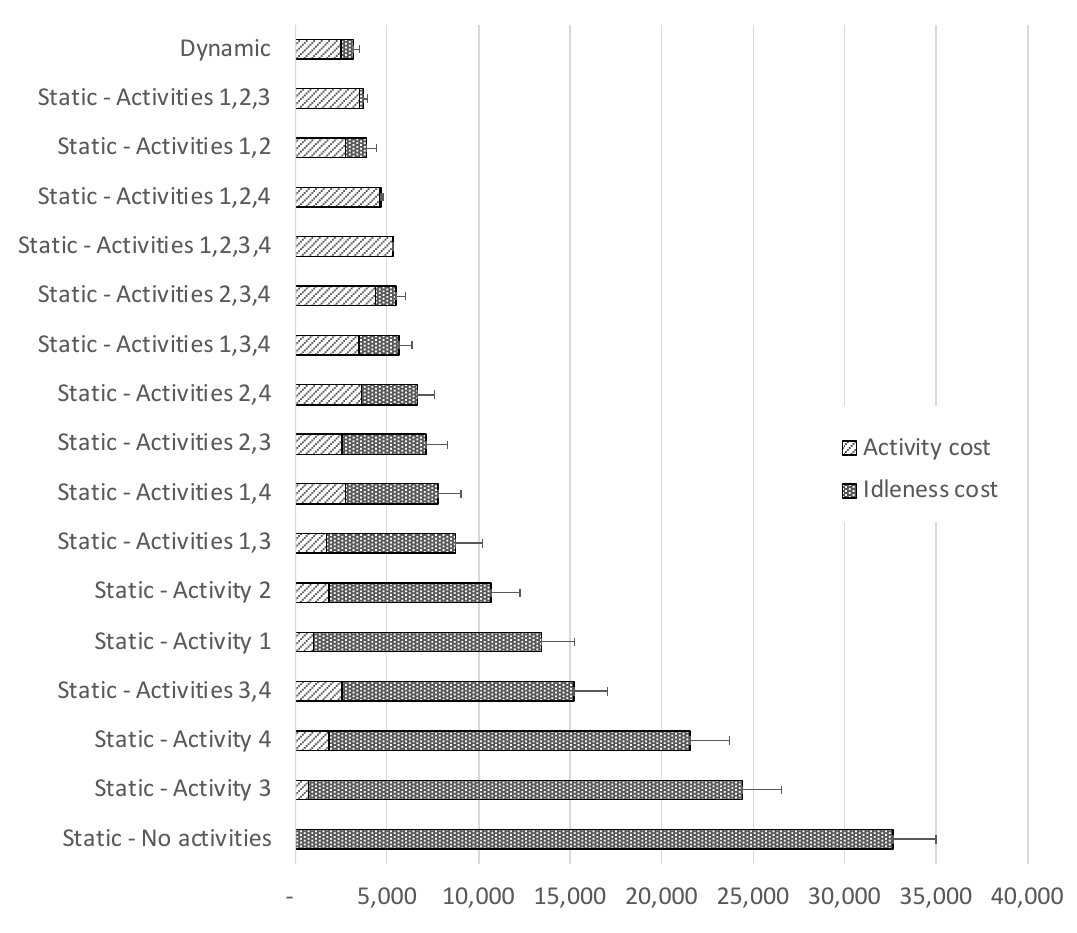}
\caption{Total annual cost of the dynamic policy compared to static policies.}
\label{fig:numresult1}
\end{figure}

The objective of the food bank as formulated in Section~\ref{sec:model} is to minimize total annual cost (see Equation~(\ref{eqn:C})).
The total annual cost includes the costs of activities deployed and the penalty cost of idleness or throughput loss (which captures the loss of social impact when there are not enough volunteers to work). 
Figure~\ref{fig:numresult1} shows the cost performance of the dynamic policy compared to the static policies. The best performing static policy uses activities $l=1,2, 3$ all the time, resulting in a total annual cost of $\$3728 \pm 200$ (95\% CI).  However, the total annual cost of the dynamic policy is $\$3123 \pm 383$ (95\% CI), an improvement (reduction in cost) of $\$605 \pm 407$ (95\% CI), which corresponds to an expected cost improvement of $16.2\%$. Under the dynamic policy, activities $1, 2,3$, and $4$ are used 99\%, 60\%, 31\%, and 14\% of the time, respectively.

The best static policy accumulated \$3476 and \$252 in activity and idleness costs, respectively, whereas the dynamic policy accumulated \$2494 and \$629 in activity and idleness costs. The dynamic policy is able to adjust to congestion levels and therefore only engages activities if the expected throughput loss exceeds the cost of the activity. Moreover, the dynamic policy anticipates expected abandonments and therefore reduces activities when congestion is high. This is reflected in the lower abandonment rate under the dynamic policy (1.07\%) compared to the best static policy (1.38\%).

We performed a sensitivity analysis on the parameter $\gamma_j$, as this parameter was estimated from qualitative descriptions from Food Bank~A. Table~\ref{tab:sensitivity} shows the results of the sensitivity analysis. Within a reasonable range of the parameter value, we found that the dynamic policy performed better than the best (lowest cost) static policy consistently. The dynamic policy did better relative to the best static policy when $\gamma_j$ is low. Intuitively, when abandonments are high (i.e., high $\gamma_j$), it is advantageous to deploy engagement activities more often to attract volunteers -- this is exactly what the best static policy does. Therefore, when $\gamma_j$ is high, the static policy performs well and the dynamic policy ends up behaving similar to the static policy. Therefore, the performance of the two policies are similar. When $\gamma_j$ is low, the flexibility of the dynamic policy does much better relative to the static policy because it can adjust the activity level according to the number of volunteers already signed up.

The Online Supplement Section~H.6 gives details of the simulation results. Also shown in the Online Supplement Section~H.6 are simulation results when all classes are repeat volunteers and all classes are one-time volunteers. Simulation robustness checks are given in the Online Supplement Section~H.7.

\begin{table} [tbp]
\centering
\begin{tabular}{lc}
$\gamma_j$   &  Expected \% Cost Improvement \\ \hline
0.005    &  $30.6\% $  \\ 
$0.010^*$&  $16.2\% $  \\ 
0.015    &  $ 8.6\% $  \\ 
0.020    &  $ 3.8\% $  \\ 
0.025    &  $ 0.3\% $ \\ 
\end{tabular}
\caption{Sensitivity analysis on parameter $\gamma_j$. $^*\mbox{Base case is } \gamma_j = 0.010$. }
\label{tab:sensitivity}
\end{table}

\section{Concluding remarks} \label{sec:conc}

As non-profit organizations step in to provide much needed services for vulnerable populations, the importance of their role in society grows. Much of the work done by non-profit organizations are performed by volunteers. In non-profit operations that provide ongoing services over the long term, ensuring a stable flow of volunteer labor is essential to serving their constituents. In this paper, we developed a model to capture how volunteer engagement activities affect the supply of volunteers to a non-profit organization. We incorporated the tradeoff between not having enough volunteers to do the work and encouraging too many volunteers to sign up, resulting in unnecessary engagement activity costs and increased volunteer abandonment. Using this model, we derived the optimal dynamic policy for deploying volunteer engagement activities. 
Our solution is a nested threshold policy with explicit congestion thresholds that indicate when the non-profit should deploy each type of engagement activity. We performed a numerical study using a simulation based on the volunteer operations of a large food bank. For a base case, the study showed that the dynamic policy significantly reduced the cost of the volunteer operation compared to the best static policy. Methodologically, this paper also contributes to the literature on drift-rate control problems for diffusion models.

We did not pursue a rigorous proof of asymptotic optimality of the proposed policies in the heavy traffic limit. However, we conjecture that the proposed policies will be asymptotically optimal in the diffusion scale. That is, we expect their optimality gap to be $o(\sqrt{n})$ as $n \rightarrow \infty$. Another possible future research question is to include the dynamic scheduling capability, thereby relaxing the requirement in Equation~(\ref{eqn:Qj}). In that case, the system manager decides which class to serve next dynamically so as to optimize the system performance.

\newpage

\newpage

\appendix

\setcounter{page}{1}
\begin{center}
{\Large \textbf{A Dynamic Model for Managing Volunteer Engagement\\Online Supplement: Proofs and Numerical Study Details} }\\[12pt]
\end{center}

\section{Brownian Control Problem} \label{app:BCP}

\setcounter{equation}{0}
\renewcommand{\theequation}{\thesection\arabic{equation}}

Note by the strong approximations \citep{csorgo-horvath-1993} that for a rate-one Poisson process, denoted by $N(t)$, we have that as $n \rightarrow \infty$, 
\begin{align}
N(nt) = nt + \sqrt{n} B(t) + o(\sqrt{n}), \label{app:Nnt}
\end{align}
where $B(\cdot)$ is a standard Brownian motion.
In what follows, we will use (\ref{app:Nnt}) repeatedly to derive the approximating Brownian control problem formally. To this end, recall that
\begin{align}
Z_j^n(t) &= \frac{Q_j^n(t)}{\sqrt{n}}, \,\,\, j = 1, ..., J, \,\, t \ge 0, \label{app:Zjn} \\
Y_j^n(t) &= \sqrt{n} \left( \frac{r_j \hat{k}_j}{\mu_j} t - T_j(t) \right),  \,\,\, j=1, \dots, J, \,\, t \ge 0, \label{app:Yjn} \\
Y_j^n(t) &= \sqrt{n} \left( \frac{\lambda_j }{\mu_j} t - T_j(t) \right),  \,\,\, j=J+1, \dots, J+\tilde{J}, 
     \,\, t \ge 0, \label{app:Yjn.2} \\
U^n(t) &= \sqrt{n} \, I(t), \,\,\, t \ge 0. \label{app:Ln}
\end{align}
Next, we consider the abandonment, service completion, and arrival processes $\{\Gamma_j^n(t), N_j^s(\mu_j^n T_j(t)), A_j^n(t) \}$ for class $j$ volunteers, $j=1, \dots, J+\tilde{J}$. Using Equation~(\ref{app:Nnt}), note that
\begin{align}
\Gamma_j^n(t) = N_j^b \left( \int_0^t \gamma_j Q_j^n(s) ds \right)
   = N_j^b \left( \sqrt{n} \int_0^t \gamma_j Z_j(s) ds \right)
   \approx \sqrt{n} \int_0^t \gamma_j Z_j(s) ds + o(\sqrt{n}). \label{app:Gammajn}
\end{align}
Similarly, for the service completion process, we first consider $j=1, \dots, J$ and write
\begin{align}
N_j^s (\mu_j^n T_j(t)) &= N_j^s( n \mu_j T_j(t)) \nonumber \\
&= N_j^s \left( n \mu_j \left( \frac{r_j \hat{k}_j}{\mu_j} t - \frac{Y_j^n(t)}{\sqrt{n}} \right) \right) \nonumber \\
&= n r_j \hat{k}_j t - \sqrt{n} \mu_j Y_j^n(t) + \sqrt{n r_j \hat{k}_j} \, B_j^s(t) + o(\sqrt{n}), \label{app:Njs}
\end{align}
where the second equality follows from Equation~(\ref{eqn:Yjn}) and the last equality follows from Equation~(\ref{app:Nnt}). Next, we consider $j=J+1, \dots, J+\tilde{J}$ and write
\begin{align}
N_j^s (\mu_j^n T_j(t)) &= N_j^s( n \mu_j T_j(t)) \nonumber \\
&= N_j^s \left( n \mu_j \left( \frac{\lambda_j}{\mu_j} t - \frac{Y_j^n(t)}{\sqrt{n}} \right) \right) \nonumber \\
&= n \lambda_j t - \sqrt{n} \mu_j Y_j^n(t) + \sqrt{n \lambda_j} \, B_j^s(t) + o(\sqrt{n}), \label{app:Njs.2}
\end{align}
where the second equality follows from Equation~(\ref{eqn:Yjn.2}) and the last equality follows from Equation~(\ref{app:Nnt}). Also, $B_j^s$ is a standard Brownian motion for $j=1, \dots, J+\tilde{J}$.

Lastly, we consider the arrival process $A_j^n$. In doing so, we first consider the repeat volunteers, i.e., $j=1, \dots, J$:
\begin{align}
A_j^n(t) &= N_j^a \left( \int_0^t R_j (\delta(s)) (k_j^n - Q_j^n(s))ds \right) \nonumber \\
&= N_j^a \left( \int_0^t \Big( r_j^n + \sum_{l \in \mathcal{L}_j} \frac{\hat{r}_{jl} \delta_{l}(s) }{\sqrt{n}} \Big)
   (k_j^n - Q_j^n(s)) ds \right) \nonumber \\
&\approx N_j^a \left( \int_0^t \Big( r_j  - \frac{\alpha_j}{\sqrt{n}} 
   + \sum_{l \in \mathcal{L}_j} \frac{\hat{r}_{jl}\delta_{l}(s) }{\sqrt{n}} \Big)
   ( n \hat{k}_j - \sqrt{n} Z_j(s)) ds \right) \nonumber \\
&= N_j^a \left( n \int_0^t \Big( r_j - \frac{\alpha_j}{\sqrt{n}} 
   +  \sum_{l \in \mathcal{L}_j} \frac{\hat{r}_{jl} \delta_{l}(s) }{\sqrt{n}} \Big)
   \left(\hat{k}_j - \frac{ Z_j(s)}{\sqrt{n}} \right) ds \right),     \label{app:Ajn}
\end{align}
where we used $r_j^n = r_j - \alpha_j/\sqrt{n}$, which holds by the heavy traffic assumption. The next two steps in the derivation of Equation~(\ref{app:Ajn})  follow from Equation~(\ref{eqn:Zjn}) and by rearranging the terms. 

To facilitate the approximation of $A_j^n(t)$ further, consider the following term in Equation~(\ref{app:Ajn}): 
\begin{align}
&n \int_0^t \left( r_j - \frac{\alpha_j}{\sqrt{n}} 
   +  \sum_{l \in \mathcal{L}_j} \frac{\delta_{l}(s) \hat{r}_{jl}}{\sqrt{n}} \right)
   \left(\hat{k}_j - \frac{ Z_j(s)}{\sqrt{n}} \right) ds \nonumber \\
&= n r_j \hat{k}_j t + \sqrt{n} \left[ \hat{k}_j \sum_{l \in \mathcal{L}_j} \Delta_l(t) \hat{r}_{jl} - \hat{k}_j \alpha_j t 
   - \int_0^t r_j  Z_j(s) ds \right] +O(1). \label{app:approxAjn}
\end{align}
Substituting this into Equation~(\ref{app:Ajn})  and using~(\ref{app:Nnt}) give
\begin{align}
A_j^n(t) &= n r_j \hat{k}_j t + \sqrt{n} \left( \hat{k}_j \sum_{l \in \mathcal{L}_j}  \Delta_{l}(t) \hat{r}_{jl} -\hat{k}_j \alpha_j t
   - \int_0^t r_j  Z_j(s) ds \right) 
  + \sqrt{n \hat{k}_j r_j} \,\,B_j^a(t) + o(\sqrt{n}), \label{app:Ajn2}
\end{align}
where $B_j^a$ is a standard Brownian motion.

Now we consider the arrival process of the one-time volunteers, i.e., $A_j^n$ for $j=J+1, \dots, J+\tilde{J}$:
\begin{align}
A_j^n(t) &= N_j^a \left( \int_0^t \Lambda_j (\delta(s)) ds \right) \nonumber \\
&= N_j^a \left( \int_0^t \Big( \lambda_j^n + \sum_{l \in \mathcal{L}_j} \hat{\lambda}_{jl} \delta_{l}(s) \sqrt{n} \Big)
   ds \right) \nonumber \\
&\approx N_j^a \left( \int_0^t \Big( n \lambda_j  -\sqrt{n} \alpha_j
   + \sqrt{n} \sum_{l \in \mathcal{L}_j} \delta_l(s) \hat{\lambda}_{jl} \Big) ds \right) \nonumber \\
&= N_j^a \left( n \lambda_j t + \sqrt{n} \left( \sum_{l \in \mathcal{L}_j} \hat{\lambda}_{jl} \delta_l(s) 
   - \alpha_j \right) \right) \nonumber \\
&= N_j^a \left( n \left(\lambda_j t + \frac{1}{\sqrt{n}} \left( \sum_{l \in \mathcal{L}_j} \hat{\lambda}_{jl} \delta_l(s) 
   - \alpha_j \right) \right) \right) \nonumber \\
&= n \lambda_j t + \sqrt{n} \left(  \sum_{l \in \mathcal{L}_j} \delta_l(s) \hat{\lambda}_{jl}  - \alpha_j \right)
   + \sqrt{n \lambda_j} B_j^a(t) + o(\sqrt{n}),
    \label{app:Ajn2.2}
\end{align}
where we used Equation~(\ref{app:Nnt}) in the last step.

Substituting the approximations of $\Gamma_j^n(t)$, $N_j^s(\mu_j^n T_j^n(t))$, and $A_j^n(t)$ derived above (see Equations~(\ref{app:Gammajn}), (\ref{app:Njs})-(\ref{app:Njs.2}), and (\ref{app:Ajn2})-(\ref{app:Ajn2.2})) into Equation~(\ref{app:approxAjn}) yields an approximation for the dynamics of the volunteer queues. To derive that approximation, we first consider the repeat volunteers, i.e., classes $j=1, \dots, J$:
\begin{align}
Q_j^n(t) =&\, A_j^n(t) - N_j^n( \mu_j^n T_j(t)) - \Gamma_j^n(t) \nonumber \\
=&\, n r_j \hat{k}_j t + \sqrt{n} \Big( \hat{k}_j \sum_{l \in \mathcal{L}_j}  \Delta_{l}(t) \hat{r}_{jl} - \hat{k}_j \alpha_j t
   - \int_0^t r_j Z_j(s) ds \Big) \nonumber \\
& + \sqrt{n r_j \hat{k}_j} \,\, B_j^n(t) - n r_j \hat{k}_j t + \sqrt{n} \mu_j Y_j(t) 
   -\sqrt{n r_j \hat{k}_j} \,\, B_j^s(t) \nonumber \\
& - \sqrt{n} \int_0^t \gamma_j Z_j(s) ds+ o(\sqrt{n}). \label{app:Qjn}
\end{align}
Scaling both sides by $\sqrt{n}$ and passing to the limit formally as $n \rightarrow \infty$ gives the following:
\begin{align}
Z_j(t) =& \, X_j(t) + \hat{k}_j \sum_{l \in \mathcal{L}_j} \hat{r}_{jl} \Delta_{l}(t)  - \hat{k}_j \alpha_j t 
  - \int_0^t (r_j + \gamma_j) Z_j(s) ds + \mu_j Y_j(t),  \label{app:Zj-repeat}
\end{align}
where $X_j(t) = \sqrt{r_j \hat{k}_j} (B_j^n(t) - B_j^s(t))$ is a $(0, \sigma_j^2)$ Brownian motion with $\sigma_j^2 = 2 r_j \hat{k}_j$. This leads to Equation~(\ref{eqn:BCP-Zj1}).

Next we consider the one-time volunteers, i.e., classes $j=J+1, \dots, J+\tilde{J}$:
\begin{align}
Q_j^n(t) =&\, A_j^n(t) - N_j^n( \mu_j^n T_j(t)) - \Gamma_j^n(t) \nonumber \\
=&\, n \lambda_j t + \sqrt{n} \Big( \sum_{l \in \mathcal{L}_j}  \int_0^t \hat{\lambda}_{jl} \delta_l(s) ds - \alpha_j t \Big)
   + \sqrt{n \lambda_j} B_j^n(t) \nonumber \\
&-n \lambda_j t + \sqrt{n} \mu_j Y_j^n(t) - \sqrt{n \lambda_j} B_j^s(t)
   - \sqrt{n} \int_0^t \gamma_j Z_j(s) ds + o(\sqrt{n}). \label{app:Qjn.2}
\end{align}
Scaling both sides by $\sqrt{n}$ and passing to the limit formally as $n \rightarrow \infty$ gives the following
\begin{align}
Z_j(t) =& \, X_j(t) + \sum_{l \in \mathcal{L}_j} \hat{\lambda}_{jl} \Delta_{l}(t)  - \alpha_j t 
  - \int_0^t \gamma_j Z_j(s) ds + \mu_j Y_j(t), \,\,\, t\ge0, \label{app:Zj-onetime}
\end{align}
where $X_j(t) = \sqrt{\lambda_j} (B_j^n(t) - B_j^s(t))$ is a $(0, \sigma_j^2)$ Brownian motion with $\sigma_j^2 = 2 \lambda_j$. This leads to Equation~(\ref{eqn:BCP-Zj1.2}).

Clearly, we must have $Z_j(t) \ge 0$ for $t\ge0$ and $j=1, \dots, J+\tilde{J}$. Also note by the heavy traffic assumption that
\begin{align}
\sum_{j=1}^{J+\tilde{J}} Y_j(t) &= \sqrt{n} \left( \Big(\sum_{j=1}^J \frac{r_j \hat{k}_j}{\mu_j}  
   + \sum_{j=J+1}^{J+\tilde{J}} \frac{\lambda_j}{\mu_j}\Big) t
   - \sum_{j=1}^{J+\tilde{J}} T_j(t)  \right) \nonumber \\
&= \sqrt{n} \left( t - \sum_{j=1}^{J+\tilde{J}} T_j(t)  \right) 
   = \sqrt{n} I^n(t) = U(t), \,\,\, t\ge0, \label{app:sumYj}
\end{align}
which yields Equation~(\ref{eqn:BCP-L1}). Also, note that $U$,  $Z_j$ for $j=1, ..., J+\tilde{J}$ inherits properties (\ref{eqn:BCP-L2})-(\ref{eqn:BCP-workconserving}) from corresponding properties of the idleness process $I^n$ and the queue length processes $Q_j^n$ for $j=1, \dots, J+\tilde{J}$. Moreover, Equation~(\ref{eqn:BCP-Zj3}) follows from Equation~(\ref{eqn:Qj}) under diffusion scaling. Equation~(\ref{eqn:BCP-Zj4}) follows directly from Equation~(\ref{eqn:Deltajl}). Lastly, we verify Equation~(\ref{eqn:BCP-xi}). To that end, recall that
\begin{align}
C^n(t) &= \sum_{l=1}^L \sqrt{n} F_l \Delta(t) + p^n I^n(t) \nonumber \\
&=\sqrt{n} \Big( \sum_{l=1}^L F_l \Delta_l(t) + p \sqrt{n} I^n(t) \Big). \label{eqn:Cn1}
\end{align}
Because $U(t) \approx \sqrt{n} I^n(t)$, we have that
\begin{align*}
\frac{C^n(t)}{\sqrt{n} } \approx \sum_{l=1}^L F_l \Delta(t) + pU(t), \,\, t \ge0.
\end{align*}
Then, passing to the limit formally as $n \rightarrow \infty$ yields the approximating Brownian control problem (\ref{eqn:BCP-xi})-(\ref{eqn:BCPobj}).
\hfill $\blacksquare$

\section{Proofs of the Results in Section~\ref{sec:approx}} \label{app:proofs}


\paragraph{Proof of Proposition~\ref{prop:brownianworkload}.}

Let $\delta(\cdot)$ be a feasible policy for the BCP~(\ref{eqn:BCPobj}) and let $Z$, $Y$, $U$, $\Delta$, $\xi$ denote the corresponding processes in the BCP. Then we define $W(t)$ as in Equation~(\ref{eqn:W}). The following two identities facilitate the proof:
\begin{align}
\int_0^t \sum_{j=1}^J \sum_{l \in \mathcal{L}_j} \frac{\hat{k}_j}{\mu_j} \hat{r}_{jl} \delta_l(s) ds 
   &= \int_0^t \sum_{l=1}^L \Big( \sum_{j \in \mathcal{R}_l} \frac{\hat{k}_j}{\mu_j} \hat{r}_{jl} \Big) \delta_l(s) ds \label{eqn:identity} \\
\int_0^t \sum_{j=J+1}^{J+\tilde{J}} \sum_{l \in \mathcal{L}_j} \frac{\hat{\lambda}_{jl}}{\mu_j} \delta_l(s) ds 
   &= \int_0^t \sum_{l=1}^L \Big( \sum_{j \in \mathcal{S}_l} \frac{\hat{\lambda}_{jl}}{\mu_j}  \Big) \delta_l(s) ds \label{eqn:identity.2} 
\end{align}
Now, using the definition of $\kappa$ (see Equation~(\ref{eqn:handk})), substituting Equation~(\ref{eqn:BCP-Zj1}) into Equation~(\ref{eqn:W}) and using Equations~(\ref{eqn:BCP-Zj3})-(\ref{eqn:BCP-L1}) and (\ref{eqn:identity})-(\ref{eqn:identity.2}), it is easy to verify that $W$, $U$, and $\delta$ satisfy (\ref{eqn:workload-W1})-(\ref{eqn:workload-L1}). Also, we deduce from Equation~(\ref{eqn:BCP-Zj3}) that 
\begin{align*}
Z_j(t) = \frac{x_j}{m_j} W(t), \,\,\, j = 1, \dots, J, \,\, t \ge 0.
\end{align*}
Substituting this and Equation~(\ref{eqn:BCP-Zj4}) into Equation (\ref{eqn:BCP-xi}) yields
\begin{align*}
\xi(t) =  \int_0^t \sum_{l=1}^L  F_l  \delta_l(s) ds
     + p U(t), \,\,\, t \ge 0,
\end{align*}
proving the first claim.

To conclude the proof, let $\delta(\cdot)$ be a feasible policy for the workload formulation~(\ref{eqn:workload-obj})-(\ref{eqn:workload-L2}). Then define $Z_j(t)$ as in Equation~(\ref{eqn:Zjx}) and $\Delta_{l}(t)$ as in Equation~(\ref{eqn:BCP-Zj4}), and let for $j=1, \dots, J$ and $t \ge 0$, 
\begin{align*}
Y_j(t) = m_j \Big( Z_j(t) - X_j(t) - \hat{k}_j \sum_{l \in \mathcal{L}_j} \Delta_{l}(t) \hat{r}_{jl} 
     - \hat{k}_j \alpha_j t - \int_0^t (r_j + \gamma_j) Z_j(s) ds \Big).
\end{align*}
Similarly, for $j=J+1, \dots, J+\tilde{J}$ and $t \ge 0$, we let
\begin{align*}
Y_j(t) = m_j \Big( Z_j(t) - X_j(t) - \sum_{l \in \mathcal{L}_j} \Delta_{l}(t) \hat{\lambda}_{jl} 
     - \alpha_j t - \int_0^t \gamma_j Z_j(s) ds \Big).
\end{align*}
Now it is straightforward to verify that $Z$, $Y$, $\Delta$, $U$ satisfy~(\ref{eqn:BCP-Zj1})-(\ref{eqn:BCP-workconserving}). Similarly, substituting Equation~(\ref{eqn:BCP-Zj4}) into the following yields 
\begin{align*}
\int_0^t \sum_{l=1}^L F_l \delta_l(s) ds + pU(t)
   = \sum_{l=1}^L \Delta_l(t)  F_l + p U(t)
   = \xi(t).
\end{align*}
\hfill $\blacksquare$


\paragraph{Proof of Proposition~\ref{prop:driftrate}.}

Let $\delta(\cdot)$ be a feasible policy for the workload problem, and let $\theta(\cdot)$ be as defined in the statement of the proposition. It is easy to verify $\theta(t) \in [\theta_0, \theta_M]$ for $t \ge 0$. Thus, $\theta(\cdot)$ is feasible for the drift-rate control problem. Moreover, the workload (and idleness) processes in the formulations are identical. Also, recall that $c_1 < c_2 < \dots < c_M$. Thus, it follows from Equations~(\ref{eqn:thetam}) and~(\ref{eqn:cx}) that 
\begin{align*}
c(\theta(t)) \le \sum_{l=1}^L c_l \eta_l \delta_l(t) 
   = \sum_{l=1}^L  F_l  \delta_l(t) \,\, \mbox{ for } \,\, t\ge0,
\end{align*}
proving the first statement. To conclude the proof, let $\theta(\cdot)$ be a feasible policy for the drift-rate control problem, and define $\delta(\cdot)$ as in Equation~(\ref{eqn:driftrate}). Note that
\begin{align*}
\theta_0 + \sum_{l=1}^L \eta_l \delta_l(t) 
   = \theta_0 + \sum_{l=1}^L \sum_{j \in \mathcal{R}_l} (\hat{k}_j \hat{r}_{jl} / \mu_j) \delta_l(t)
   + \sum_{l=1}^L \sum_{j \in \mathcal{S}_l} (\hat{\lambda}_{jl} / \mu_j) \delta_l(t)
   = \theta(t), \,\, t\ge 0.
\end{align*}
Thus, the evolutions of the workload and the idleness processes are identical in the two formulations, and $\delta(\cdot)$ is feasible for the workload process. Then comparing Equations~(\ref{eqn:cjmlm})-(\ref{eqn:thetam})  and~(\ref{eqn:driftrate}) to Equation~(\ref{eqn:workload-obj}) reveals that
\begin{align*}
c(\theta(t)) =  \sum_{l=1}^L c_l \eta_l \delta_l(t) 
   = \sum_{l=1}^L  F_l \delta_l(t), \,\, t \ge 0.
\end{align*}
Consequently, the two policies have the same cost.
\hfill $\blacksquare$

\section{Proof of Theorem \ref{thm:policy}} \label{app:proofthm2}

\setcounter{equation}{0}
\renewcommand{\theequation}{\thesection\arabic{equation}}

To facilitate the proof of Theorem~\ref{thm:policy}, we first establish the following results.


\begin{lemma} \label{lem:thm2}
Under any admissible policy, we have that
\begin{align*}
& \mbox{(i) } \mathbb{E} \Big[ \int_0^t f'(Z(s)) dB(s) \Big] = 0, \,\, t\ge0. \\
& \mbox{(ii) } \lim_{t \rightarrow \infty} \frac{\mathbb{E} [ f(Z(t)) ]}{t} = 0.
\end{align*}
\end{lemma}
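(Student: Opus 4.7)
The plan is to derive both parts of the lemma from a single structural observation: the function $f'$ is uniformly bounded on $[0,\infty)$. Indeed, the boundary conditions in~(\ref{eqn:f2}) state that $f'(0)=-p$, $f'$ is increasing, and $\lim_{w\to\infty} f'(w)=0$, so $f'(w)\in[-p,0]$ for all $w\ge 0$, and therefore $|f(w)|\le pw+|f(0)|$ for $w\ge 0$.

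For part (i), the uniform bound $|f'(Z(s))|\le p$ immediately yields $\mathbb{E}\!\left[\int_0^t (f'(Z(s)))^2\,ds\right]\le p^2 t<\infty$. Hence the stochastic integral $M(t)=\int_0^t f'(Z(s))\,dB(s)$ is a square-integrable (true) martingale rather than merely a local martingale, so $\mathbb{E}[M(t)]=M(0)=0$. This step is essentially mechanical once the boundedness of $f'$ is in hand.

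For part (ii), from $|f(Z(t))|\le p\,Z(t)+|f(0)|$ it suffices to show that $\mathbb{E}[Z(t)]$ is uniformly bounded in $t$; then $\mathbb{E}[f(Z(t))]/t\le (p\,\mathbb{E}[Z(t)]+|f(0)|)/t\to 0$. I would establish uniform boundedness of the second moment via a Lyapunov argument using $V(w)=w^2$. Applying It\^o's formula to $V(Z(t))$ in the workload SDE~(\ref{eqn:driftrate-W1}) gives
\begin{equation*}
Z(t)^2 = Z(0)^2 + \int_0^t\!\bigl[2Z(s)\theta(Z(s))-2\kappa Z(s)^2+\sigma^2\bigr]ds + \int_0^t 2Z(s)\,dX(s) + \int_0^t 2Z(s)\,dU(s),
\end{equation*}
where the reflection integral $\int_0^t 2Z(s)\,dU(s)$ vanishes because $V'(0)=0$ and $U$ only increases when $Z=0$. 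Taking expectations (the martingale $\int 2Z\,dX$ is handled by a standard stopping argument and the bound $|f'|\le p$ as in part (i)), the admissibility condition $\theta(z)\le\bar\theta<0$ for $z\ge\bar z$ bounds $2z\theta(z)\le 2\bar z\max(|\theta_0|,|\theta_L|)=:A$ uniformly in $z\ge 0$. Setting $\phi(t)=\mathbb{E}[Z(t)^2]$, one obtains $\phi'(t)\le A+\sigma^2-2\kappa\phi(t)$, and Gr\"onwall yields $\phi(t)\le\phi(0)\vee (A+\sigma^2)/(2\kappa)$, so $\mathbb{E}[Z(t)]\le\sqrt{\phi(t)}$ is uniformly bounded.

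The main obstacle is the reflection term: without care, the Skorohod boundary push $U$ could make $\mathbb{E}[Z(t)]$ grow with $t$ if one used the linear Lyapunov function $V(w)=w$. The choice $V(w)=w^2$ is the key trick, since $V'(0)=0$ precisely cancels the reflection contribution. A secondary subtlety is justifying the vanishing expectation of the It\^o integral $\int 2Z\,dX$, which is handled by localization and the uniform-in-$t$ second moment bound derived in the same step, closing the loop.
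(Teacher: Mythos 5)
Your proof is correct and follows the same overall structure as the paper's: part (i) is identical (the bound $f'(w)=v(w)-p\in[-p,0]$ makes the integrand square-integrable, so the stochastic integral is a true martingale), and part (ii) likewise reduces the claim to the linear bound $|f(Z(t))|\le p\,Z(t)+|f(0)|$ together with sublinear growth of $\mathbb{E}[Z(t)]$. The one substantive difference is that the paper simply asserts that $\mathbb{E}[Z(t)]/t\to 0$ under any admissible policy, whereas you actually prove the stronger statement that $\sup_t\mathbb{E}[Z(t)^2]<\infty$ via the Lyapunov function $V(w)=w^2$, using $V'(0)=0$ to kill the reflection term and the mean-reverting drift $-\kappa W$ (together with the admissibility bound $\theta(z)\le\bar\theta<0$ for large $z$, which is not even strictly needed here since $\kappa>0$ already forces stability) to close the differential inequality. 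This fills in a step the paper leaves implicit, at the cost of a standard localization argument for the It\^o integral $\int 2Z\,dX$; both routes are sound, and yours is the more self-contained.
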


\begin{proof}
To establish part (i), it suffices to show that 
\begin{align}
\mathbb{E} \Big[ \int_0^t f'(Z(s))^2 ds \Big] < \infty \,\, \mbox{for} \,\, t > 0, \label{eqn:lem:thm2.1}
\end{align}
see \citet{harrison-2013}. Also recall from Equations~(\ref{eqn:f1})-(\ref{eqn:v2}) that $f'(z) = v(z) - p$ and $v(z) \in[0, p]$ for $z \ge 0$. Thus, $[f'(z)]^2 \le p^2$ for $z \ge 0$, proving (\ref{eqn:lem:thm2.1}).

To prove part (ii), first note that under any admissible policy $\mathbb{E} Z(t)/t \rightarrow 0$ as $t \rightarrow \infty$. Also note that 
\begin{align*}
\mathbb{E}f(Z(t)) &= \mathbb{E} \int_0^{Z(t)} (v(Z(s)) - p) ds + \mathbb{E} f(Z(0)) \\
|\mathbb{E}f(Z(t))| &\le \mathbb{E} \int_0^{Z(t)} |v(Z(s)) - p| ds + \mathbb{E} f(Z(0)) \\
&\le 2p \, \mathbb{E} Z(t) + \mathbb{E} f(Z(0)),
\end{align*}
where the last inequality follows because $v(z) \in [0, p ]$ for $ z \ge 0$. Thus, we conclude
\begin{align*}
\frac{|\mathbb{E}f(Z(t))|}{t} \le  2p \, \frac{\mathbb{E}Z(t)}{t}
   +  \frac{\mathbb{E}f(Z(0))}{t},
\end{align*}
which yields part (ii).
\end{proof}
\vspace{-25pt}


\paragraph{Proof of Theorem~\ref{thm:policy}.}

Note from Equations~(\ref{eqn:f1})-(\ref{eqn:f2}) and~(\ref{eqn:thetascand}) that the candidate policy satisfies the following:
\begin{align}
\beta^* = \frac{1}{2}\sigma^2 f''(z) + (\theta^*(z) - \kappa z) f'(z) + c(\theta^*(z)). \label{eqn:proofthm2.1}
\end{align}
Similarly, for an arbitrary admissible policy $\theta(\cdot)$ we have
\begin{align}
\beta^* \le \frac{1}{2}\sigma^2 f''(z) + (\theta(z) - \kappa z) f'(z) + c(\theta(z)). \label{eqn:proofthm2.2}
\end{align}
Moreover, for any admissible policy $\theta(\cdot)$ , applying Ito's lemma to $f(Z(t))$ gives
\begin{align}
f(Z(t)) - f(Z(0)) =& \int_0^t \Big[ (\theta(Z(s)) - \kappa Z(s)) f'(Z(s)) 
   + \frac{\sigma^2}{2} f''(Z(s)) \Big] ds \nonumber \\
&+ \int_0^t \sigma f'(Z(s)) dB(s) + f'(0) U(t), \label{eqn:proofthm2.3}
\end{align}
see Chapters 4 and 6 of \citet{harrison-2013}. By taking the expectations of both sides of Equation~(\ref{eqn:proofthm2.3}), it follows from Lemma~\ref{lem:thm2} that
\begin{align}
\mathbb{E}[f(Z(t))] - \mathbb{E}[f(Z(0))] =& \mathbb{E} \int_0^t [ (\theta(Z(s)) - \kappa Z(s)) 
   + \frac{\sigma^2}{2} f''(Z(s))] ds + f'(0) \mathbb{E}[U(t)]. \label{eqn:proofthm2.4}
\end{align}

For any admissible policy $\theta(\cdot)$, combining Equations~(\ref{eqn:proofthm2.2}) and~(\ref{eqn:proofthm2.4}) gives
\begin{align*}
\mathbb{E}[f(Z(t))] - \mathbb{E}[f(Z(0))] \ge
    \mathbb{E} \Big[ \int_0^t (\beta^* - c(\theta(Z(s)))) ds \Big] - p  \mathbb{E} U(t).
\end{align*}
Rearranging the terms gives
\begin{align*}
\mathbb{E} \Big[ \int_0^t c(\theta(Z(s))) ds + p U(t) \Big]
   \ge \beta^* t + \mathbb{E}[f(Z(0))] - \mathbb{E}[f(Z(t))].
\end{align*}
Combining this with Lemma~\ref{lem:thm2} gives
\begin{align*}
\underline{\lim}_{t \rightarrow \infty} \frac{1}{t}
   \mathbb{E} \Big[ \int_0^t c(\theta(Z(s))) ds + p U(t) \Big] \ge \beta^*.
\end{align*}
Similarly, for candidate policy $\theta^*(\cdot)$, combining Equations~(\ref{eqn:proofthm2.1}) and~(\ref{eqn:proofthm2.4}) gives 
\begin{align*}
\mathbb{E} \Big[ \int_0^t c(\theta(Z(s))) ds + p U(t) \Big]
   = \beta^*t + \mathbb{E}[f(Z(0))] - \mathbb{E}[f(Z(t))].
\end{align*}
It follows from Lemma~\ref{lem:thm2} that
\begin{align}
\lim_{t \rightarrow \infty} \frac{1}{t} \mathbb{E} \Big[ \int_0^t c(\theta(Z(s))) ds
   + p U(t) \Big] = \beta^*.
\end{align}
Therefore, the candidate policy is optimal and its long-run average cost is $\beta^*$.
\hfill $\blacksquare$.

\section{Proof of Proposition \ref{prop:valuefunc}} \label{app:proofprop3}

\setcounter{equation}{0}
\renewcommand{\theequation}{\thesection\arabic{equation}}

To facilitate the proof of Proposition~\ref{prop:valuefunc}, we first establish the following results.

\begin{lemma} \label{lem:integral-exp}
We have that 
\begin{align}
& \mbox{(i) } \int_{\tau}^x \exp \left\{ \frac{2 \theta z - \kappa z^2}{\sigma^2} \right\} dz = \frac{\sigma \sqrt{\pi}}{\sqrt{\kappa}} 
\exp \left\{ \frac{\theta^2}{\kappa\sigma^2} \right\} \left[ \Phi \left( \frac{x - \theta/\kappa }{\sigma / \sqrt{2\kappa}} \right) - \Phi \left( \frac{\tau- \theta/\kappa}{\sigma / \sqrt{2\kappa}} \right) \right], \label{eqn:lem:integrate-exp}  \\
& \mbox{(ii) }\int_{\tau}^x z  \exp \left\{ \frac{2 \theta z - \kappa z^2}{\sigma^2} \right\} dz  =  \frac{\theta \Halfspace \sigma \sqrt{\pi}}{\kappa\sqrt{\kappa}} \exp \left\{ \frac{\theta^2}{\kappa\sigma^2} \right\} 
\left[ \Phi \left( \frac{x - \theta/\kappa }{\sigma / \sqrt{2\kappa}} \right) - \Phi \left( \frac{\tau- \theta/\kappa}{\sigma / \sqrt{2\kappa}} \right) \right] \nonumber \\
& \ \ \ \ \ \ + \frac{\sigma^2}{2\kappa} \left[ \exp \left\{ \frac{2 \theta \tau - \kappa \tau^2}{\sigma^2} \right\} - \exp \left\{ \frac{2 \theta x - \kappa x^2}{\sigma^2} \right\} \right]. \label{eqn:lem:integrate-exp2}
\end{align}
\end{lemma}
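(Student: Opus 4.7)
The plan is to prove both identities by standard techniques for Gaussian integrals: completing the square in the exponent for part (i), and then using part (i) together with an integration-by-inspection for part (ii). There are no real obstacles here—the work is essentially bookkeeping—so I will focus on getting the change-of-variables correct.

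For part (i), I would first complete the square in the exponent, writing
\begin{equation*}
2\theta z - \kappa z^2 = -\kappa\bigl(z - \theta/\kappa\bigr)^2 + \theta^2/\kappa,
\end{equation*}
so that
\begin{equation*}
\exp\!\left\{\frac{2\theta z-\kappa z^2}{\sigma^2}\right\} = \exp\!\left\{\frac{\theta^2}{\kappa\sigma^2}\right\}\exp\!\left\{-\frac{\kappa(z-\theta/\kappa)^2}{\sigma^2}\right\}.
\end{equation*}
Next I would perform the substitution $u = (z - \theta/\kappa)/(\sigma/\sqrt{2\kappa})$, under which $dz = (\sigma/\sqrt{2\kappa})\,du$ and the $\kappa(z-\theta/\kappa)^2/\sigma^2$ term becomes $u^2/2$. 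The integral reduces to $\frac{\sigma}{\sqrt{2\kappa}}\exp\{\theta^2/(\kappa\sigma^2)\}\int_{u_\tau}^{u_x} e^{-u^2/2}\,du$, and recognizing the last integral as $\sqrt{2\pi}\,[\Phi(u_x)-\Phi(u_\tau)]$ yields the claimed expression, since $\sigma/\sqrt{2\kappa}\cdot\sqrt{2\pi}=\sigma\sqrt{\pi}/\sqrt{\kappa}$.

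For part (ii), I would split $z = (z-\theta/\kappa) + \theta/\kappa$, so that
\begin{equation*}
\int_\tau^x z\,\exp\!\left\{\frac{2\theta z-\kappa z^2}{\sigma^2}\right\}dz
= \int_\tau^x (z-\theta/\kappa)\exp\!\left\{\frac{2\theta z-\kappa z^2}{\sigma^2}\right\}dz + \frac{\theta}{\kappa}\int_\tau^x \exp\!\left\{\frac{2\theta z-\kappa z^2}{\sigma^2}\right\}dz.
\end{equation*}
The second term is handled by part (i), contributing the $\Phi$-difference piece with prefactor $(\theta\sigma\sqrt{\pi})/(\kappa\sqrt{\kappa})$. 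The key observation for the first term is that the integrand is a perfect derivative: since
\begin{equation*}
\frac{d}{dz}\exp\!\left\{\frac{2\theta z-\kappa z^2}{\sigma^2}\right\}=-\frac{2\kappa}{\sigma^2}\,(z-\theta/\kappa)\exp\!\left\{\frac{2\theta z-\kappa z^2}{\sigma^2}\right\},
\end{equation*}
the first integral evaluates directly to $\frac{\sigma^2}{2\kappa}\bigl[\exp\{(2\theta\tau-\kappa\tau^2)/\sigma^2\}-\exp\{(2\theta x-\kappa x^2)/\sigma^2\}\bigr]$. Combining the two pieces gives exactly the formula in (ii). The only step requiring care is keeping track of the sign from the derivative identity and the order of the boundary terms, but this is routine.
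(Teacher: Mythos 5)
Your proposal is correct and follows essentially the same route as the paper: complete the square, substitute $u=(z-\theta/\kappa)/(\sigma/\sqrt{2\kappa})$, and identify the resulting Gaussian integral with a difference of normal CDFs. The only (cosmetic) difference is in part (ii), where you recognize $(z-\theta/\kappa)\exp\{(2\theta z-\kappa z^2)/\sigma^2\}$ as a perfect derivative directly in the original variable, whereas the paper reaches the same boundary-term expression via a second substitution $t=u^2/2$ after the change of variables; both computations are identical in substance.
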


\begin{proof}
To establish part (i), note that 
\begin{align}
&\exp  \left\{ \frac{2 \theta z - \kappa z^2}{\sigma^2} \right\} = \exp \left\{ \frac{2 \theta z - \kappa z^2 - \theta^2/\kappa + \theta^2/\kappa}{\sigma^2}  \right\}, \nonumber \\
& \ \ \ \ = \exp \left\{ -\left(  \frac{ \sqrt{\kappa} z - \theta / \sqrt{\kappa} }{\sigma}  \right)^2 +  \frac{ \theta^2}{\kappa\sigma^2}  \right\} 
= \exp \left\{ - \frac{1}{2} \left(  \frac{ z - \theta / \kappa }{\sigma/\sqrt{2\kappa}}  \right)^2\right\} \exp \left\{\frac{ \theta^2}{\kappa\sigma^2}  \right\}. \label{eqn:lemma-result}
\end{align}
Using (\ref{eqn:lemma-result}), we have that
\begin{align*}
& \int_{\tau}^x \exp \left\{ \frac{2 \theta z - \kappa z^2}{\sigma^2} \right\} dz = \exp \left\{\frac{ \theta^2}{\kappa\sigma^2}  \right\}  \int_{\tau}^x  \exp \left\{ - \frac{1}{2} \left(  \frac{ z - \theta / \kappa }{\sigma/\sqrt{2\kappa}}  \right)^2\right\} dz,  \\
& \ \ \ \ = \exp \left\{\frac{ \theta^2}{\kappa\sigma^2}  \right\} \frac{\sigma/\sqrt{\pi}}{\sqrt{\kappa}}
 \int_{\tau}^x  \frac{1}{(\sigma/\sqrt{2\kappa})\sqrt{2\pi}} \exp \left\{ - \frac{1}{2} \left(  \frac{ z - \theta / \kappa }{\sigma/\sqrt{2\kappa}}  \right)^2\right\} dz.
\end{align*}
The integral on the right hand side can be represented as the difference of two normal cdfs with mean $\theta / \kappa$ and standard deviation $\sigma/\sqrt{2\kappa}$, therefore proving (\ref{eqn:lem:integrate-exp}).

To prove part (ii), using (\ref{eqn:lemma-result}) we have that
\begin{align*}
\int_{\tau}^x z  \exp \left\{ \frac{2 \theta z - \kappa z^2}{\sigma^2} \right\} dz & = \exp \left\{\frac{ \theta^2}{\kappa\sigma^2}  \right\}   \int_{\tau}^x  z \exp \left\{ - \frac{1}{2} \left(  \frac{ z - \theta / \kappa }{\sigma/\sqrt{2\kappa}}  \right)^2\right\} dz. 
\end{align*}
Using the change of variable $u = \frac{ z - \theta / \kappa }{\sigma/\sqrt{2\kappa}}$ ($z = u \sigma / \sqrt{2\kappa} + \theta/\kappa$ and $dz = \sigma / \sqrt{2\kappa} du$) we get		
\begin{align}
& \int_{\tau}^x z   \exp \left\{ \frac{2 \theta z - \kappa z^2}{\sigma^2} \right\} dz  = \exp \left\{\frac{ \theta^2}{\kappa\sigma^2}  \right\} 
\int_{\frac{ \tau- \theta / \kappa }{\sigma/\sqrt{2\kappa}} }^\frac{ x - \theta / \kappa }{\sigma/\sqrt{2\kappa}} 
\left(u \frac{\sigma}{\sqrt{2\kappa}} + \frac{\theta}{\kappa}\right) \exp \left\{ - \frac{ u^2}{2} \right\} \frac{\sigma}{\sqrt{2\kappa}} du, \nonumber \\
& \ \ = \exp \left\{\frac{ \theta^2}{\kappa\sigma^2}  \right\} \frac{\sigma^2}{2\kappa}
\int_{\frac{ \tau- \theta / \kappa }{\sigma/\sqrt{2\kappa}} }^\frac{ x - \theta / \kappa }{\sigma/\sqrt{2\kappa}} 
u  \exp \left\{ - \frac{ u^2}{2} \right\}  du + \exp \left\{\frac{ \theta^2}{\kappa\sigma^2}  \right\} \frac{\sigma \theta}{\kappa\sqrt{2\kappa}} 
\int_{\frac{ \tau- \theta / \kappa }{\sigma/\sqrt{2\kappa}} }^\frac{ x - \theta / \kappa }{\sigma/\sqrt{2\kappa}} 
  \exp \left\{ - \frac{ u^2}{2} \right\}  du. \label{eqn:reference}
\end{align}
To complete the proof of part (ii), we rewrite the second term in (\ref{eqn:reference}) as
\begin{align*}
& \exp \left\{\frac{ \theta^2}{\kappa\sigma^2}  \right\} \frac{\sigma \theta \sqrt{2\pi}}{\kappa\sqrt{2\kappa}} 
\int_{\frac{ \tau- \theta / \kappa }{\sigma/\sqrt{2\kappa}} }^\frac{ x - \theta / \kappa }{\sigma/\sqrt{2\kappa}} 
  \frac{1}{\sqrt{2\pi}} \exp \left\{ - \frac{ u^2}{2} \right\}  du.  
\end{align*}	
Next, the integral can be stated as the difference of two standard normal cdfs as follows: 
\begin{align}
\exp \left\{\frac{ \theta^2}{\kappa\sigma^2}  \right\} \frac{\sigma \sqrt{\pi} \theta}{\kappa\sqrt{\kappa}} \left[ \Phi \left( \frac{x - \theta/\kappa }{\sigma / \sqrt{2\kappa}} \right) - \Phi \left( \frac{\tau- \theta/\kappa}{\sigma / \sqrt{2\kappa}} \right) \right] \label{eqn:reference1}.
\end{align}	

Moreover, using a second change of variable $t = u^2/2$, the first term can be written as
\begin{align}
& \exp \left\{\frac{ \theta^2}{\kappa\sigma^2}  \right\} \frac{\sigma^2}{2\kappa}  \int_{\frac{1}{2}\left(\frac{ \tau- \theta / \kappa }{\sigma/\sqrt{2\kappa}}\right)^2 }^{ \frac{1}{2}\left(\frac{x - \theta / \kappa }{\sigma/\sqrt{2\kappa}}\right)^2} \exp \left\{ - t \right\}  dt  \nonumber \\
& \ \ \ \ = \exp \left\{\frac{ \theta^2}{\kappa\sigma^2}  \right\} \frac{\sigma^2}{2\kappa} 
\left[    -\exp \left\{-\frac{1}{2}\left(\frac{ x - \theta / \kappa }{\sigma/\sqrt{2\kappa}}\right)^2  \right\} +\exp \left\{-\frac{1}{2}\left(\frac{ \tau- \theta / \kappa }{\sigma/\sqrt{2\kappa}}\right)^2  \right\} \right] \nonumber  \\
& \ \ \ \ = \frac{\sigma^2}{2\kappa} \left[ \exp \left\{ \frac{2 \theta \tau - \kappa \tau^2}{\sigma^2} \right\} - \exp \left\{ \frac{2 \theta x - \kappa x^2}{\sigma^2} \right\} \right]. \label{eqn:reference2}
\end{align}
Combining (\ref{eqn:reference1}) and (\ref{eqn:reference2}) completes the proof of part (ii).
\end{proof}
\vspace{-30pt}


\paragraph{Proof of Proposition~\ref{prop:valuefunc}.}
For notational convenience, we define $\tau_0 = \infty$. 
Also for $l=L+1$, recall that $\tau_{L+1} = 0$ and $\hat{c}_{L+1}=p$. 
For $l=1, \ldots, L+1$, we consider $x \in [\tau_l, \tau_{l-1})$ and observe that $\phi(p - v(x)) = \theta_{l-1} (p - v(x)) - c(\theta_{l-1})$  
(see Equation (\ref{eqn:phiapp}) in Appendix \ref{app:equivworkload}). Substituting this into the Bellman Equation in (\ref{eqn:bellfinal1})-(\ref{eqn:bellfinal2}), we arrive at the following:  
\begin{align}
\beta = \frac{1}{2} \sigma^2 v'(x) + \kappa x (p-v(x)) - [\theta_{l-1} (p-v(x)) - c(\theta_{l-1})] \label{eqn:ivp-proof2},
\end{align}
such that $v(\tau_l) = p - \hat{c}_l$. 

Rearranging the terms in (\ref{eqn:ivp-proof2}) gives
\begin{align*}
v'(x) + \left(\frac{2\theta_{l-1} - 2 \kappa x}{\sigma^2}\right) v(x) =& \frac{2}{\sigma^2} [\beta + p \theta_{l-1} - c(\theta_{l-1})] - \frac{2}{\sigma^2} \kappa p x.
\end{align*}
Multiplying both sides with the integrating factor $\exp \left\{ (2 \theta_{l-1} x - \kappa x^2 ) / \sigma^2 \right\}$ yields:
\begin{align*}
\left[ \exp \left\{ \frac{2 \theta_{l-1} x - \kappa x^2}{\sigma^2} \right\} v(x) \right]'
     =& \frac{2(\beta + p\theta_{l-1} - c(\theta_{l-1}))}{\sigma^2} \exp \left\{ \frac{2 \theta_{l-1} x - \kappa x^2}{\sigma^2} \right\} \\
& \ \ \ \ \		- \frac{2 \kappa p x}{\sigma^2}\exp \left\{ \frac{2 \theta_{l-1} x - \kappa x^2}{\sigma^2} \right\}   
\end{align*}

Integrating both sides of the equation over $[\tau_l, x]$  and using $v(\tau_l) = p-\hat{c}_l$ yields 
\begin{align*}
& \exp \left\{ \frac{2 \theta_{l-1} x - \kappa x^2}{\sigma^2} \right\} v(x) -  \exp \left\{ \frac{2 \theta_{l-1} \tau_l - \kappa \tau_l^2}{\sigma^2} \right\} (p-\hat{c}_l) \\
& \ \ \ \ \ = \frac{2(\beta + p\theta_{l-1} - c(\theta_{l-1}))}{\sigma^2} \int_{\tau_l}^x \exp \left\{ \frac{2 \theta_{l-1} z - \kappa z^2}{\sigma^2} \right\} dz 
 - \frac{2\kappa p }{\sigma^2} \int_{\tau_l}^x  z \exp \left\{ \frac{2 \theta_{l-1} z - \kappa z^2}{\sigma^2} \right\} dz 
\end{align*}

Using results from Lemma \ref{lem:integral-exp} for $\tau=\tau_l$ to replace integrals on the right hand side gives
\begin{align*}
& \exp \left\{ \frac{2 \theta_{l-1} x - \kappa x^2}{\sigma^2} \right\} v(x) -  \exp \left\{ \frac{2 \theta_{l-1} \tau_l - \kappa \tau_l^2}{\sigma^2} \right\} (p-\hat{c}_l) \\
&  	\ \ \ \ \ = \frac{2(\beta + p\theta_{l-1} - c(\theta_{l-1}))\sqrt{\pi}}{\sigma \sqrt{\kappa}} 
\exp \left\{ \frac{\theta_{l-1}^2}{\kappa\sigma^2} \right\} \left[ \Phi \left( \frac{x - \theta/\kappa }{\sigma / \sqrt{2\kappa}} \right) - \Phi \left( \frac{\tau- \theta/\kappa}{\sigma / \sqrt{2\kappa}} \right) \right]\\
&  	\ \ \ \ \ - \frac{2p \Halfspace \theta_{l-1} \Halfspace \sqrt{\pi}}{\sigma\Halfspace\sqrt{\kappa}} 
\exp \left\{ \frac{\theta_{l-1}^2}{\kappa\sigma^2} \right\} 
\left[ \Phi \left( \frac{x - \theta/\kappa }{\sigma / \sqrt{2\kappa}} \right) - \Phi \left( \frac{\tau- \theta/\kappa}{\sigma / \sqrt{2\kappa}} \right) \right]
 \\
&   \ \ \ \ \ - p \left[ \exp \left\{ \frac{2 \theta_{l-1} \tau_l - \kappa \tau_l^2}{\sigma^2} \right\} - \exp \left\{ \frac{2 \theta_{l-1} x - \kappa x^2}{\sigma^2} \right\} \right]
\end{align*}

Rearranging terms on the right hand side and yields the following:
\begin{align*}
& \exp \left\{ \frac{2 \theta_{l-1} x - \kappa x^2}{\sigma^2} \right\} v(x) -  \exp \left\{ \frac{2 \theta_{l-1} \tau_l - \kappa \tau_l^2}{\sigma^2} \right\} (p-\hat{c}_l) \\
&   	\ \ \ \ \ = \frac{2(\beta  - c(\theta_{l-1}))\sqrt{\pi}}{\sigma \sqrt{\kappa}}  
\exp \left\{ \frac{\theta_{l-1}^2}{\kappa\sigma^2} \right\} \left[ \Phi \left( \frac{x - \theta_{l-1}/\kappa }{\sigma / \sqrt{2\kappa}} \right) - \Phi \left( \frac{- \theta_{l-1}/\kappa}{\sigma / \sqrt{2\kappa}} \right) \right] \\
&   	\ \ \ \ \ - p \left[ \exp \left\{ \frac{2 \theta_{l-1} \tau_l - \kappa \tau_l^2}{\sigma^2} \right\} - \exp \left\{ \frac{2 \theta_{l-1} x - \kappa x^2}{\sigma^2} \right\} \right].
\end{align*}

Dividing both sides of the equation with $\exp \left\{ (2 \theta_{l-1} x - \kappa x^2) / \sigma^2 \right\}$ and solving for $v(x)$ yields
\begin{align*}
& v(x) =  \exp \left\{ \frac{2 \theta_{l-1} (\tau_l - x_l) - \kappa (\tau_l^2-x^2)}{\sigma^2} \right\} (p-\hat{c}_l) \\
&   	\ \ \ \ \ + \frac{2(\beta  - c(\theta_{l-1}))\sqrt{\pi}}{\sigma \sqrt{\kappa}}  \exp \left\{ \frac{(\kappa x- \theta_{l-1})^2}{\kappa\sigma^2} \right\} \left[ \Phi \left( \frac{x - \theta_{l-1}/\kappa }{\sigma / \sqrt{2\kappa}} \right) - \Phi \left( \frac{- \theta_{l-1}/\kappa}{\sigma / \sqrt{2\kappa}} \right) \right] \\
&   	\ \ \ \ \ - p \left[ \exp \left\{ \frac{2 \theta_{l-1} (\tau_l - x_l) - \kappa (\tau_l^2-x^2)}{\sigma^2} \right\} - 1 \right].
\end{align*}

Rearranging gives the desired result: $v(x) = u_l(x)$, and $\tau_{l-1} = v^{-1}(p-\hat{c}_{l-1})$.
\hfill $\blacksquare$

\section{Auxiliary Functions $\phi$ and $\psi$} \label{app:equivworkload}

This section further characterizes function $\phi$ and $\psi$. To this end, recall that $c(\theta_0) = 0$ and 
\begin{align*}
c(x) = \sum_{i=1}^{m-1} \hat{c}_i \hat{\eta}_i + \hat{c}_m (x - \theta_{m-1}), \,\,
     \theta_{m-1} < x \le \theta_m, \,\, m = 1, \dots, M.
\end{align*}
Also recall that for $y \in \mathbb{R}$, $\phi(y) = \sup_{x \in A} \{ yx - c(x) \}$ and $\psi(y) = \inf \arg\max_{x \in A} \{ yx - c(x) \}$.
It is straightforward to show that
\begin{align}
\psi(y) &= \left\{ 
   \begin{array}{ll}
      \theta_0, & \mbox{if } \,\, y \le \hat{c}_1, \\
      \theta_{m-1}, & \mbox{if } \,\, \hat{c}_{m-1} < y \le \hat{c}_m, \,\, m = 2, \dots, M, \\ 
      \theta_M, & \mbox{if } \,\, y > \hat{c}_M, 
   \end{array}
\right. \label{eqn:psiapp}
\end{align}
and
\begin{align}
\phi(y) &= \left\{ 
   \begin{array}{ll}
      \theta_0 y , & \mbox{if } \,\, y \le \hat{c}_1, \\
      \theta_{m-1} y - c(\theta_{m-1}), & \mbox{if } \,\, \hat{c}_{m-1} < y \le \hat{c}_m, \,\, m = 2, \dots, M, \\ 
      \theta_M y - c(\theta_M), & \mbox{if } \,\, y > \hat{c}_M, 
   \end{array}
\right. \label{eqn:phiapp}
\end{align}
Figures~\ref{fig:psiphi}(a) and~\ref{fig:psiphi}(b) illustrate functions $\psi(\cdot)$ and $\phi(\cdot)$, respectively, for the cost function displayed in Figure~\ref{fig:cx}. Also, it follows from Equations~(\ref{eqn:psiapp})-(\ref{eqn:phiapp}) that
\begin{align*}
\phi(y) = \int_0^y \psi(u) du, \,\, y \in \mathbb{R}.
\end{align*}
\begin{figure}[htbp]
\centering
\subfigure[An illustrative $\psi(\cdot)$ function with $M=4$.]{\label{fig:psi}\includegraphics[scale=0.6]{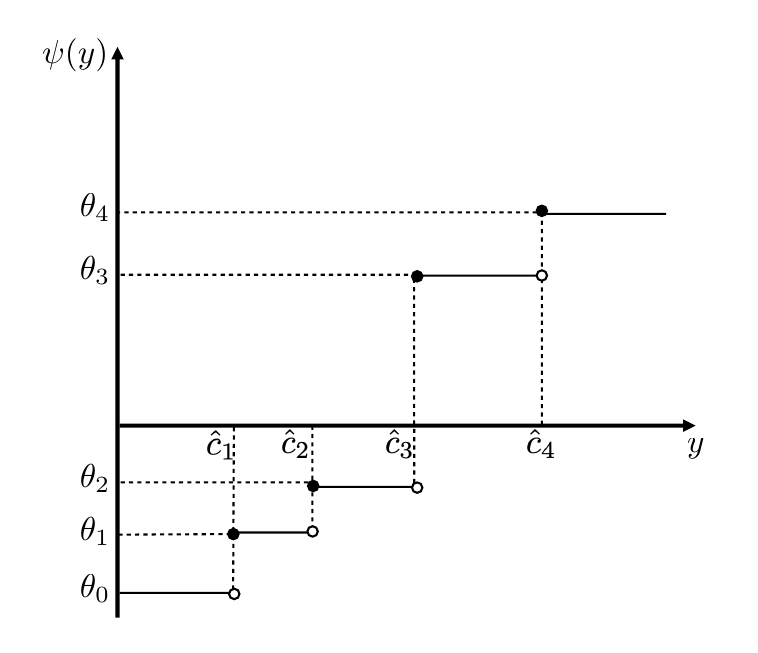}} 
\subfigure[An illustrative $\phi(\cdot)$ function with $M=4$.]{\label{fig:phi}\includegraphics[scale=0.6]{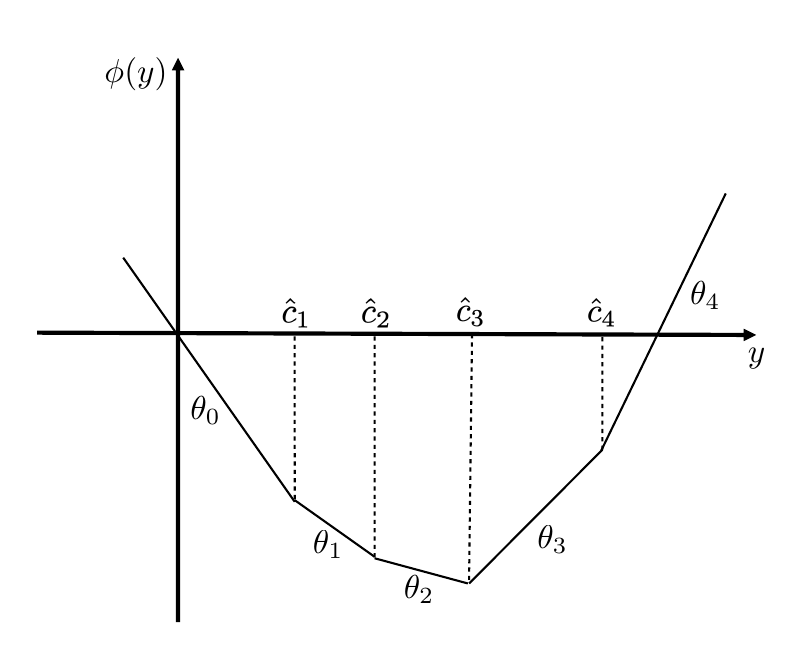}} 
\caption{Illustrative $\psi(\cdot)$ and $\phi(\cdot)$ functions.}
\label{fig:psiphi}
\end{figure}

\section{Solution to the Bellman Equation} \label{sec:solutionbellman}

To solve the Bellman equation (\ref{eqn:bellfinal1})-(\ref{eqn:bellfinal2}), we proceed as follows: First, we consider an initial value problem parameterized by $\beta$, denoted by IVP($\beta$), that is closely related to the Bellman equation. Denoting its solution by $v_{\beta}(\cdot)$, we then study its properties as $\beta$ varies. Ultimately, we show that there exists a unique $\beta^*$ such that $(\beta^*, v_{\beta}^*)$ solve the Bellman equation.

To this end, letting $\underline{\beta} = - \inf \phi(y) > 0$, consider the following initial value problem, denoted by IVP($\beta$), for $\beta > \underline{\beta}$:
\begin{align}
&\beta = \frac{1}{2} \sigma^2 v'(x) + x  \kappa (p - v(x)) - \phi (p - v(x)), \,\, x \ge 0, \label{eqn:ivp1} \\
&\mbox{subject to } v(0) =0. \label{eqn:ivp2}
\end{align}

Lemmas~\ref{lem:lipschitz}, \ref{lem:ivpcont}, and \ref{lem:gronwall} provide auxilliary results. The proof of Lemma~\ref{lem:lipschitz} is straightforward, and the proof of Lemma~\ref{lem:ivpcont} is standard\footnote{See for example Theorem 1.1.1 of \citet{Keller_2018} for a similar result on a bounded interval, whose extension to the positive real line is straightforward.} in the literature. Hence, they are omitted. For a proof of Lemma~\ref{lem:gronwall}, see, for example \citet{ata-etal-2019}. 

%
%
\begin{lemma} \label{lem:lipschitz}
The function $\phi(\cdot)$ is Lipschitz continuous with Lipschitz constant $C_L = \max \{ | \theta_k | : k = 0, 1, ..., K \}$.
\end{lemma}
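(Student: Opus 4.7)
The plan is to exploit the representation of $\phi$ already recorded in Online Supplement Section~E, namely $\phi(y) = \int_0^y \psi(u)\,du$, together with the explicit piecewise-constant description of $\psi$ given in Equation~(\ref{eqn:psiapp}). Once this representation is in hand, Lipschitz continuity reduces to a uniform bound on $|\psi|$.

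First, I would recall from Appendix~E that $\psi$ is a left-continuous step function whose range lies in the finite set $\{\theta_0,\theta_1,\dots,\theta_L\}$ (cf.~Equation~(\ref{eqn:psiapp})). Consequently, for every $u \in \mathbb{R}$,
\[
|\psi(u)| \le \max_{0 \le k \le L} |\theta_k| \;=\; C_L.
\]
In particular, $\psi$ is bounded and measurable.

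Next, for any $y_1, y_2 \in \mathbb{R}$ with $y_1 \le y_2$, I would use the identity $\phi(y) = \int_0^y \psi(u)\,du$ established in Appendix~E to write
\[
|\phi(y_2) - \phi(y_1)| \;=\; \left| \int_{y_1}^{y_2} \psi(u)\,du \right| \;\le\; \int_{y_1}^{y_2} |\psi(u)|\,du \;\le\; C_L\,|y_2 - y_1|,
\]
which is exactly the claimed Lipschitz bound. (Equivalently, one can read this directly off Equation~(\ref{eqn:phiapp}): $\phi$ is continuous and piecewise affine, with slopes drawn from $\{\theta_0,\dots,\theta_L\}$, so its Lipschitz constant equals the largest slope magnitude.)

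There is no real obstacle here: the only substantive input is the representation $\phi' = \psi$ a.e.\ with $\psi$ taking finitely many values, both of which are already in hand from Appendix~E. The proof is therefore short, and the only thing to watch for is that the index in the lemma statement (the constant $C_L$) should refer to $k = 0,1,\ldots,L$ to match the notation $A = [\theta_0,\theta_L]$ used throughout Section~\ref{sec:solutionworkload}.
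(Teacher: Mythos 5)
Your proof is correct, and it is essentially the argument the paper has in mind: the authors omit the proof as ``straightforward,'' but the representation $\phi(y)=\int_0^y\psi(u)\,du$ with $\psi$ a step function valued in $\{\theta_0,\dots,\theta_L\}$ is exactly what Online Supplement Section~E records, and bounding $|\psi|$ by $\max_k|\theta_k|$ gives the Lipschitz constant immediately. Your closing remark is also right that the index ``$K$'' in the lemma statement is a notational slip for $L$ (the paper likewise drifts between $L$ and $M$ in the appendix), so the constant should be read as $\max\{|\theta_k| : k=0,1,\dots,L\}$.
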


%
%
\begin{lemma} \label{lem:ivpcont}
The initial value problem IVP($\beta$), has a unique continuously differentiable solution, denoted by $v_{\beta}(x)$, for $\beta > \underline{\beta}$.
\end{lemma}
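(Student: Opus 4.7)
The plan is to rewrite IVP($\beta$) as an explicit first-order ODE and then apply a standard Picard--Lindel\"of argument combined with an a priori bound to rule out blow-up. First I would solve Equation~(\ref{eqn:ivp1}) for $v'(x)$, obtaining
\begin{equation*}
v'(x) = F(x, v(x)), \quad \text{where } F(x, v) = \frac{2}{\sigma^2}\Bigl[\beta - \kappa x (p - v) + \phi(p - v)\Bigr],
\end{equation*}
with initial condition $v(0) = 0$. The restriction $\beta > \underline{\beta} = -\inf \phi$ will enter only at the end when we want $v$ to be increasing with range reaching $p$; for pure existence/uniqueness on $[0,\infty)$ it is not needed, but it is harmless to keep.

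Next I would verify the hypotheses of the Picard--Lindel\"of theorem. By Lemma~\ref{lem:lipschitz}, $\phi$ is Lipschitz with constant $C_L$, so for any compact rectangle $[0, X] \times [-M, M]$ the function $F$ is continuous in $(x,v)$ and Lipschitz in $v$ uniformly in $x$, with Lipschitz constant bounded by $(2/\sigma^2)(\kappa X + C_L)$. The classical Picard--Lindel\"of theorem then yields a unique continuously differentiable solution $v_{\beta}$ on some maximal interval $[0, x^*)$ with $x^* \le \infty$.

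The main obstacle is ruling out finite-time blow-up, i.e., showing $x^* = \infty$. For this I would derive an a priori bound on $|v_{\beta}(x)|$ on every compact subinterval $[0, X]$. Using Lipschitz continuity of $\phi$ we have $|\phi(p - v)| \le |\phi(p)| + C_L |v|$, which together with $|\kappa x (p - v)| \le \kappa X p + \kappa X |v|$ yields
\begin{equation*}
|F(x, v)| \le A(X) + B(X)\, |v|, \quad x \in [0, X],
\end{equation*}
for constants $A(X), B(X)$ depending only on $X$ and the problem data. Integrating and applying the Grönwall inequality (invoking Lemma~\ref{lem:gronwall}) gives $|v_{\beta}(x)| \le A(X)\, e^{B(X) X}/B(X)$ on $[0, X]$, which prevents escape from any bounded $v$-interval. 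Standard continuation of ODE solutions then forces $x^* = \infty$, and uniqueness extends globally since two solutions must agree on every compact subinterval by Picard--Lindel\"of. Continuous differentiability of $v_{\beta}$ on $[0, \infty)$ follows because $F(x, v_{\beta}(x))$ is continuous in $x$ (using continuity of $\phi$ and $v_{\beta}$), so the ODE identifies $v_{\beta}'$ with a continuous function.
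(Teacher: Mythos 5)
Your proof is correct and is precisely the standard Picard--Lindel\"of plus Gr\"onwall argument that the paper itself declines to write out, instead citing Theorem~1.1.1 of \citet{Keller_2018} and noting that the extension to the positive real line is straightforward; the global Lipschitz bound on $\phi$ from Lemma~\ref{lem:lipschitz} gives the linear growth of $F$ in $v$ that rules out finite-time blow-up, exactly as you argue. Your observation that $\beta > \underline{\beta}$ plays no role in existence and uniqueness (only in the later monotonicity analysis) is also accurate.
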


%
%
\begin{lemma} \label{lem:gronwall}
Let $h$ be a nonnegative function such that 
\begin{align}
h(x) \le C + A \int_0^x h(y) dy \mbox{ for } a \le x \le b \label{eqn:h1}
\end{align}
for some constants $A$ and $C$. Then
\begin{align}
h(x) \le C e^{A(x-a)}, \,\, x \in [a, b]. \label{eqn:h2}
\end{align}
\end{lemma}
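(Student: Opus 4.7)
The plan is to invoke the classical integrating-factor argument for Gronwall's inequality. Interpreting the hypothesis in the natural way (i.e.\ the integral runs from $a$ to $x$, so that the exponent $A(x-a)$ in the conclusion matches), I would first introduce the auxiliary function
\begin{equation*}
G(x) = C + A \int_a^x h(y)\, dy, \qquad x \in [a,b],
\end{equation*}
which by assumption satisfies $h(x) \le G(x)$ pointwise on $[a,b]$. Since $h$ is nonnegative and locally integrable (being bounded on $[a,b]$ by iteration of the hypothesis), $G$ is absolutely continuous with $G'(x) = A\, h(x)$ almost everywhere, and the pointwise bound $h \le G$ immediately gives the linear differential inequality $G'(x) \le A\, G(x)$.

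Next I would multiply through by the integrating factor $e^{-A(x-a)}$ to obtain
\begin{equation*}
\frac{d}{dx}\bigl[ e^{-A(x-a)} G(x) \bigr] = e^{-A(x-a)} \bigl( G'(x) - A\, G(x) \bigr) \le 0,
\end{equation*}
so the map $x \mapsto e^{-A(x-a)} G(x)$ is nonincreasing on $[a,b]$. Since $G(a) = C$, this yields $e^{-A(x-a)} G(x) \le C$, i.e.\ $G(x) \le C e^{A(x-a)}$. Combined with $h(x) \le G(x)$, the desired bound (\ref{eqn:h2}) follows directly.

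The proof involves no serious obstacle: the only subtlety is the sign/nonnegativity of $A$, which is tacitly assumed since otherwise the claimed exponential bound need not hold. If one prefers an argument that avoids differentiability considerations (for instance if $h$ is merely measurable), I would instead iterate the hypothesis in Picard fashion: substituting (\ref{eqn:h1}) into itself $n$ times produces
\begin{equation*}
h(x) \le C \sum_{k=0}^{n-1} \frac{[A(x-a)]^k}{k!} + A^n \int_a^x \!\!\cdots\! \int_a^{y_{n-1}}\! h(y_n)\, dy_n \cdots dy_1,
\end{equation*}
where the remainder is bounded by $\|h\|_\infty \,[A(b-a)]^n/n! \to 0$ and the finite sum converges to $C e^{A(x-a)}$, giving the same conclusion. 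Either route is short and self-contained, which is consistent with the excerpt's citation of the result as standard.
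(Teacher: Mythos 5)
Your proof is correct: this is the classical integrating-factor argument for Gronwall's inequality, and you rightly flag both the implicit assumption $A \ge 0$ and the typo in the statement (the lower limit of the integral should be $a$, not $0$, for the conclusion $Ce^{A(x-a)}$ to follow). The paper itself does not prove this lemma at all --- it cites it as standard and refers the reader to an external reference --- so there is no in-paper argument to compare against; your writeup (either the differential-inequality route or the Picard iteration fallback) is exactly the standard proof the citation points to.
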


%
%
The next lemma studies how $v_{\beta}(\cdot)$ varies with $\beta$, and is proved in Appendix~\ref{app:bellmanproofs}.
\begin{lemma} \label{lem:vincbeta}
For $\beta > \underline{\beta}$, $v_{\beta}(x)$ is increasing and continuous in $\beta$.
\end{lemma}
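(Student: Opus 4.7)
The plan is to establish monotonicity in $\beta$ by a first-crossing-point argument using the initial condition $v_\beta(0)=0$ together with a direct comparison of derivatives from the ODE, and then to establish continuity in $\beta$ via Gronwall's inequality applied to the integral form of the IVP, leveraging the Lipschitz continuity of $\phi$ granted by Lemma~\ref{lem:lipschitz}.

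For monotonicity, fix $\beta_2 > \beta_1 > \underline{\beta}$ and let $v_i := v_{\beta_i}$ denote the corresponding solutions guaranteed by Lemma~\ref{lem:ivpcont}. Solving (\ref{eqn:ivp1}) for $v'$, I will write the ODE in the form
\begin{equation*}
v_\beta'(x) \;=\; \tfrac{2}{\sigma^2}\bigl[\beta - \kappa x (p - v_\beta(x)) + \phi(p - v_\beta(x))\bigr].
\end{equation*}
At $x=0$ both solutions satisfy $v_i(0)=0$, yet $v_2'(0)-v_1'(0) = \tfrac{2}{\sigma^2}(\beta_2-\beta_1) > 0$, so by continuity of $v_i'$ there exists $\varepsilon > 0$ with $v_2(x) > v_1(x)$ on $(0,\varepsilon)$. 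Suppose, toward a contradiction, that $v_2(x_0)=v_1(x_0)$ at some first point $x_0>0$. Then $v_2'(x_0) \le v_1'(x_0)$ must hold, for otherwise $v_2$ would strictly exceed $v_1$ just past $x_0$, violating the definition of $x_0$. But plugging $v_2(x_0)=v_1(x_0)$ into the ODE yields
\begin{equation*}
v_2'(x_0) - v_1'(x_0) \;=\; \tfrac{2}{\sigma^2}(\beta_2 - \beta_1) \;>\; 0,
\end{equation*}
since the $\kappa x(p-v)$ and $\phi(p-v)$ terms cancel identically. This contradiction shows that no such $x_0$ exists, so $v_2(x) > v_1(x)$ for all $x>0$, proving strict monotonicity.

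For continuity, I will rewrite IVP($\beta$) in integral form,
\begin{equation*}
v_\beta(x) \;=\; \int_0^x \tfrac{2}{\sigma^2}\bigl[\beta - \kappa y (p - v_\beta(y)) + \phi(p - v_\beta(y))\bigr] dy,
\end{equation*}
and subtract the expressions for $\beta_1$ and $\beta_2$. Using the Lipschitz bound $|\phi(p-v_2(y))-\phi(p-v_1(y))| \le C_L |v_2(y)-v_1(y)|$ from Lemma~\ref{lem:lipschitz}, and bounding $\kappa y \le \kappa b$ on any finite interval $[0,b]$, I obtain
\begin{equation*}
|v_2(x) - v_1(x)| \;\le\; \tfrac{2 b}{\sigma^2} |\beta_2 - \beta_1| \;+\; \tfrac{2(\kappa b + C_L)}{\sigma^2}\int_0^x |v_2(y) - v_1(y)| \, dy, \quad x\in[0,b].
\end{equation*}
Applying Lemma~\ref{lem:gronwall} with $C = \tfrac{2b}{\sigma^2}|\beta_2-\beta_1|$ and $A = \tfrac{2(\kappa b + C_L)}{\sigma^2}$ yields $|v_2(x)-v_1(x)| \le C e^{Ab}$ on $[0,b]$, which is $O(|\beta_2-\beta_1|)$ uniformly in $x\in[0,b]$. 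Since $b$ is arbitrary, this delivers (locally uniform) continuity of $\beta \mapsto v_\beta(x)$ at each fixed $x \ge 0$, completing the proof.

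I anticipate the monotonicity step to be the main subtlety: one must handle the case $v_2(x)=v_1(x)$ on a right-neighborhood of some $x_0$ (not merely at an isolated point), but the ODE-derivative gap $\tfrac{2}{\sigma^2}(\beta_2-\beta_1)>0$ rules this out immediately. The continuity step is standard parameter-dependence for Lipschitz ODEs and presents no obstacle beyond bookkeeping.
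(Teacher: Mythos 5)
Your proposal is correct. The continuity half is essentially identical to the paper's: both integrate the ODE, invoke the Lipschitz constant of $\phi$ from Lemma~\ref{lem:lipschitz}, and close with Gronwall (Lemma~\ref{lem:gronwall}). The monotonicity half, however, takes a genuinely different and more elementary route. The paper multiplies the ODE by the integrating factor $\exp\{(\kappa x^2 - 2\theta_M x)/\sigma^2\}$, integrates up to the first touching point $\hat{x}$, and uses the monotonicity of the auxiliary function $\tilde{\phi}(y)=\phi(y)-\theta_M y$ to force a sign contradiction, with a separate Case (ii) for $\hat{x}=0$ handled via difference quotients. You instead observe that at any point where $v_{\beta_1}$ and $v_{\beta_2}$ coincide, the nonlinear terms $\kappa x(p-v)$ and $\phi(p-v)$ cancel \emph{exactly}, so the derivative gap equals $\tfrac{2}{\sigma^2}(\beta_2-\beta_1)>0$ pointwise; combined with the strict separation on $(0,\varepsilon)$ obtained from the derivatives at the origin, this kills the first crossing without any integrating factor, without the decreasing-$\tilde{\phi}$ device, and without the $\hat{x}=0$ case split. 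One small repair: your stated reason for $v_2'(x_0)\le v_1'(x_0)$ (``otherwise $v_2$ would exceed $v_1$ just past $x_0$'') does not actually contradict the definition of a first crossing point; the correct justification is from the left, namely that $h=v_2-v_1$ is positive on $(0,x_0)$ and vanishes at $x_0$, so the left difference quotient gives $h'(x_0)\le 0$. With that one-line fix the argument is complete.
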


To facilitate the analysis, define the sets $\mathcal{I}$ and $\mathcal{D}$ as follows:
\begin{align*}
\mathcal{I} &= \{ \beta > \underline{\beta} : v_{\beta} \mbox{ is increasing} \} \\
\mathcal{D} &= \{ \beta > \underline{\beta} : \exists x_{\beta} \mbox{ such that } v_{\beta} 
   \mbox{ is nondecreasing for } x \in (0, x_{\beta}) \mbox{ and } \\
& \quad \quad \quad \quad \quad \quad \mbox{nonincreasing on } (x_{\beta}, \infty) \}
\end{align*}

%
%
\begin{lemma} \label{lem:hk}
For $\beta > \underline{\beta}$, if $v_\beta^{\prime}(x^*) =0$ for some $x^* >0$, then $\kappa (p - v_{\beta}(x^*)) > 0$.
\end{lemma}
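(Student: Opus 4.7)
The plan is to argue by contradiction. Suppose, to the contrary, that at some $x^* > 0$ with $v_\beta'(x^*) = 0$ we have $\kappa(p - v_\beta(x^*)) \le 0$. Since $\kappa > 0$ by Equation~(\ref{eqn:handk}), this is equivalent to $p - v_\beta(x^*) \le 0$. The strategy is to substitute this information into the IVP and reach a sign contradiction with the hypothesis $\beta > \underline{\beta} > 0$.

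First I would evaluate IVP($\beta$) at the stationary point $x = x^*$. Using $v_\beta'(x^*) = 0$ in Equation~(\ref{eqn:ivp1}) gives
\begin{equation*}
\beta \;=\; x^{*}\kappa\bigl(p - v_\beta(x^{*})\bigr) \;-\; \phi\bigl(p - v_\beta(x^{*})\bigr).
\end{equation*}
Letting $y^{*} := p - v_\beta(x^{*}) \le 0$, the first term on the right is non-positive because $x^{*} > 0$ and $\kappa > 0$. For the second term I would invoke the explicit representation of $\phi$ from Equation~(\ref{eqn:phiapp}): since $y^{*} \le 0 < \hat c_1$, we have $\phi(y^{*}) = \theta_0 y^{*}$. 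Recalling that $\theta_0 < 0$ (by its definition in Equation~(\ref{eqn:theta0})) and $y^{*} \le 0$, it follows that $-\phi(y^{*}) = -\theta_0 y^{*} \le 0$.

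Combining these two observations yields
\begin{equation*}
\beta \;=\; (x^{*}\kappa - \theta_0)\,y^{*} \;\le\; 0,
\end{equation*}
because $x^{*}\kappa - \theta_0 > 0$ and $y^{*} \le 0$. This contradicts $\beta > \underline{\beta}$, since $\underline{\beta} = -\inf \phi > 0$ (note $\phi(y) = \theta_0 y < 0$ for $y \in (0,\hat c_1]$, so $\inf \phi < 0$). The main subtlety, and the only nontrivial step, is verifying that the linear branch of $\phi$ from~(\ref{eqn:phiapp}) is the relevant one throughout the range $y^{*} \le 0$; that is immediate from the piecewise description of $\phi$ and the fact that its leftmost slope is $\theta_0$. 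The rest reduces to keeping track of signs, so no further technical machinery is needed.
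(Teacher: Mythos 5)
Your proof is correct and rests on the same key step as the paper's: evaluating IVP($\beta$) at the stationary point $x^*$ to obtain $\beta = x^*\kappa\bigl(p - v_\beta(x^*)\bigr) - \phi\bigl(p - v_\beta(x^*)\bigr)$. The paper concludes more directly, rearranging this to $\kappa\bigl(p - v_\beta(x^*)\bigr) = \bigl(\beta + \phi(p - v_\beta(x^*))\bigr)/x^*$ and observing that $\beta + \phi(y) > 0$ for every $y$ simply by the definition $\underline{\beta} = -\inf\phi$, so your contradiction argument via the explicit branch $\phi(y) = \theta_0 y$ on $y \le 0$ is valid but does slightly more work than necessary.
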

\begin{proof}
Note from IVP($\beta$) (see Equations~(\ref{eqn:ivp1})-(\ref{eqn:ivp2})) that
\begin{align}
\kappa (p - v_{\beta}(x^*)) = \frac{\beta + \phi( p - v_{\beta}(x^*))}{x^*}. \label{eqn:hke}
\end{align}
Also note from the definition $\underline{\beta} = - \inf_{y \in \Re} \phi(y)$ that $\beta + \phi(y) > 0$ for $\beta > \underline{\beta}$ and $y \in \Re$. Thus, the result follows from Equation~(\ref{eqn:hke}).
\end{proof} \vspace{-15pt}

%
%
\begin{lemma} \label{lem:vinc}
For $\beta > \underline{\beta}$, $v_{\beta}(\cdot)$ increases to its supremum. 
\end{lemma}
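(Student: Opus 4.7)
My plan is to show that $v_\beta$ has at most one critical point on $(0,\infty)$ and that any such point is a strict local maximum; combined with $v_\beta'(0)>0$, this will force $v_\beta$ to be nondecreasing on a maximal interval $[0,x_\beta]$ on which it attains its supremum.

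First I would evaluate (\ref{eqn:ivp1}) at $x=0$ with $v_\beta(0)=0$, obtaining $v_\beta'(0)=(2/\sigma^2)[\beta+\phi(p)]$, which is positive since $\underline{\beta}=-\inf_y\phi(y)\ge-\phi(p)$ and $\beta>\underline{\beta}$. Then, rewriting the ODE as $v_\beta'(x)=(2/\sigma^2)[\beta-\kappa x(p-v_\beta(x))+\phi(p-v_\beta(x))]$ and differentiating (using $\phi'=\psi$ away from the finitely many kinks of $\phi$), I would obtain
\begin{align*}
v_\beta''(x)=\frac{2}{\sigma^2}\bigl[(\kappa x-\psi(p-v_\beta(x)))\,v_\beta'(x)-\kappa(p-v_\beta(x))\bigr].
\end{align*}
At any $x^*>0$ with $v_\beta'(x^*)=0$, the first bracket vanishes and Lemma~\ref{lem:hk} gives $\kappa(p-v_\beta(x^*))>0$, so $v_\beta''(x^*)<0$. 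Hence $v_\beta'$ is strictly decreasing through $x^*$, and in particular $v_\beta'(x)<0$ just to the right of $x^*$.

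Next I would set $x^*:=\inf\{x>0:v_\beta'(x)=0\}$. If this set is empty then $v_\beta'>0$ on $(0,\infty)$ and $v_\beta$ increases to its supremum, which is the claim. Otherwise, by continuity of $v_\beta'$ we have $v_\beta'(x^*)=0$ and $v_\beta'>0$ on $(0,x^*)$. If $v_\beta'$ were to vanish again at some later $\tilde x>x^*$, taking the smallest such $\tilde x$ would give $v_\beta'<0$ on $(x^*,\tilde x)$ with $v_\beta'(\tilde x)=0$, so the left-derivative of $v_\beta'$ at $\tilde x$ would have to be nonnegative; but the displayed formula, applied at the critical point $\tilde x$, forces this derivative to be strictly negative—a contradiction. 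Hence $v_\beta'\le 0$ on $[x^*,\infty)$, so $v_\beta$ is nondecreasing on $[0,x^*]$, nonincreasing on $[x^*,\infty)$, and reaches its supremum at $x^*$.

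The hard part will be the technical bookkeeping around the piecewise-linear $\phi$: the formula for $v_\beta''$ is only valid away from the isolated points where $p-v_\beta(x)$ crosses a kink of $\phi$, and at those points it must be interpreted via one-sided derivatives. Fortunately this creates no issue at a critical point, since the factor $v_\beta'(x^*)=0$ annihilates the $\psi$-dependent term and both one-sided derivatives reduce unambiguously to $-2\kappa(p-v_\beta(x^*))/\sigma^2<0$.
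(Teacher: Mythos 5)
Your proof is correct, but it takes a genuinely different route from the paper's. You differentiate the ODE and run a local critical-point analysis: the identity $v_\beta''(x^*)=-\tfrac{2}{\sigma^2}\kappa\bigl(p-v_\beta(x^*)\bigr)<0$ (via Lemma~\ref{lem:hk}) makes every critical point a strict local maximum, so $v_\beta'$ cannot return to zero after its first zero. Your handling of the kinks of $\phi$ is sound; in fact you can avoid $\psi$ entirely, since $\phi$ is Lipschitz (Lemma~\ref{lem:lipschitz}) and its contribution to the one-sided difference quotients of $v_\beta'$ at a critical point $\tilde x$ is bounded by $C_L\lvert v_\beta(\tilde x)-v_\beta(\tilde x-h)\rvert/h\to C_L\lvert v_\beta'(\tilde x)\rvert=0$, so both one-sided second derivatives exist there and equal $-\tfrac{2}{\sigma^2}\kappa\bigl(p-v_\beta(\tilde x)\bigr)$. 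The paper instead argues by contradiction with a two-point comparison: if $v_\beta$ did not increase to its supremum there would exist $x_2>x_1>0$ with $v_\beta(x_1)=v_\beta(x_2)$ and $0=v_\beta'(x_1)\le v_\beta'(x_2)$; subtracting the ODE evaluated at the two points cancels the $\phi$-terms outright and leaves $0=\tfrac{1}{2}\sigma^2 v_\beta'(x_2)+(x_2-x_1)\kappa\bigl(p-v_\beta(x_1)\bigr)>0$, again by Lemma~\ref{lem:hk}. The paper's argument is shorter and never touches the regularity of $\phi$, because the ODE's explicit $x$-dependence is affine and its nonlinearity depends on $x$ only through $v_\beta(x)$; your argument costs the extra bookkeeping at the kinks but yields slightly more along the way --- every critical point is a strict local maximum and $v_\beta'<0$ strictly beyond the peak --- facts the paper only recovers later through Lemma~\ref{lem:noconstantv}, Corollary~\ref{cor:vbetaincdec}, and the proof of Lemma~\ref{lem:betainD}.
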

\begin{proof}
Suppose not. Then by continuity of $v_{\beta}$ and its derivative, there exists $x_2 > x_1 > 0$ such that
\begin{align}
0 = v_{\beta}^{\prime} (x_1) &\le v_{\beta}^{\prime} (x_2)  \label{eqn:vprime12.1} \\
v_{\beta} (x_1) &= v_{\beta} (x_2) \label{eqn:vprime12.2} 
\end{align}
By the IVP($\beta$) (see Equations~(\ref{eqn:ivp1})-(\ref{eqn:ivp2})), we write
\begin{align}
\beta &= \frac{1}{2} \sigma^2 v_{\beta}^{\prime}(x_1) 
   + x_1\kappa (p-v_{\beta}(x_1)) - \phi(p-v_{\beta}(x_1)), \label{eqn:betax1} \\
\beta &= \frac{1}{2} \sigma^2 v_{\beta}^{\prime}(x_2) 
   + x_2\kappa (p-v_{\beta}(x_2)) - \phi(p-v_{\beta}(x_2)), \label{eqn:betax2}
\end{align}
Subtracting~(\ref{eqn:betax1}) from (\ref{eqn:betax2}) while using Equations~(\ref{eqn:vprime12.1})-(\ref{eqn:vprime12.2}) gives
\begin{align}
0 = \frac{1}{2} \sigma^2 v_{\beta}^{\prime} (x_2) + (x_2 - x_1) \kappa(p- v_{\beta}(x_1)) > 0, \label{eqn:contradiction}
\end{align}
a contradiction, where the inequality follows because $\kappa(p- v_{\beta}(x_1)) >0$ by Lemma~\ref{lem:hk}.
\end{proof} \vspace{-15pt}

%
%
\begin{lemma} \label{lem:noconstantv}
For $\beta > \underline{\beta}$, there is no interval on $(0, \infty)$ on which $v_{\beta}(x)$ is constant. In particular, for $\beta > \underline{\beta}$, the set $\{ x \ge 0 : v_{\beta}^{\prime} (x) = 0 \}$ has Lebesgue measure zero.
\end{lemma}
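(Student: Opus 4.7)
The plan is to derive both statements from the algebraic identity obtained by setting $v_\beta'(x)=0$ in the IVP~(\ref{eqn:ivp1}), combined with Lemma~\ref{lem:hk} and the piecewise-linear structure of $\phi$. First, for the no-constant-interval claim, suppose to the contrary that $v_\beta\equiv c_0$ on some interval $[a,b]\subset(0,\infty)$. Substituting $v_\beta'(x)=0$ and $v_\beta(x)=c_0$ into~(\ref{eqn:ivp1}) yields $\beta=\kappa x(p-c_0)-\phi(p-c_0)$ for every $x\in[a,b]$. The right side is affine in $x$ with slope $\kappa(p-c_0)$, so constancy forces $\kappa(p-c_0)=0$, contradicting Lemma~\ref{lem:hk}.

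For the measure-zero claim, set $S=\{x\ge 0:v_\beta'(x)=0\}$. Every $x\in S$ satisfies
\begin{equation}
\beta \;=\; \kappa x\bigl(p-v_\beta(x)\bigr)-\phi\bigl(p-v_\beta(x)\bigr).\label{eqn:plan-star}
\end{equation}
Recalling from Appendix~\ref{app:equivworkload} that $\phi$ is piecewise linear with kinks at $\hat c_1<\cdots<\hat c_L$ and slope $\theta_m$ on the open interval $I_m:=(\hat c_m,\hat c_{m+1})$ (with the conventions $\hat c_0:=-\infty$ and $\hat c_{L+1}:=+\infty$), I partition
\[
S \;=\; \bigcup_{m=0}^{L} E_m \;\cup\; \bigcup_{m=1}^{L} K_m, \qquad E_m:=\{x\in S:p-v_\beta(x)\in I_m\},\quad K_m:=\{x\in S:v_\beta(x)=p-\hat c_m\}.
\]

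On $E_m$, relation~(\ref{eqn:plan-star}) becomes $\beta=\kappa x(p-v_\beta(x))-\theta_m(p-v_\beta(x))+c(\theta_m)$, which I solve explicitly to obtain $v_\beta(x)=w_m(x):=p-(\beta-c(\theta_m))/(\kappa x-\theta_m)$. The case $\beta=c(\theta_m)$ is excluded because it would force $v_\beta\equiv p$ on $E_m$ and, via~(\ref{eqn:plan-star}), $\beta=-\phi(0)=0$, contradicting $\beta>\underline\beta\ge 0$. Hence $w_m'(x)=(\beta-c(\theta_m))\kappa/(\kappa x-\theta_m)^2$ is nonzero away from the isolated point $\theta_m/\kappa$. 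If $|E_m|>0$, pick a Lebesgue density point $x_0\in E_m$ with $\kappa x_0\neq\theta_m$ and a sequence $x_0+h_n\in E_m$ with $h_n\to 0$; then $v_\beta\equiv w_m$ on $E_m$ combined with $v_\beta\in C^1$ (Lemma~\ref{lem:ivpcont}) yields
\[
w_m'(x_0)\;=\;\lim_{n\to\infty}\frac{v_\beta(x_0+h_n)-v_\beta(x_0)}{h_n}\;=\;v_\beta'(x_0)\;=\;0,
\]
a contradiction, so $|E_m|=0$. On $K_m$, $v_\beta$ is the constant $p-\hat c_m$, so~(\ref{eqn:plan-star}) collapses to $\beta=\kappa x\hat c_m-\phi(\hat c_m)$, which (since $\hat c_m>0$) has a unique solution in $x$; thus $K_m$ is at most a singleton. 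Summing the at-most countably many null contributions gives $|S|=0$. The main obstacle is the non-smoothness of $\phi$ at the kinks, which precludes a one-shot implicit-differentiation argument on $(\ref{eqn:plan-star})$ and motivates the split into the ``open-piece'' sets $E_m$ (handled by the density-point argument with the explicit hyperbola $w_m$) and the ``kink-level'' sets $K_m$ (handled by the algebraic identity pinning $x$).
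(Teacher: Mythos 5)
Your proof is correct, and while its first half coincides with the paper's argument, its second half takes a genuinely different route. For the no-constant-interval claim you do exactly what the paper does: substitute $v_\beta'\equiv 0$ and $v_\beta\equiv c_0$ into the IVP to get $\beta=\kappa x(p-c_0)-\phi(p-c_0)$ on an interval, and kill the $x$-dependence; the paper leaves the final step as ``clearly a contradiction,'' and your explicit appeal to Lemma~\ref{lem:hk} to rule out $\kappa(p-c_0)=0$ is the right justification. For the measure-zero claim the paper asserts that a set of positive Lebesgue measure ``must include an interval'' and then invokes the first half; that implication is false for general measurable (even closed) sets --- a fat Cantor set is a counterexample --- so the paper's reduction is at best incomplete. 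Your argument sidesteps this entirely: you use the finite piecewise-linear structure of $\phi$ to solve the stationarity identity explicitly on each linear piece, obtaining the hyperbolas $w_m$ whose derivatives vanish nowhere off a single pole, and a Lebesgue-density-point argument then shows each $E_m$ is null, while the kink-level sets $K_m$ are pinned to single points by the same identity since $\hat c_m>0$ and $\kappa>0$. This costs more bookkeeping (the $E_m$/$K_m$ decomposition, the exclusion of $\beta=c(\theta_m)$, the pole at $\theta_m/\kappa$) but delivers a fully rigorous proof of the stronger statement. The only cosmetic imprecision is that when $\beta=c(\theta_m)$ the identity forces $v_\beta=p$ only on $E_m\setminus\{\theta_m/\kappa\}$, and only if that set is nonempty; this is harmless, since otherwise $E_m$ is contained in a single point and is already null.
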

\begin{proof}
Suppose not. Then there exists $(x_1, x_2) \subset (0, \infty)$ such that $v_{\beta}(x) =c$ for all $x \in [x_1, x_2]$. In particular, $v_{\beta}^{\prime} (x) =0$ on $(x_1, x_2)$. Substituting this into IVP($\beta$) (see Equations~(\ref{eqn:ivp1})-(\ref{eqn:ivp2})) gives the following
\begin{align}
\beta = x  \kappa (p-c) - \phi(p-c), \,\, x \in (x_1, x_2),
\end{align}
which clearly is a contradiction. Next, we prove that the set $\{ x \ge 0 : v_{\beta}^{\prime}(x) =0 \}$ has zero measure. Suppose not, then it must include an interval $(x_1, x_2)$ (see \citealt{royden-1998}), which is a contradiction by the first half of the lemma.
\end{proof} \vspace{-15pt}

%
%
\begin{corollary} \label{cor:vbetaincdec} We have that: 

\noindent i) If $v_{\beta}(x)$ is nondecreasing on $(a, b) \subset (0, \infty)$, it is increasing on $(a, b)$.

\noindent ii) If $v_{\beta}(x)$ is nonincreasing on $(a, b) \subset (0, \infty)$, it is decreasing on $(a, b)$.
\end{corollary}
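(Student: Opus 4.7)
The plan is to deduce both parts of the corollary as direct consequences of Lemma~\ref{lem:noconstantv}, which rules out any subinterval of $(0,\infty)$ on which $v_\beta$ is constant. The key observation is that a nondecreasing function which fails to be strictly increasing on an interval must be constant on some subinterval, and this is exactly what Lemma~\ref{lem:noconstantv} forbids.

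For part (i), I would argue by contradiction. Suppose $v_\beta$ is nondecreasing on $(a,b) \subset (0,\infty)$ but is not increasing there. Then there exist points $x_1, x_2 \in (a,b)$ with $x_1 < x_2$ such that $v_\beta(x_1) \ge v_\beta(x_2)$. Combined with the nondecreasing property, which gives $v_\beta(x_1) \le v_\beta(x_2)$, I conclude $v_\beta(x_1) = v_\beta(x_2)$. For any $y \in [x_1,x_2]$, the nondecreasing property yields $v_\beta(x_1) \le v_\beta(y) \le v_\beta(x_2) = v_\beta(x_1)$, so $v_\beta$ is constant on the nondegenerate subinterval $[x_1,x_2] \subset (0,\infty)$. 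This directly contradicts the first claim of Lemma~\ref{lem:noconstantv}.

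Part (ii) follows by an identical argument applied to $-v_\beta$, or equivalently by a symmetric contradiction: a nonincreasing $v_\beta$ that fails to be strictly decreasing on $(a,b)$ would produce $x_1 < x_2$ with $v_\beta(x_1) = v_\beta(x_2)$, forcing $v_\beta$ to be constant on $[x_1,x_2]$, which again contradicts Lemma~\ref{lem:noconstantv}. I do not anticipate a main obstacle here; the corollary is essentially a relabeling of Lemma~\ref{lem:noconstantv} in the presence of a monotonicity hypothesis, and no further properties of the IVP or of $\phi$ need to be invoked.
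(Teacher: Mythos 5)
Your proof is correct and rests on the same key ingredient as the paper's, namely Lemma~\ref{lem:noconstantv}. The only difference is cosmetic: you invoke the lemma's first assertion (no subinterval of constancy) via a squeeze-and-contradiction argument, whereas the paper invokes its second assertion (that $\{x : v_{\beta}'(x)=0\}$ has Lebesgue measure zero) and writes $v_{\beta}(x_2)-v_{\beta}(x_1)=\int_{x_1}^{x_2} v_{\beta}'(y)\,dy>0$; both are immediate deductions.
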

\begin{proof}
\emph{Part i)} Let $a < x_1 < x_2 < b$. By Lemma~\ref{lem:noconstantv}, $v_{\beta}^{\prime}(x) > 0$  a.e.  on $(a, b)$. Thus,
\begin{align}
v_{\beta}(x_2) = v_{\beta}(x_1) + \int_{x_1}^{x_2} v_{\beta}^{\prime}(y) dy > v_{\beta}(x_1),
\end{align}
because $v_{\beta}^{\prime}(y) > 0$ for a.e. $y \in[a, b]$ by Lemma~\ref{lem:noconstantv}, which proves Part i). The proof of Part ii) is virtually identical, and hence, it is omitted.
\end{proof} \vspace{-15pt}

To shed further light on the structure of sets $\mathcal{I}$, $\mathcal{D}$, define the sets $\mathcal{\tilde{I}}$, $\mathcal{\tilde{D}}$ as follows:
\begin{align}
\mathcal{\tilde{I}} &= \{ \beta > \underline{\beta} : v_{\beta} \mbox{ is nondecreasing}\}, \label{eqn:tildeI} \\
\mathcal{\tilde{D}} &= \{ \beta > \underline{\beta} : \exists x_{\beta} \mbox{ such that } v_{\beta} 
   \mbox{ is increasing on } (0, x_{\beta}) \mbox{ and decreasing on } (x_{\beta}, \infty) \} 
   \label{eqn:tildeD}
\end{align}
%

%
%
The following corollary is immediate from Corollary~\ref{cor:vbetaincdec}. Lemma~\ref{lem:betainD} is proved in Appendix~\ref{app:bellmanproofs}; and it facilitates Lemma~\ref{lem:IDpartition}.
\begin{corollary} \label{cor:ID}
We have that $\mathcal{I} = \mathcal{\tilde{I}}$ and $\mathcal{D} = \mathcal{\tilde{D}}$.
\end{corollary}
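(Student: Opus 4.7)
The claim has two equalities to verify, and the plan is to treat each by a short double-containment argument that invokes Corollary \ref{cor:vbetaincdec} to promote non-strict monotonicity to strict monotonicity. No new analysis of the ODE is needed; all the real work has already been done in Lemma \ref{lem:noconstantv} (no flat pieces of $v_\beta$) and the resulting Corollary \ref{cor:vbetaincdec}.

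For $\mathcal{I} = \tilde{\mathcal{I}}$: the inclusion $\mathcal{I} \subset \tilde{\mathcal{I}}$ is immediate from the definitions, since a function that is increasing on $(0,\infty)$ is in particular nondecreasing there. Conversely, if $\beta \in \tilde{\mathcal{I}}$, then $v_\beta$ is nondecreasing on $(0,\infty)$, and I would apply part~(i) of Corollary \ref{cor:vbetaincdec} with the interval $(a,b) = (0,\infty)$ to conclude that $v_\beta$ is actually increasing on $(0,\infty)$, so $\beta \in \mathcal{I}$.

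For $\mathcal{D} = \tilde{\mathcal{D}}$: again $\tilde{\mathcal{D}} \subset \mathcal{D}$ is by definition. For the reverse direction, fix $\beta \in \mathcal{D}$ and let $x_\beta$ be the associated turning point. Then $v_\beta$ is nondecreasing on $(0,x_\beta)$ and nonincreasing on $(x_\beta,\infty)$. Applying Corollary \ref{cor:vbetaincdec}(i) to the first interval and Corollary \ref{cor:vbetaincdec}(ii) to the second upgrades these to ``increasing on $(0,x_\beta)$'' and ``decreasing on $(x_\beta,\infty)$,'' respectively, placing $\beta$ in $\tilde{\mathcal{D}}$.

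There is no real obstacle here; the corollary is essentially a bookkeeping remark that says the distinction between strict and non-strict monotonicity in the two definitions is immaterial because $v_\beta$ cannot be constant on any subinterval of $(0,\infty)$. If I were to stress-test the argument, the only subtlety worth checking is that Corollary \ref{cor:vbetaincdec} is stated for bounded intervals $(a,b)$, so when I apply it with $b = \infty$ (or $a = 0$, which is a limit point) I should briefly note that strict monotonicity on every compact subinterval of $(0,\infty)$, resp.\ of $(x_\beta,\infty)$, suffices to give strict monotonicity on the whole open interval, which is immediate from the definition of monotonicity.
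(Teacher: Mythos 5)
Your proof is correct and matches the paper's intent exactly: the paper simply declares Corollary~\ref{cor:ID} ``immediate from Corollary~\ref{cor:vbetaincdec},'' and your double-containment argument (trivial inclusion one way, upgrade from non-strict to strict monotonicity via Corollary~\ref{cor:vbetaincdec} the other way) is precisely the intended justification. Your closing remark about extending strict monotonicity from compact subintervals to the unbounded interval is a harmless extra precaution and does not change the argument.
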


%
%
\begin{lemma} \label{lem:betainD}
For $\beta > \underline{\beta}$, $\beta \in \mathcal{D}$ if and only if there exists $x_0 > 0$ such that $v_{\beta}^{\prime}(x_0) < 0$.
\end{lemma}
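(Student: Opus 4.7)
The plan is to exploit the same cancellation identity that drove Lemma~\ref{lem:vinc}: whenever $e_1 < e_2$ satisfy $v_\beta(e_1) = v_\beta(e_2)$, subtracting the IVP~(\ref{eqn:ivp1}) at $e_1$ from the IVP at $e_2$ kills the $\phi$ and $\kappa p$ contributions and leaves
\[
0 \;=\; \tfrac{1}{2}\sigma^2\bigl(v_\beta'(e_2) - v_\beta'(e_1)\bigr) \;+\; (e_2 - e_1)\,\kappa\,\bigl(p - v_\beta(e_1)\bigr).
\]
This identity is the main engine. The necessity direction is essentially immediate: if $\beta \in \mathcal{D}$ then by Corollary~\ref{cor:ID} $v_\beta$ is strictly decreasing on $(x_\beta, \infty)$, so the mean value theorem on any subinterval produces a point $x_0$ with $v_\beta'(x_0) < 0$.

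For sufficiency, suppose $v_\beta'(x_0) < 0$ for some $x_0 > 0$. Evaluating~(\ref{eqn:ivp1}) at $x=0$ with $v_\beta(0)=0$ yields $v_\beta'(0) = \frac{2}{\sigma^2}(\beta + \phi(p))$, which is strictly positive because $\beta > \underline{\beta} = -\inf_y \phi(y)$ and $\phi(p) \ge \inf_y \phi(y)$. By continuity of $v_\beta'$, I would define $x_1 := \inf\{x > 0 : v_\beta'(x) < 0\} \in (0, x_0]$; then $v_\beta'(x_1) = 0$ and $v_\beta$ is nondecreasing on $[0, x_1]$. The crux is to show $v_\beta$ is nonincreasing on $[x_1, \infty)$, which will place $\beta \in \mathcal{D}$ with threshold $x_\beta = x_1$.

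I argue this by contradiction. Suppose some $x_3 > x_1$ had $v_\beta'(x_3) > 0$. Since $v_\beta'$ takes negative values arbitrarily close to the right of $x_1$ (by the infimum definition), continuity forces $v_\beta$ to have a local minimum at some interior point $x_2 \in (x_1, x_3)$ with $v_\beta'(x_2) = 0$; Lemma~\ref{lem:hk} then gives $v_\beta(x_2) < p$. Invoking Lemma~\ref{lem:noconstantv} to avoid intervals on which $v_\beta'$ vanishes, together with the intermediate value theorem applied to $v_\beta$ on small neighborhoods of $x_2$, I can pick $\delta > 0$ sufficiently small and points $e_1 \in (x_2 - \delta, x_2)$, $e_2 \in (x_2, x_2 + \delta)$ satisfying $v_\beta(e_1) = v_\beta(e_2)$, $v_\beta'(e_1) \le 0$, and $v_\beta'(e_2) \ge 0$. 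Substituting into the cancellation identity, the first term on the right is nonnegative while the second is strictly positive, using $e_2 > e_1$ and $v_\beta(e_1) < p$ (the latter by continuity since $v_\beta(x_2) < p$). This puts a strictly positive quantity equal to zero, the desired contradiction.

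The main obstacle I anticipate is the real-analytic bookkeeping needed to produce the local minimum $x_2$ and then a matched pair $(e_1, e_2)$ with the stated sign information on the derivatives; once these points are in hand, the cancellation identity finishes the job immediately, in the spirit of Lemma~\ref{lem:vinc}.
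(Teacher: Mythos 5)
Your proof is correct, and it runs on the same engine as the paper's: subtract the IVP~(\ref{eqn:ivp1}) at two points $e_1<e_2$ with $v_\beta(e_1)=v_\beta(e_2)$ so that the $\phi$ terms cancel, then use Lemma~\ref{lem:hk} to make the surviving term $(e_2-e_1)\,\kappa\,(p-v_\beta(e_1))$ strictly positive. The necessity direction is equivalent to the paper's (which argues the contrapositive via Corollary~\ref{cor:vbetaincdec}). The genuine difference is in how sufficiency certifies $v_\beta(e_1)<p$. The paper first invokes Lemma~\ref{lem:vinc} to place the global maximizer at some $x^*<x_0$ with $v_\beta'(x^*)=0$, applies Lemma~\ref{lem:hk} there, and then gets $v_\beta(x_1)\le v_\beta(x^*)<p$ automatically at whatever pair witnesses the failure of monotonicity. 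You instead never use Lemma~\ref{lem:vinc}: you manufacture an interior minimum $x_2$ of $v_\beta$ over $[s,x_3]$, where $s\in(x_1,x_3)$ has $v_\beta'(s)<0$ (such $s$ exists because negative-derivative points accumulate at $x_1$ from the right) and $v_\beta'(x_3)>0$, apply Lemma~\ref{lem:hk} at $x_2$, and take the matched pair near $x_2$. Your construction of $(e_1,e_2)$ is the only place needing care, and it goes through: choose a level $c\in\bigl(v_\beta(x_2),\min\{v_\beta(s),v_\beta(x_3),p\}\bigr)$, let $e_1$ be the last crossing of $c$ before $x_2$ and $e_2$ the first after, which forces $v_\beta'(e_1)\le 0\le v_\beta'(e_2)$ and $v_\beta(e_1)=c<p$; this is cleaner than the ``small $\delta$'' phrasing and does not actually require Lemma~\ref{lem:noconstantv}. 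Your route costs a little more level-crossing bookkeeping; the paper's gets the sign information more cheaply from the global maximum but leaves the existence of its own matched pair $(x_1,x_2)$ just as implicit as yours. Both arguments are sound.
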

\begin{lemma} \label{lem:IDpartition}
The sets $\mathcal{I}$  and  $\mathcal{D}$ partition $(\underline{\beta}, \infty)$, i.e., $\mathcal{I} \cup \mathcal{D} = (\underline{\beta}, \infty)$ and $\mathcal{I} \cap \mathcal{D} = \emptyset$.
\end{lemma}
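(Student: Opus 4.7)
The plan is to prove the two assertions separately, both of which follow rather directly from Lemma~\ref{lem:betainD} together with Corollary~\ref{cor:vbetaincdec} and the $C^1$ regularity of $v_\beta$ guaranteed by Lemma~\ref{lem:ivpcont}. Since each $v_\beta$ is continuously differentiable on $(0,\infty)$, monotonicity of $v_\beta$ is equivalent to a sign condition on $v_\beta'$, and Lemma~\ref{lem:betainD} translates membership in $\mathcal{D}$ into exactly the opposite sign condition, so the two sets naturally become complementary.

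To verify that $\mathcal{I} \cap \mathcal{D} = \emptyset$, I would suppose for contradiction that some $\beta$ belongs to both. By the definition of $\mathcal{I}$, $v_\beta$ is increasing on $(0,\infty)$; since $v_\beta \in C^1$, this forces $v_\beta'(x) \ge 0$ for every $x > 0$. On the other hand, $\beta \in \mathcal{D}$ combined with the ``only if'' direction of Lemma~\ref{lem:betainD} yields some $x_0 > 0$ with $v_\beta'(x_0) < 0$, contradicting the previous inequality.

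For the union, I would fix $\beta > \underline{\beta}$ and split on whether $v_\beta'$ ever takes a negative value on $(0,\infty)$. If there exists $x_0 > 0$ with $v_\beta'(x_0) < 0$, the ``if'' direction of Lemma~\ref{lem:betainD} immediately places $\beta$ in $\mathcal{D}$. Otherwise $v_\beta'(x) \ge 0$ for all $x > 0$, so $v_\beta$ is nondecreasing on $(0,\infty)$; Corollary~\ref{cor:vbetaincdec}(i) then promotes this to strict monotonicity (the possibility of an interval of constancy being ruled out by Lemma~\ref{lem:noconstantv}), placing $\beta$ in $\mathcal{I}$.

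I do not anticipate a real obstacle here: the decomposition is already prepared by the preceding lemmas, and the argument is essentially a dichotomy on the sign of $v_\beta'$. The only fine point is to be explicit about using the $C^1$ regularity from Lemma~\ref{lem:ivpcont} when passing between ``$v_\beta$ is increasing'' and ``$v_\beta' \ge 0$ pointwise,'' and to invoke Lemma~\ref{lem:noconstantv} and Corollary~\ref{cor:vbetaincdec}(i) to bridge the gap between ``nondecreasing'' and the strict monotonicity required by the definition of $\mathcal{I}$.
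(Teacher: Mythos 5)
Your proof is correct and follows essentially the same route as the paper: a dichotomy on whether $v_\beta'$ ever becomes negative, using Lemma~\ref{lem:betainD} for the negative case and Corollary~\ref{cor:vbetaincdec} to upgrade ``nondecreasing'' to ``increasing'' in the other. Your explicit treatment of disjointness is a slight elaboration of what the paper leaves implicit in the exclusivity of the two cases, but the substance is identical.
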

\begin{proof}
Fix $\beta > \underline{\beta}$. If $v_{\beta}^{\prime} (x) \ge 0$ for all $x > 0$, then $v_{\beta}$ is nondecreasing. By Corollary~\ref{cor:vbetaincdec}, it is increasing too, so $\beta \in \mathcal{I}$. Otherwise, there exists $x>0$ such that $v_{\beta}^{\prime} (x) < 0$ and $\beta \in \mathcal{D}$ by Lemma~\ref{lem:betainD}. Because one of these will be true for all $\beta > \underline{\beta}$, the result follows.
\end{proof} \vspace{-20pt}

Lemmas~\ref{lem:vbetaneginf}, \ref{lem:IDnotempty}, and \ref{lem:betaI} are proved in Appendix~\ref{app:bellmanproofs}.

%
%
\begin{lemma} \label{lem:vbetaneginf}
If $\beta > \underline{\beta}$ and $\beta \in \mathcal{D}$, then $\lim_{x \rightarrow \infty} v_{\beta}(x) = - \infty$.
\end{lemma}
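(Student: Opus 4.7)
By Corollary~\ref{cor:ID}, $\beta \in \mathcal{D} = \tilde{\mathcal{D}}$, so there exists $x_\beta > 0$ such that $v_\beta$ is decreasing on $(x_\beta, \infty)$. Being monotone on an unbounded interval, $v_\beta$ has a limit in $[-\infty, v_\beta(x_\beta)]$ as $x \to \infty$. The plan is to rule out every finite limit, thereby forcing the limit to equal $-\infty$. Suppose for contradiction that $v_\beta(x) \to v^* \in \mathbb{R}$. Rewrite IVP($\beta$) as
\begin{equation}
v_\beta'(x) = \frac{2}{\sigma^2}\left[\beta - x\kappa\bigl(p - v_\beta(x)\bigr) + \phi\bigl(p - v_\beta(x)\bigr)\right], \label{eqn:planvprime}
\end{equation}
and split into three cases according to the sign of $v^* - p$.

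If $v^* < p$, then by continuity of $\phi$ (Lemma~\ref{lem:lipschitz}) the term $\phi(p - v_\beta(x)) \to \phi(p - v^*)$ stays bounded, while $-x\kappa(p - v_\beta(x)) \sim -x\kappa(p - v^*) \to -\infty$. Hence $v_\beta'(x) \to -\infty$, which upon integration forces $v_\beta(x) \to -\infty$, contradicting $v_\beta(x) \to v^*$. If $v^* > p$, the analogous computation gives $v_\beta'(x) \to +\infty$, contradicting the fact that $v_\beta$ is decreasing on $(x_\beta, \infty)$.

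The delicate case is $v^* = p$, where both dominant terms in~(\ref{eqn:planvprime}) formally vanish. Since $v_\beta$ is decreasing on $(x_\beta, \infty)$ with limit $p$, we have $v_\beta(x) \ge p$ there; if $v_\beta(x_1) = p$ for some $x_1 \ge x_\beta$, monotonicity forces $v_\beta \equiv p$ on $[x_1, \infty)$, which is ruled out by Lemma~\ref{lem:noconstantv}. Hence $v_\beta(x) > p$ for all $x \ge x_\beta$, and $p - v_\beta(x) \to 0^-$. For $x$ sufficiently large, $p - v_\beta(x) \in (-\infty, \hat{c}_1]$, the region on which $\phi(y) = \theta_0 y$ (Appendix~\ref{app:equivworkload}). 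Substituting this explicit form into~(\ref{eqn:planvprime}) and using $\theta_0 < 0$ and $v_\beta(x) - p > 0$ shows each of the two non-$\beta$ terms is nonnegative, so $v_\beta'(x) \ge 2\beta/\sigma^2 > 0$ for all large $x$. This contradicts $v_\beta$ being decreasing on $(x_\beta, \infty)$ and completes the ruling out of all finite limits.

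The main obstacle is the boundary case $v^* = p$; it is handled by combining Lemma~\ref{lem:noconstantv} (to force strict inequality $v_\beta(x) > p$) with the piecewise-linear structure of $\phi$ around the origin, which converts the indeterminate $x \cdot 0$ into a clean lower bound on $v_\beta'(x)$.
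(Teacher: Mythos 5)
Your proof is correct, but it takes a genuinely different route from the paper's. The paper's argument starts from Lemma~\ref{lem:betainD} and Lemma~\ref{lem:vinc} to locate the maximizer $x^*$ of $v_\beta$, invokes Lemma~\ref{lem:hk} to get the uniform bound $\kappa(p - v_\beta(x)) \ge \kappa(p-v_\beta(x^*)) > 0$ for all $x$, and then massages IVP($\beta$) into a differential inequality (splitting into the cases $\theta_M < 0$ and $\theta_M > 0$ to control the linear term $-\theta_M v_\beta(x)$) which, upon integration, yields an explicit upper bound of the form $C_1 x - C_2 x^2$ with $C_2 > 0$, forcing $v_\beta(x) \to -\infty$ at a quadratic rate. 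You instead exploit the eventual monotonicity built into the definition of $\mathcal{D}$ (via Corollary~\ref{cor:ID}) to conclude a limit $v^*$ exists in $[-\infty,\infty)$, and then rule out every finite value of $v^*$ by reading off the asymptotics of $v_\beta'$ from the ODE; your treatment of the delicate boundary case $v^* = p$ via Lemma~\ref{lem:noconstantv} and the explicit form $\phi(y) = \theta_0 y$ for $y \le \hat{c}_1$ is sound (note one could also dispatch the cases $v^* \ge p$ in one stroke via Lemma~\ref{lem:betaI}, since $v_\beta(x_0)\ge p$ would force $\beta\in\mathcal{I}$). The trade-off: the paper's computation delivers a quantitative divergence rate and never needs the limit trichotomy, while your argument is structurally cleaner, avoids the integrating-factor/Gronwall-style manipulations and the case split on the sign of $\theta_M$, but yields only the qualitative conclusion $v_\beta(x)\to-\infty$, which is all the lemma asks for.
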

%

%
%
\begin{lemma} \label{lem:IDnotempty}
We have that $\mathcal{I} \not= \emptyset$ and  $\mathcal{D} \not= \emptyset$.
\end{lemma}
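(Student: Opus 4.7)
The plan is to analyse IVP$(\beta)$ directly, showing that very large $\beta$ forces $v_\beta$ to be globally nondecreasing while $\beta$ close to $\underline{\beta}$ rules this out. I will repeatedly use two facts: $\phi(y)\geq-\underline{\beta}$, with equality at some $y^\ast$; and the piecewise-linear form $\phi(y)=\theta_0 y$ for $y\leq\hat{c}_1$ from (\ref{eqn:phiapp}). Since $\theta_0<0$, the latter gives $\phi(0)=0$ and $\phi(y)\geq 0$ for $y\leq 0$, so the minimiser $y^\ast$ is strictly positive and $v^\ast:=p-y^\ast$ lies in $(0,p)$. Lemma~\ref{lem:IDpartition} then lets me replace ``$\beta\in\mathcal{D}$'' by ``$\beta\notin\mathcal{I}$'' whenever convenient.

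To show $\mathcal{I}\neq\emptyset$, I would take $\beta$ large. Substituting $\phi(p-v_\beta)\geq-\underline{\beta}$ and $p-v_\beta\leq p$ into (\ref{eqn:ivp1}) yields $v_\beta'(x)\geq(2/\sigma^2)(\beta-\underline{\beta}-x\kappa p)$ on any interval where $v_\beta\in[0,p]$. Setting $X:=(\beta-\underline{\beta})/(\kappa p)$ and integrating over $[0,X]$ gives $v_\beta(X)\geq (\beta-\underline{\beta})^2/(\sigma^2\kappa p)$, provided $v_\beta$ stays in $[0,p]$ throughout. For $\beta>\underline{\beta}+\sigma p\sqrt{\kappa}$ this right-hand side exceeds $p$, so $v_\beta$ must actually cross $p$ at some $x^\ast\leq X$, with $v_\beta'>0$ on $[0,x^\ast]$. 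For $x>x^\ast$ with $v_\beta(x)\geq p$, one has $p-v_\beta\leq 0$, hence $\phi(p-v_\beta)=\theta_0(p-v_\beta)\geq 0$ and $-x\kappa(p-v_\beta)\geq 0$, giving $v_\beta'(x)\geq 2\beta/\sigma^2>0$. Thus $v_\beta$ never falls back below $p$ and is nondecreasing globally, so $\beta\in\mathcal{\tilde{I}}=\mathcal{I}$ by Corollary~\ref{cor:ID}.

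To show $\mathcal{D}\neq\emptyset$, I would take $\beta$ close to $\underline{\beta}$ and derive a contradiction from $\beta\in\mathcal{I}$. Let $M:=\max_{y\in[y^\ast,p]}\phi(y)<\infty$. Suppose $\beta\in\mathcal{I}$, so $v_\beta$ is nondecreasing with $v_\beta(0)=0$. If $\sup v_\beta<v^\ast$, then $v_\beta(x)\to L$ for some $L<v^\ast$, and for $x$ large, $v_\beta'(x)\leq(2/\sigma^2)[\beta+M-x\kappa(p-L)]\to-\infty$, contradicting monotonicity. Hence $v_\beta$ must reach $v^\ast$ at some finite $x_0$. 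Evaluating (\ref{eqn:ivp1}) at $x_0$ gives $v_\beta'(x_0)=(2/\sigma^2)(\beta-\underline{\beta}-x_0\kappa y^\ast)$, and monotonicity forces $x_0\leq (\beta-\underline{\beta})/(\kappa y^\ast)$. However, on $[0,x_0]$, $v_\beta\in[0,v^\ast]$ implies $\phi(p-v_\beta)\leq M$ and $-x\kappa(p-v_\beta)\leq 0$, whence $v_\beta'\leq 2(\beta+M)/\sigma^2$ and
\[
v^\ast \;=\; v_\beta(x_0) \;\leq\; \frac{2(\beta+M)}{\sigma^2}\,x_0 \;\leq\; \frac{2(\beta+M)(\beta-\underline{\beta})}{\sigma^2\kappa y^\ast}.
\]
The right-hand side vanishes as $\beta\downarrow\underline{\beta}$, contradicting $v^\ast>0$. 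So $\beta\notin\mathcal{I}$, and $\beta\in\mathcal{D}$ by Lemma~\ref{lem:IDpartition}.

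The hard part will be the second half: getting $\beta\in\mathcal{D}$ requires colliding a lower bound on the crossing time $x_0$ (from monotonicity at $v^\ast$, which forces $x_0$ to be small when $\beta-\underline{\beta}$ is small) with an upper bound on the growth of $v_\beta$ over $[0,x_0]$ (which forces $v_\beta(x_0)$ to be small), and these collide only after separately ruling out the case $\sup v_\beta<v^\ast$. The key structural inputs are $y^\ast>0$ (ensuring $v^\ast<p$) and the boundedness of $\phi$ on $[y^\ast,p]$.
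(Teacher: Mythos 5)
Your first half ($\mathcal{I}\neq\emptyset$) is correct and takes a somewhat more elementary route than the paper: the paper multiplies the differential inequality by the integrating factor $\exp\{-\kappa x^2/\sigma^2\}$ and shows $v_{\beta}(x)\to\infty$ for $\beta>\underline{\beta}+\tfrac{\sigma^2}{2\alpha}p$, whereas you force a crossing of the level $p$ by direct integration on $[0,X]$ and then show that level is absorbing. Your absorption step essentially re-proves the relevant direction of Lemma~\ref{lem:betaI}, which you could also just cite once the crossing is established.

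The second half has a genuine gap. You assert that the minimiser $y^{*}$ of $\phi$ over $(-\infty,p]$ satisfies $y^{*}<p$, so that $v^{*}=p-y^{*}>0$. That is exactly what can fail: $\phi$ has slope $\theta_M=\theta_0+\sum_l\eta_l$ on $(\hat c_M,p]$, and nothing in the model forces this to be positive; when $\theta_M<0$, $\phi$ is strictly decreasing on $(-\infty,p]$, so $y^{*}=p$ and $v^{*}=0$. In that case your contradiction evaporates: $v_{\beta}$ ``reaches'' $v^{*}=0$ already at $x_0=0$, the monotonicity constraint $v_{\beta}'(x_0)\ge 0$ holds automatically (indeed $v_{\beta}'(0)=2(\beta-\underline{\beta})/\sigma^2>0$), and your final chain of inequalities reads $0\le(\mbox{quantity}\to 0)$, which contradicts nothing. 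This is precisely Case~2 ($\hat x=0$, $\theta_M<0$) of the paper's proof, which needs a different argument: there one substitutes $\phi(p-v)=-\underline{\beta}+|\theta_M|v$ near the origin, solves the resulting linear ODE with an explicit integrating factor, and shows that $v_{\beta}(x)<0$ for small $x>0$ once $\beta$ is close enough to $\underline{\beta}$, whence $\beta\in\mathcal{D}$ by Lemma~\ref{lem:equiv}. Your collision-of-bounds argument does correctly (and rather cleanly) cover the paper's Cases~1 and~3, where $y^{*}<p$, but you must add a separate treatment of the boundary case $y^{*}=p$ to complete the proof.
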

%

%
%
\begin{lemma} \label{lem:betaI}
For $\beta > \underline{\beta}$, we have that $\beta \in \mathcal{I}$ and $v_{\beta}$ is unbounded if and only if there exists $x_0 > 0$ such that $v_{\beta}(x_0) \ge p $.
\end{lemma}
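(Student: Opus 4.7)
\textbf{Proof proposal for Lemma~\ref{lem:betaI}.} The plan is to prove each direction separately, using the IVP~(\ref{eqn:ivp1})--(\ref{eqn:ivp2}) to extract sign information on $v_\beta'$ from pointwise information on $v_\beta$. Rewriting the ODE as
\begin{align*}
v_\beta'(x)=\frac{2}{\sigma^2}\bigl[\beta - x\kappa\bigl(p-v_\beta(x)\bigr)+\phi\bigl(p-v_\beta(x)\bigr)\bigr],
\end{align*}
and noting that $\phi(y)\ge \phi(0)=0$ for $y\le 0$ (since $\phi$ is convex with $\phi(0)=-\inf c=0$), the key observation is that whenever $v_\beta(x)\ge p$ one has $p-v_\beta(x)\le 0$, so both $-x\kappa(p-v_\beta(x))\ge 0$ and $\phi(p-v_\beta(x))\ge 0$, giving $v_\beta'(x)\ge 2\beta/\sigma^2>0$.

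The forward direction is short: if $\beta\in\mathcal{I}$ and $v_\beta$ is unbounded, then by Lemma~\ref{lem:vinc} $v_\beta$ increases to its supremum, which equals $+\infty$; since $v_\beta(0)=0$ and $v_\beta$ is continuous, by the intermediate value theorem there exists $x_0>0$ with $v_\beta(x_0)=p$, establishing the claim.

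For the converse, suppose $v_\beta(x_0)\ge p$ for some $x_0>0$. By the key observation, $v_\beta'(x_0)>0$. I claim $v_\beta'(x)>0$ for every $x\ge x_0$. Otherwise, by continuity there is a smallest $x_1>x_0$ with $v_\beta'(x_1)=0$, and on $[x_0,x_1)$ one has $v_\beta'>0$, so $v_\beta(x_1)\ge v_\beta(x_0)\ge p$. Evaluating the ODE at $x_1$ gives
\begin{align*}
\beta = x_1\kappa\bigl(p-v_\beta(x_1)\bigr)-\phi\bigl(p-v_\beta(x_1)\bigr)\le 0,
\end{align*}
contradicting $\beta>\underline{\beta}>0$. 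Hence $v_\beta(x)\ge v_\beta(x_0)\ge p$ for all $x\ge x_0$, which rules out $\beta\in\mathcal{D}$ via Lemma~\ref{lem:vbetaneginf}, and therefore $\beta\in\mathcal{I}$ by Lemma~\ref{lem:IDpartition}.

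It remains to show $v_\beta$ is unbounded. If not, then $v_\beta(x)\uparrow L<\infty$ with $L\ge p$. If $L=p$, combining $v_\beta(x)\ge p$ for $x\ge x_0$ with $v_\beta(x)\le L=p$ forces $v_\beta\equiv p$ on $[x_0,\infty)$, contradicting Lemma~\ref{lem:noconstantv}. If $L>p$, then for $x$ large, $-x\kappa(p-v_\beta(x))=x\kappa(v_\beta(x)-p)\to\infty$ while $\phi(p-v_\beta(x))\to\phi(p-L)$ is finite, so $v_\beta'(x)\to\infty$; this contradicts boundedness (e.g., $v_\beta'(x)\ge 1$ eventually would force $v_\beta(x)\to\infty$). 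Thus $v_\beta$ is unbounded, completing the proof. The main subtlety to watch is the bounded-limit case in the final step, ensuring that both $L=p$ and $L>p$ are ruled out cleanly without appealing to smoothness beyond what the IVP guarantees.
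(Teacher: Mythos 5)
Your proof is correct and follows essentially the same route as the paper's: both arguments turn on evaluating the ODE at a critical point where $v_\beta \ge p$ to force $\beta \le 0$ (a contradiction with $\beta > \underline{\beta} > 0$), and then deducing that $v_\beta$ stays at or above $p$, has strictly positive derivative there, and hence is unbounded and belongs to $\mathcal{I}$. One small repair: the inequality $\phi(y) \ge 0$ for $y \le 0$ does not follow from convexity together with $\phi(0)=0$ alone (a convex function vanishing at the origin can be negative on $(-\infty,0)$); justify it instead by $\phi(y) \ge y\,\theta_0 - c(\theta_0) = y\,\theta_0 \ge 0$ for $y \le 0$, since $\theta_0 < 0$ and $c(\theta_0)=0$ --- this is the same fact the paper extracts from the explicit formula $\phi(y)=\theta_0 y$ on that range.
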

%

%
%
\begin{lemma} \label{lem:equiv}
Let $\beta > \underline{\beta}$. The following are equivalent:

i) $\beta \in \mathcal{D}$.

ii) There exists $x > 0$ such that $v_{\beta}^{\prime}(x) < 0$.

iii) There exists $x > 0$ such that $v_{\beta}(x) < 0$.

iv) $\lim_{x \rightarrow \infty} v_{\beta}(x) = - \infty$.
\end{lemma}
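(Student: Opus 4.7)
The plan is to close the cycle (i) $\Leftrightarrow$ (ii), (i) $\Rightarrow$ (iv) $\Rightarrow$ (iii) $\Rightarrow$ (ii), using the lemmas already established; the remaining two implications are essentially routine consequences of continuity and the initial condition $v_\beta(0) = 0$.

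First, the equivalence (i) $\Leftrightarrow$ (ii) is just the content of Lemma~\ref{lem:betainD}, and the implication (i) $\Rightarrow$ (iv) is the content of Lemma~\ref{lem:vbetaneginf}. So the only work is (iv) $\Rightarrow$ (iii) $\Rightarrow$ (ii). The implication (iv) $\Rightarrow$ (iii) is immediate, since if $v_\beta(x) \to -\infty$ as $x \to \infty$, then in particular $v_\beta(x) < 0$ for all sufficiently large $x$, so one can pick any such $x$.

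For (iii) $\Rightarrow$ (ii), suppose there exists $x > 0$ with $v_\beta(x) < 0$. Because $v_\beta$ satisfies the initial condition $v_\beta(0) = 0$ from IVP($\beta$) and is continuously differentiable by Lemma~\ref{lem:ivpcont}, the Mean Value Theorem applied on $[0, x]$ yields a $\xi \in (0, x)$ with
\[
v_\beta'(\xi) = \frac{v_\beta(x) - v_\beta(0)}{x} = \frac{v_\beta(x)}{x} < 0,
\]
which is exactly condition (ii).

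The cycle (i) $\Leftrightarrow$ (ii), (i) $\Rightarrow$ (iv) $\Rightarrow$ (iii) $\Rightarrow$ (ii) $\Rightarrow$ (i) then gives the equivalence of all four statements. There is no real obstacle here: all the deep content (the structural dichotomy between $\mathcal{I}$ and $\mathcal{D}$, and the divergence to $-\infty$ on $\mathcal{D}$) has already been absorbed into Lemmas~\ref{lem:betainD} and~\ref{lem:vbetaneginf}, and this lemma simply packages those results together with the trivial observations above into a single convenient statement for later use.
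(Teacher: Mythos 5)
Your proof is correct, and it follows the paper's argument almost exactly: both establish (i) $\Leftrightarrow$ (ii) by citing Lemma~\ref{lem:betainD}, (i) $\Rightarrow$ (iv) by citing Lemma~\ref{lem:vbetaneginf}, and note that (iv) $\Rightarrow$ (iii) is immediate. The single point of divergence is how the cycle is closed. The paper proves (iii) $\Rightarrow$ (i) by contradiction: if $\beta \notin \mathcal{D}$ then $\beta \in \mathcal{I}$ by the partition result (Lemma~\ref{lem:IDpartition}), so $v_{\beta}$ is nondecreasing with $v_{\beta}(0)=0$ and hence nonnegative everywhere, contradicting (iii). You instead prove (iii) $\Rightarrow$ (ii) directly via the Mean Value Theorem: since $v_{\beta}(0)=0$ and $v_{\beta}(x)<0$, some $\xi \in (0,x)$ has $v_{\beta}'(\xi) = v_{\beta}(x)/x < 0$. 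Both closings are valid and equally short; yours is marginally more self-contained in that it relies only on continuity of $v_\beta'$ (Lemma~\ref{lem:ivpcont}) rather than on the $\mathcal{I}$/$\mathcal{D}$ dichotomy, while the paper's version reuses structural machinery it has already built. There is no gap in your argument.
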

\begin{proof}
Statements i) and ii) are equivalent by Lemma~\ref{lem:betainD}. Statement i) implies iv) by Lemma~\ref{lem:vbetaneginf}. Clearly, iv) implies iii). To conclude the proof, we argue iii) implies i) by contradiction. Suppose iii) holds but $\beta \in \mathcal{I}$, but the latter implies $v_{\beta}(x)$ is nondecreasing in $x$. Because $v_{\beta}(0) = 0$, it implies $v_{\beta}(x) \ge 0$ for all $x > 0$, a contradiction.
\end{proof} \vspace{-20pt}

%
%
\begin{lemma} \label{lem:14}
Let $\beta_2 > \beta_1 > \underline{\beta}$. If $\beta_2 \in \mathcal{D}$, then $\beta_1 \in \mathcal{D}$.
\end{lemma}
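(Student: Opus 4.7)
The plan is to chain together two prior results: the monotonicity of $v_{\beta}$ in $\beta$ (Lemma~\ref{lem:vincbeta}) and the equivalent characterizations of membership in $\mathcal{D}$ (Lemma~\ref{lem:equiv}). The key observation is that $\mathcal{D}$ can be detected by a pointwise condition on $v_{\beta}$, and such pointwise conditions are preserved under the monotonicity in $\beta$.

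First I would apply Lemma~\ref{lem:equiv} (the equivalence of i) and iii)) to the hypothesis $\beta_2 \in \mathcal{D}$: this yields some $x_0 > 0$ with $v_{\beta_2}(x_0) < 0$. Second, since $\beta_1 < \beta_2$, Lemma~\ref{lem:vincbeta} (increasingness of $v_{\beta}(x)$ in $\beta$) gives $v_{\beta_1}(x_0) \le v_{\beta_2}(x_0) < 0$. Third, applying Lemma~\ref{lem:equiv} in the opposite direction (iii) $\Rightarrow$ i)) to $v_{\beta_1}$ concludes $\beta_1 \in \mathcal{D}$.

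I do not expect any real obstacle here. The only thing to double-check is the direction of monotonicity in $\beta$: inspection of the IVP~(\ref{eqn:ivp1}) shows that, holding $v(x)$ fixed, an increase in $\beta$ forces $v'(x)$ to increase (since $\sigma^2/2 > 0$), which together with the common initial condition $v(0)=0$ is what underlies Lemma~\ref{lem:vincbeta}; this is precisely the direction needed above. Thus the lemma is essentially an immediate corollary of Lemmas~\ref{lem:vincbeta} and~\ref{lem:equiv}.
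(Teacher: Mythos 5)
Your proof is correct and is essentially identical to the paper's: both invoke Lemma~\ref{lem:equiv} to extract a point $x_0$ with $v_{\beta_2}(x_0) < 0$, use the monotonicity of $v_{\beta}$ in $\beta$ from Lemma~\ref{lem:vincbeta} to get $v_{\beta_1}(x_0) < 0$, and apply Lemma~\ref{lem:equiv} again. (The only cosmetic difference is that Lemma~\ref{lem:vincbeta} actually gives the strict inequality $v_{\beta_1}(x_0) < v_{\beta_2}(x_0)$, which your weak inequality still suffices for.)
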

\begin{proof}
If $\beta_2 \in \mathcal{D}$, then by Lemma~\ref{lem:equiv}, there exists $x_0$ such that $v_{\beta_2}(x_0) < 0$. Then by Lemma~\ref{lem:vincbeta}, $ v_{\beta_1}(x_0) < v_{\beta_2}(x_0) < 0$. Thus, by Lemma~\ref{lem:equiv}, $\beta_1 \in \mathcal{D}$.
\end{proof} \vspace{-20pt}

%
%
Next, we define $\beta^* = \inf \mathcal{I}$. Ultimately, we prove that this is the optimal long-run average cost for the workload problem.
\begin{lemma} \label{lem:15}
We have that $\beta^* > \underline{\beta}$.
\end{lemma}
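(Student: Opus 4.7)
The plan is to exploit the structure already established for the sets $\mathcal{I}$ and $\mathcal{D}$: they partition $(\underline{\beta}, \infty)$ (Lemma~\ref{lem:IDpartition}), and they satisfy a monotone ordering property (Lemma~\ref{lem:14}): whenever $\beta_2 \in \mathcal{D}$ and $\underline{\beta} < \beta_1 < \beta_2$, then $\beta_1 \in \mathcal{D}$. Together these two facts imply that $\mathcal{D}$ is an interval of the form $(\underline{\beta}, \beta^*)$ or $(\underline{\beta}, \beta^*]$, while $\mathcal{I}$ is its complement in $(\underline{\beta}, \infty)$; hence $\beta^* = \inf \mathcal{I} = \sup \mathcal{D}$.

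The proof then reduces to showing that $\mathcal{D}$ extends strictly above $\underline{\beta}$. First I would invoke Lemma~\ref{lem:IDnotempty} to pick some $\beta_0 \in \mathcal{D}$; by definition of $\mathcal{D}$ we have $\beta_0 > \underline{\beta}$. Next, suppose for contradiction that $\beta^* = \underline{\beta}$. Then by definition of infimum there would exist $\beta \in \mathcal{I}$ with $\underline{\beta} < \beta < \beta_0$. Applying Lemma~\ref{lem:14} with $\beta_2 = \beta_0$ and $\beta_1 = \beta$, we would conclude $\beta \in \mathcal{D}$, contradicting $\mathcal{I} \cap \mathcal{D} = \emptyset$ from Lemma~\ref{lem:IDpartition}. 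Therefore $\beta^* \geq \beta_0 > \underline{\beta}$.

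There is no real obstacle here: the work has been done in the earlier lemmas, which set up the structural dichotomy and the monotonicity of membership in $\mathcal{D}$ as $\beta$ decreases. The only subtlety is to make sure one is invoking the correct direction of Lemma~\ref{lem:14} (small $\beta$ inherits the decreasing behavior from large $\beta$, not the other way around), so I would phrase the contradiction argument carefully to avoid mixing up which set is ``below'' and which is ``above.'' Once that is done, the proof is essentially a one-line consequence of Lemmas~\ref{lem:IDpartition}, \ref{lem:IDnotempty}, and~\ref{lem:14}.
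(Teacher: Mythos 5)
Your proposal is correct and follows essentially the same route as the paper: both arguments take a point $\tilde{\beta} \in \mathcal{D}$ with $\tilde{\beta} > \underline{\beta}$ from Lemma~\ref{lem:IDnotempty}, use the downward monotonicity of $\mathcal{D}$-membership from Lemma~\ref{lem:14} together with the disjointness in Lemma~\ref{lem:IDpartition} to conclude that every element of $\mathcal{I}$ lies at or above $\tilde{\beta}$, and hence $\beta^* = \inf \mathcal{I} \ge \tilde{\beta} > \underline{\beta}$. The only difference is presentational (you phrase it as a contradiction, the paper argues directly), and your care about the direction of Lemma~\ref{lem:14} is exactly right.
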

\begin{proof}
Because $\mathcal{D} \not= \emptyset$ by Lemma~\ref{lem:IDnotempty}, there exists $\tilde{\beta} > \underline{\beta}$ such that $\tilde{\beta} \in \mathcal{D} $. Thus, by Lemma~\ref{lem:14}, $(\underline{\beta}, \tilde{\beta}) \subset \mathcal{D}$. Moreover, for all $\beta \in \mathcal{I}$, we must have $\beta \ge \tilde{\beta} $. Otherwise, $\beta \in \mathcal{D}$, a contradiction. This implies then that
$\beta^* = \inf \mathcal{I} \ge \tilde{\beta} > \underline{\beta}$.
\end{proof} \vspace{-20pt}

%
%
\begin{lemma} \label{lem:16}
We have that $\beta^* \in \mathcal{I}$ and $v_{\beta^*}$ is bounded.
\end{lemma}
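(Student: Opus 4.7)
The plan is to establish the two conclusions separately, each by contradiction, and each leveraging the continuity of $\beta \mapsto v_\beta(x_0)$ from Lemma~\ref{lem:vincbeta} together with the partition/monotonicity structure already developed. Two structural facts that I would invoke at the outset: by Lemma~\ref{lem:14} the set $\mathcal{D}$ is downward closed, so $\mathcal{I}$ is upward closed; combined with Lemma~\ref{lem:IDnotempty} and Lemma~\ref{lem:IDpartition} this gives $(\beta^*,\infty)\subseteq\mathcal{I}\subseteq[\beta^*,\infty)$.

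First, I would show $\beta^*\in\mathcal{I}$. Suppose not; then by Lemma~\ref{lem:IDpartition} we have $\beta^*\in\mathcal{D}$, and Lemma~\ref{lem:equiv} (iii) supplies an $x_0>0$ with $v_{\beta^*}(x_0)<0$. Pick any sequence $\beta_n\downarrow \beta^*$ with $\beta_n\in\mathcal{I}$ (possible since $(\beta^*,\infty)\subseteq\mathcal{I}$). By the continuity in $\beta$ provided by Lemma~\ref{lem:vincbeta}, $v_{\beta_n}(x_0)\to v_{\beta^*}(x_0)<0$, so $v_{\beta_n}(x_0)<0$ for all sufficiently large $n$. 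Lemma~\ref{lem:equiv} then forces $\beta_n\in\mathcal{D}$, contradicting $\beta_n\in\mathcal{I}$ (the two sets are disjoint).

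Second, I would show $v_{\beta^*}$ is bounded, namely $v_{\beta^*}(x)\le p$ for every $x\ge 0$. Suppose instead that $v_{\beta^*}$ is unbounded. Since $\beta^*\in\mathcal{I}$, $v_{\beta^*}$ is increasing, so $v_{\beta^*}(x)\uparrow\infty$ as $x\to\infty$, and in particular there exists $x_0>0$ with $v_{\beta^*}(x_0)>p$ (strict). By the continuity of $\beta\mapsto v_\beta(x_0)$ in Lemma~\ref{lem:vincbeta}, there exists $\beta<\beta^*$ close enough to $\beta^*$ that $v_\beta(x_0)>p$ still holds. Lemma~\ref{lem:betaI} then yields $\beta\in\mathcal{I}$, contradicting $\beta<\beta^*=\inf\mathcal{I}$.

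The argument is short, and there is no real obstacle beyond being careful about the direction of approach in each contradiction: in the first part I approach $\beta^*$ from above (inside $\mathcal{I}$) to transport the strict inequality $v_{\beta^*}(x_0)<0$ to nearby $\beta_n\in\mathcal{I}$, while in the second part I approach from below to transport the strict inequality $v_{\beta^*}(x_0)>p$ to nearby $\beta<\beta^*$. Using strict inequality at the limit point $\beta^*$ is what ensures the inequality survives an arbitrarily small perturbation in $\beta$, which is exactly the hypothesis required to invoke Lemma~\ref{lem:equiv} and Lemma~\ref{lem:betaI} respectively.
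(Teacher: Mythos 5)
Your proposal is correct and follows essentially the same route as the paper: both parts are argued by contradiction, transporting a strict inequality ($v_{\beta^*}(x_0)<0$ in the first part, $v_{\beta^*}(x_0)>p$ in the second) to nearby values of $\beta$ via the continuity in Lemma~\ref{lem:vincbeta}, and then invoking Lemma~\ref{lem:equiv} and Lemma~\ref{lem:betaI} to contradict the definition of $\beta^*=\inf\mathcal{I}$. The only cosmetic difference is that the paper works with $v_{\beta^*}(x_0)>2p$ to create slack before perturbing $\beta$ downward, whereas you perturb directly from the strict inequality $>p$; both are valid.
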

\begin{proof}
We argue by contradiction that $\beta^* \in \mathcal{I}$. Suppose  $\beta^* \notin \mathcal{I}$, then by Lemma~\ref{lem:IDpartition}, $\beta^* \in \mathcal{D}$ because $\beta^* > \underline{\beta}$ by Lemma~\ref{lem:14}. Moreover, by Lemma~\ref{lem:equiv}, there exists $x_1 > 0$ such that $v_{\beta^*}(x_1) < 0$. Because $v_{\beta}(x_1)$ is continuous in $\beta$ by Lemma~\ref{lem:vincbeta}, there exists $\delta > 0$ such that $v_{\beta}(x_1) < 0$ for $\beta \in (\beta^* - \delta, \beta^* + \delta)$. However, by definition of $\beta^*$, there exists $\tilde{\beta} \in (\beta^*, \beta^* + \delta)$ such that $\tilde{\beta} \in \mathcal{I}$, in particular $v_{\tilde{\beta}}(x) \ge 0$ for all $x > 0$ (by definition of $\mathcal{I}$ and $v_{\beta}(0) = 0$), which implies $v_{\tilde{\beta}}(x_1) \ge 0$, a contradiction. Thus, $\beta^* \in \mathcal{I}$.

We prove that $v_{\beta^*}$ is bounded by contradiction.
Suppose not. Then there exists $x_0$ such that $v_{\beta^*}(x_0) > 2  p$. Moreover, by Lemma~\ref{lem:vincbeta}, there exists $\varepsilon > 0$ such that $v_{\beta^* - \varepsilon}(x_0) >  p$, which implies $\beta^* - \varepsilon \in \mathcal{I}$ by Lemma~\ref{lem:betaI}. That $\beta^* - \varepsilon \in \mathcal{I}$ , however, contradicts the definition of $\beta^*$.
\end{proof} \vspace{-20pt}

%
%
\begin{lemma} \label{lem:17}
We have that $\mathcal{D} = (\underline{\beta}, \beta^*)$ and $\mathcal{I} = [\beta^*, \infty)$.
\end{lemma}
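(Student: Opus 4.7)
The plan is to combine the monotonicity result Lemma~\ref{lem:14} with the facts already established about $\beta^*$, namely that $\beta^* > \underline{\beta}$ (Lemma~\ref{lem:15}) and $\beta^* \in \mathcal{I}$ (Lemma~\ref{lem:16}), and with the partition property $\mathcal{I} \cup \mathcal{D} = (\underline{\beta}, \infty)$, $\mathcal{I} \cap \mathcal{D} = \emptyset$ (Lemma~\ref{lem:IDpartition}). Given these ingredients the proof should be essentially a two-line argument, once framed correctly; no further analysis of $v_\beta$ is needed.

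First I would show the inclusion $[\beta^*, \infty) \subseteq \mathcal{I}$. The contrapositive of Lemma~\ref{lem:14} states that if $\beta_1 \in \mathcal{I}$ and $\beta_2 > \beta_1$, then $\beta_2 \in \mathcal{I}$. Taking $\beta_1 = \beta^*$, which belongs to $\mathcal{I}$ by Lemma~\ref{lem:16}, yields $\beta \in \mathcal{I}$ for every $\beta > \beta^*$, and combined with $\beta^* \in \mathcal{I}$ itself we obtain $[\beta^*, \infty) \subseteq \mathcal{I}$.

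Next I would show $(\underline{\beta}, \beta^*) \subseteq \mathcal{D}$. For any $\beta \in (\underline{\beta}, \beta^*)$, the definition $\beta^* = \inf \mathcal{I}$ rules out $\beta \in \mathcal{I}$; since $\mathcal{I}$ and $\mathcal{D}$ partition $(\underline{\beta}, \infty)$ by Lemma~\ref{lem:IDpartition}, it follows that $\beta \in \mathcal{D}$.

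Finally, the two inclusions together with the partition property give equality: $\mathcal{I} = [\beta^*, \infty)$ and $\mathcal{D} = (\underline{\beta}, \beta^*)$. There is no real obstacle here; the content is entirely bookkeeping once Lemmas~\ref{lem:14}, \ref{lem:IDpartition}, \ref{lem:15}, and \ref{lem:16} are in hand. The only subtlety worth noting in the writeup is to be explicit that $\beta^*$ itself belongs to $\mathcal{I}$ rather than to $\mathcal{D}$, which is why the interval on the $\mathcal{D}$ side is open at $\beta^*$ while the one on the $\mathcal{I}$ side is closed.
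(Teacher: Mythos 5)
Your proof is correct and follows essentially the same route as the paper's: both inclusions are obtained from the definition of $\beta^* = \inf\mathcal{I}$, Lemma~\ref{lem:16}, the monotonicity in Lemma~\ref{lem:14} (the paper phrases it as a direct contradiction, you as the contrapositive — logically identical), and the partition in Lemma~\ref{lem:IDpartition}. No gaps; the only implicit hypothesis, $\beta^* > \underline{\beta}$ needed to invoke Lemma~\ref{lem:14}, is covered by Lemma~\ref{lem:15} as you note.
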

\begin{proof}
It follows from the definition of $\beta^*$ and that $\beta^* \in \mathcal{I}$ that $(\underline{\beta}, \beta^*) \subset \mathcal{D}$. Next, we argue that no $\beta$ in $[\beta^*, \infty)$ belongs to $\mathcal{D}$. Then the result follows from Lemma~\ref{lem:IDpartition}. To this end, suppose there exists $\beta \in [ \beta^*, \infty) \cap \mathcal{D}$. Then, by Lemma~\ref{lem:14}, $\beta^* \in \mathcal{D}$, a contradiction. Hence, the result follows.

\end{proof} \vspace{-25pt}

%
%
\begin{lemma} \label{lem:18}
We have that $\lim_{x \rightarrow \infty} v_{\beta^*}(x) =  p$.
\end{lemma}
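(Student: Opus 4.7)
The plan is to first pin down that $v_{\beta^*}$ actually has a finite limit at infinity, and then sandwich that limit between $p$ and $p$ using the two key tools already proved: Lemma~\ref{lem:16} (boundedness plus monotonicity of $v_{\beta^*}$), Lemma~\ref{lem:betaI} (the dichotomy bounded/unbounded controlled by the threshold $p$), and the IVP($\beta^*$) itself.

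Concretely, I would first invoke Lemma~\ref{lem:16} to note that $\beta^* \in \mathcal{I}$, so $v_{\beta^*}$ is increasing, and moreover $v_{\beta^*}$ is bounded. A bounded, monotone function on $[0, \infty)$ has a finite limit, so I would set $L := \lim_{x \to \infty} v_{\beta^*}(x)$, which exists in $\mathbb{R}$. The upper bound $L \le p$ then follows from the contrapositive of Lemma~\ref{lem:betaI}: if some $v_{\beta^*}(x_0) \ge p$, then $v_{\beta^*}$ would be unbounded, contradicting Lemma~\ref{lem:16}; hence $v_{\beta^*}(x) < p$ for every $x \ge 0$, and taking the limit gives $L \le p$.

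For the matching lower bound $L \ge p$, I would argue by contradiction and suppose $L < p$. Solving IVP($\beta^*$) for $v'_{\beta^*}$ yields
\begin{align*}
\tfrac{1}{2} \sigma^2 v'_{\beta^*}(x) \;=\; \beta^* \;-\; x \kappa (p - v_{\beta^*}(x)) \;+\; \phi(p - v_{\beta^*}(x)).
\end{align*}
Because $v_{\beta^*}$ is bounded and $\phi$ is Lipschitz by Lemma~\ref{lem:lipschitz}, the last term stays bounded in $x$. On the other hand, $p - v_{\beta^*}(x) \ge p - L > 0$, so $x \kappa(p - v_{\beta^*}(x)) \to \infty$. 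Consequently $v'_{\beta^*}(x) \to -\infty$, contradicting that $v_{\beta^*}$ is increasing (whence $v'_{\beta^*}(x) \ge 0$ by continuous differentiability from Lemma~\ref{lem:ivpcont}). Hence $L \ge p$, and combined with the upper bound this yields $L = p$.

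I do not expect a serious obstacle here: the whole difficulty of the Bellman analysis has been absorbed into the earlier lemmas, and the argument reduces to reading off the asymptotic balance of terms in the IVP. The one spot that needs care is ensuring that $L \le p$ is strict enough to apply Lemma~\ref{lem:betaI}'s dichotomy correctly; this is handled by noting that $v_{\beta^*}$ being bounded (not merely having a limit below $p$) is exactly what the lemma's contrapositive provides.
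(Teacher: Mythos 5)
Your proposal is correct and follows essentially the same route as the paper's proof: the upper bound $L \le p$ via Lemma~\ref{lem:betaI} combined with the boundedness from Lemma~\ref{lem:16}, and the lower bound by contradiction, reading off from IVP($\beta^*$) that $v'_{\beta^*}(x) \to -\infty$ when $L < p$. The only cosmetic difference is the last step: you conclude directly from monotonicity that $v'_{\beta^*} \ge 0$, while the paper routes the same contradiction through Lemma~\ref{lem:equiv} and $\beta^* \in \mathcal{D}$.
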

\begin{proof}
Because $v_{\beta^*}$ is bounded, it follows from Lemma~\ref{lem:betaI} that $v_{\beta^*}(x) < p$ for $x > 0$. Moreover, because $v_{\beta^*}$ in nondecreasing (since $\beta^* \in \mathcal{I}$), 
\begin{align}
\lim_{x \rightarrow \infty} v_{\beta^*}(x) \le  p. \label{eqn:A28}
\end{align} 

Let $K = \lim_{x \rightarrow \infty} v_{\beta^*}(x)$ and suppose $K < p$. Rearranging the terms in the IVP($\beta$) (see Equations~(\ref{eqn:ivp1})-(\ref{eqn:ivp2})) gives
\begin{align*}
\frac{1}{2}\sigma^2 v_{\beta^*}^{\prime}(x) = \beta^* 
   +x \kappa \left[ v_{\beta}(x) - p \right] - \phi(p - v_{\beta}(x))
\end{align*}
We conclude by continuity of $\phi$ that
\begin{align*}
\frac{\sigma^2}{2}  \lim_{x \rightarrow \infty} v_{\beta^*}^{\prime}(x) 
   &= \beta^* - \phi(p-K) + \kappa \lim_{x \rightarrow \infty} x \left[ K - p\right] 
   = - \infty.   
\end{align*}
Therefore, by the above equation, there exists $x_0 > 0$ such that $v_{\beta^*}^{\prime}(x_0) < 0$. Then Lemma~\ref{lem:equiv} implies $\beta^* \in \mathcal{D}$, a contradiction. Thus, $K \ge p$. Combining this with~(\ref{eqn:A28}) completes the proof.
\end{proof} \vspace{-20pt}

%
%
\begin{corollary} \label{cor:solvebellman}
We have that $(\beta^*, v_{\beta^*})$ solve the Bellman equation (\ref{eqn:bellfinal1})-(\ref{eqn:bellfinal2}).
\end{corollary}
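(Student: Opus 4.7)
The plan is to check each requirement of the Bellman equation (\ref{eqn:bellfinal1})-(\ref{eqn:bellfinal2}) in turn against what has already been proved about $v_{\beta^*}$; the corollary should follow by assembling earlier lemmas rather than by any new computation. There are three things to verify: the ODE together with the initial value $v(0)=0$; the monotonicity of $v$; and the boundary condition at infinity, $\lim_{x\to\infty} v(x)=p$.

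First, I would note that $\beta^* > \underline{\beta}$ by Lemma~\ref{lem:15}, so $\mathrm{IVP}(\beta^*)$ is well-posed and, by Lemma~\ref{lem:ivpcont}, admits a unique continuously differentiable solution, which is precisely $v_{\beta^*}$. Because $\mathrm{IVP}(\beta)$ at $\beta=\beta^*$ is the pair (\ref{eqn:ivp1})-(\ref{eqn:ivp2}), this delivers both the differential equation (\ref{eqn:bellfinal1}) and the initial condition $v_{\beta^*}(0)=0$ with no additional work. Second, I would invoke Lemma~\ref{lem:16} to get $\beta^*\in \mathcal{I}$, which by the definition of $\mathcal{I}$ together with Corollary~\ref{cor:ID} and Corollary~\ref{cor:vbetaincdec} yields the required strict monotonicity of $v_{\beta^*}$. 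Third, the limit condition $\lim_{x\to\infty} v_{\beta^*}(x)=p$ is exactly the content of Lemma~\ref{lem:18}. Combining these three observations completes the verification.

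The main obstacle is already behind us: the delicate steps were the continuity of $v_\beta$ in $\beta$ (Lemma~\ref{lem:vincbeta}), the dichotomy $\mathcal{I}\cup\mathcal{D}=(\underline{\beta},\infty)$ (Lemma~\ref{lem:IDpartition}), the identification of $\beta^*$ as both the infimum of $\mathcal{I}$ and an element of $\mathcal{I}$ with bounded $v_{\beta^*}$ (Lemma~\ref{lem:16}), and the argument that this boundedness forces $v_{\beta^*}$ to rise all the way to $p$ (Lemma~\ref{lem:18}). Those arguments rely on pushing $\beta$ slightly below $\beta^*$ and importing properties from $\mathcal{D}$ by continuity in $\beta$. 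Once they are in hand, as they are by the time we reach the corollary, the proof reduces to matching definitions, so I would write it as a brief three-line assembly citing Lemmas~\ref{lem:ivpcont}, \ref{lem:15}, \ref{lem:16}, and \ref{lem:18}.
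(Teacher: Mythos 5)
Your proposal is correct and matches the paper's own proof, which is exactly this three-part assembly: the ODE and initial condition hold by construction of IVP($\beta^*$), monotonicity follows from $\beta^*\in\mathcal{I}$ (Lemma~\ref{lem:16}), and the boundary condition at infinity is Lemma~\ref{lem:18}. Your additional explicit appeal to Lemma~\ref{lem:15} and Lemma~\ref{lem:ivpcont} for well-posedness is a harmless (indeed slightly more careful) elaboration of the same argument.
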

\begin{proof}
Note by construction $v_{\beta^*}$ satisfies~(\ref{eqn:bellfinal1}) and $v_{\beta^*}(0) = 0$ (see definition of IVP($\beta$) in Equations~(\ref{eqn:ivp1})-(\ref{eqn:ivp2})). Moreover, since $\beta^* \in \mathcal{I}$ (by Lemma~\ref{lem:16}), we have that $v_{\beta^*}$ is increasing. Lastly, Lemma~\ref{lem:18} shows
$\lim_{x \rightarrow \infty} v_{\beta^*}(x) =  p$.
Thus, $(\beta^*, v_{\beta^*})$ solve the Bellman equation~(\ref{eqn:bellfinal1})-(\ref{eqn:bellfinal2}).

\end{proof}\vspace{-15pt}

\section{Proofs of the Technical Results} \label{app:bellmanproofs}


\paragraph{Proof of Lemma~\ref{lem:vincbeta}.}

First, we prove that $v_{\beta_2} > v_{\beta_1}$ for $x > 0$ and $\beta_2 > \beta_1 > \underline{\beta}$. Fix $\beta_2 > \beta_1 > \underline{\beta}$. Then define
$\tilde{\phi}(y) = \phi(y) - \theta_M y \mbox{ for } y \in \mathbb{R}$,
and note that $\tilde{\phi}$ is a decreasing function. Then substituting $\phi(y) = \tilde{\phi}(y) +\theta_M y$ in Equations~(\ref{eqn:ivp1})-(\ref{eqn:ivp2}) and rearranging the terms give
\begin{align}
v'_{\beta_i}(x) - \frac{2 \kappa}{\sigma^2} \left( x - \frac{\theta_M}{\kappa} \right) v_{\beta_i}(x)
   = \frac{2}{\sigma^2} (\beta_i + \theta_M p) 
   + \frac{2}{\sigma^2} \tilde{\phi}(p - v_{\beta_i}(x))
   - \frac{2}{\sigma^2}  \kappa p x, \,\, x\ge0. \label{eqn:lem:vincbeta1}
\end{align}
We argue by contradiction. Suppose $v_{\beta_1}(x) \ge v_{\beta_2}(x)$ for some $x>0$. Let 
\begin{align*}
\hat{x} = \inf \{ x \ge 0 : v_{\beta_1}(x) \ge v_{\beta_2}(x) \}.
\end{align*}

We proceed with the following two cases: Case (i) $\hat{x} > 0$, Case (ii) $\hat{x} = 0$.

\noindent \emph{Case (i):} $\hat{x} > 0$. Then by continuity of $v_{\beta_i}(\cdot)$, we conclude that 
\begin{align}
v_{\beta_1}(\hat{x}) = v_{\beta_2}(\hat{x})
   \,\, \mbox{and} \,\, v_{\beta_1}(x) < v_{\beta_2}(x) \,\, \mbox{on} \,\, [0, x^*). \label{eqn:lem:vincbeta2}
\end{align}
Note that multiplying Equation~(\ref{eqn:lem:vincbeta1}) by the integrating factor $\exp\big\{ \frac{\kappa x^2 - 2 \theta_M x}{\sigma^2} \big\}$ yields the following: 
\begin{align*}
\Bigg[ v_{\beta_i}(x) \exp \Big\{ \frac{\kappa x^2 - 2 \theta_M x}{\sigma^2} \Big\} \Bigg]'
   =& \exp \Big\{ \frac{\kappa x^2 - 2 \theta_M x}{\sigma^2} \Big\} \frac{2}{\sigma^2} 
   (\beta_i + \theta_M p) \\
& -\frac{2}{\sigma^2} \kappa p x \exp \Big\{ \frac{\kappa x^2 - 2 \theta_M x}{\sigma^2} \Big\} \\
& +\frac{2}{\sigma^2} \exp \Big\{ \frac{\kappa x^2 - 2 \theta_M x}{\sigma^2} \Big\}
   \tilde{\phi} (p - v_{\beta_i}(x)), \,\, x \ge0.
\end{align*}
Integrating both sides of this over $[0, \hat{x}]$ gives the following: 
\begin{align}
\exp \Big\{ \frac{\kappa \hat{x}^2 - 2 \theta_M \hat{x}}{\sigma^2} \Big\} v_{\beta_i}(\hat{x})
   =& \int_0^{\kappa \hat{x}} \exp \Big\{ \frac{\kappa x^2 - 2 \theta_M x}{\sigma^2} \Big\} 
    \frac{2}{\sigma^2} (\beta_i + \theta_M p) dx \nonumber \\
& -\frac{2}{\sigma^2}  \int_0^{\hat{x}} \kappa p x \exp \Big\{ \frac{\kappa x^2 - 2 \theta_M x}{\sigma^2} \Big\}  dx 
   \nonumber\\
& +\frac{2}{\sigma^2} \int_0^{\hat{x}}  \exp \Big\{ \frac{\kappa x^2 - 2 \theta_M x}{\sigma^2} \Big\}
   \tilde{\phi} (p - v_{\beta_i}(x)) dx. \label{eqn:lem:vincbeta3}
\end{align}
Considering Equation~(\ref{eqn:lem:vincbeta3}) for $i=1,2$ and taking the difference gives
\begin{align*}
0 =& \int_0^{\hat{x}}  \frac{2( \beta_2 - \beta_1)}{\sigma^2} 
   \exp \Big\{ \frac{\kappa x^2 - 2 \theta_M x}{\sigma^2} \Big\} dx \\
   & + \frac{2}{\sigma^2} \int_0^{\hat{x}}  \exp \Big\{ \frac{\kappa x^2 - 2 \theta_M x}{\sigma^2} \Big\} 
   [\tilde{\phi} (p - v_{\beta_2}(x)) - \tilde{\phi} (p - v_{\beta_1}(x)) ] dx > 0,
\end{align*}
where the inequality follows from Equation~(\ref{eqn:lem:vincbeta2}) and the monotonicity of $\tilde{\phi}$. This yields a contradiction in Case (i).

\noindent \emph{Case (ii):} $\hat{x} =0$. In this case, there exists a sequence $\{ x_n \}$ such that $x_n \downarrow 0$ and $v_{\beta_1}(x_n) \ge v_{\beta_2}(x_n)$. In particular,
\begin{align}
\frac{v_{\beta_1}(x_n)}{x_n} \ge \frac{v_{\beta_2}(x_n)}{x_n} \,\, \mbox{for} \,\, n \ge1. \label{eqn:lem:vincbeta4}
\end{align}
Because $v_{\beta_1}(0) = v_{\beta_2}(0)$, taking the limit in Equation~(\ref{eqn:lem:vincbeta4}) as $n \rightarrow \infty$ gives $v'_{\beta_2}(0) \le v'_{\beta_1}(0)$. Combining this with Equations~(\ref{eqn:ivp1})-(\ref{eqn:ivp2}) gives 
\begin{align*}
v'_{\beta_2}(0) = \frac{2 \beta_2}{\sigma^2} + \frac{2}{\sigma^2} \phi(p)
   \le v'_{\beta_1}(0) = \frac{2 \beta_1}{\sigma^2} + \frac{2}{\sigma^2} \phi(p).
\end{align*}
Or, $\beta_2 \le \beta_1$, which is a contradiction, too.

Combining the two cases proves that $v_{\beta_2}(x) > v_{\beta_1}(x)$ for $x > 0$.

To conclude the proof, we next prove that $v_{\beta}(x)$ is continuous in $\beta$ on $(\underline{\beta}, \infty)$. To this end, fix $\beta_2 > \beta_1 > \underline{\beta}$ and note from Equation~(\ref{eqn:ivp1}) that
\begin{align}
\frac{1}{2} \sigma^2 v'_{\beta_i}(x) = \beta_i - x  \kappa (p - v_{\beta_i}(x))
   - \phi(p - v_{\beta_i}(x)), \,\, x \ge 0, \,\, i = 1,2. \label{eqn:lem:vincbeta5}
\end{align}
Integrating both sides of Equation~(\ref{eqn:lem:vincbeta5}) on $[0,y]$, we conclude that for $y> 0$
\begin{align*}
v_{\beta_2}(y) = \frac{2}{\sigma^2} \beta_2 y - \frac{\kappa p}{\sigma^2} y^2
   -\frac{2\kappa}{\sigma^2} \int_0^y s v_{\beta_2}(s) ds
   -\frac{2}{\sigma^2} \int_0^y \phi(p - v_{\beta_2}(s)) ds, \\
v_{\beta_1}(y) = \frac{2}{\sigma^2} \beta_1 y - \frac{\kappa p}{\sigma^2} y^2
   -\frac{2\kappa}{\sigma^2} \int_0^y s v_{\beta_1}(s) ds
   -\frac{2}{\sigma^2} \int_0^y \phi(p - v_{\beta_1}(s)) ds.
\end{align*}
Fix $\bar{x} > 0$. Then taking the difference of the preceding two equations for $y \in [0, \bar{x}]$ and using the Lipschitz continuity of $\phi$ (see Lemma~\ref{lem:lipschitz}) yield the following:
\begin{align*}
| v_{\beta_2}(y) - v_{\beta_1}(y)| &\le \frac{2}{\sigma^2} | \beta_2- \beta_1 | y
   + \frac{2\kappa}{\sigma^2} y \int_0^y | v_{\beta_2}(s) - v_{\beta_1}(s)| ds
   + \frac{2L}{\sigma^2} \int_0^y  | v_{\beta_2}(s) - v_{\beta_1}(s)| ds \\
&\le \frac{2}{\sigma^2} | \beta_2- \beta_1 | \bar{x} + \frac{2\kappa}{\sigma^2} (\bar{x} + L)
   \int_0^y | v_{\beta_2}(s) - v_{\beta_1}(s)| ds.
\end{align*}
Note that letting $F(y) = | v_{\beta_2}(y) - v_{\beta_1}(y)|$, we have for $y \in [0, \bar{x}]$ that
\begin{align*}
F(y) \le \frac{2}{\sigma^2} | \beta_2 - \beta_1| \bar{x} + \frac{2(\kappa \bar{x} + L)}{\sigma^2}
   \int_0^y F(s) ds.
\end{align*}
Thus, by Gronwall's inequality (see Lemma~\ref{lem:gronwall}), we write
\begin{align*}
| v_{\beta_2}(y) - v_{\beta_1}(y)| \le \frac{2 \bar{x}}{\sigma^2} | \beta_2 - \beta_1| 
   \exp \Big\{ \frac{2(\kappa \bar{x} + L)}{\sigma^2} y \Big\}, \,\, \forall y \in [0, \bar{x}).
\end{align*}
In particular, $v_{\beta}(y)$ is continuous in $\beta$.
\hfill $\blacksquare$

\paragraph{Proof of Lemma~\ref{lem:betainD}.}
First, we prove that if $\beta \in \mathcal{D}$, then there exists $x_0 > 0$ such that $v_{\beta}^{\prime}( x_0) < 0$. Suppose not. Then $v_{\beta}^{\prime}( x_0) \ge 0$ for all $x \ge 0$, and $v_{\beta}$ is nondecreasing. Then, by Part i) of Corollary~\ref{cor:vbetaincdec}, $v_{\beta}$ is increasing, and $\beta \in \mathcal{I}$, a contradiction.

Next, we prove that if there exists $x_0 > 0$ such that $v_{\beta}^{\prime}( x_0) < 0$, then $\beta \in \mathcal{D}$. To this end, note by Lemma~\ref{lem:vinc} that because $v_{\beta}$ increases strictly to its supremum, it must be that $v_{\beta}$ achieves its maximum at some $x^*< x_0$ and that $v_{\beta}^{\prime}( x^*) = 0$. Then it follows from Lemma~\ref{lem:hk} that $\kappa (p - v_{\beta}(x^*)) > 0$. 
We establish that $\beta \in \mathcal{D}$ by contradiction. Suppose not. Then there exists $x_2 > x_1 > x^*$ such that
\begin{align}
v_{\beta}^{\prime}( x_1) &\le 0 < v_{\beta}^{\prime}( x_2 ) \label{eqn:betainD1} \\
v_{\beta}( x_1) &= v_{\beta}( x_2) \label{eqn:betainD2} 
\end{align}
It also follows from the IVP (see Equations~(\ref{eqn:ivp1})-(\ref{eqn:ivp2})) that
\begin{align}
\beta &= \frac{1}{2} \sigma^2 v_{\beta}^{\prime}(x_1) 
   + x_1 \kappa (p-v_{\beta}(x_1)) - \phi(p-v_{\beta}(x_1)), \label{eqn:betainD3} \\
\beta &= \frac{1}{2} \sigma^2 v_{\beta}^{\prime}(x_2) 
   + x_2 \kappa (p-v_{\beta}(x_2)) - \phi(p-v_{\beta}(x_2)), \label{eqn:betainD4}
\end{align}
Also note that because $v_{\beta}$ is maximized at $x^*$, we have $v_{\beta}(x_1) \le v_{\beta}(x^*)$ and thus
\begin{align}
\kappa (p - v_{\beta}(x_1)) \ge \kappa (p - v_{\beta}(x^*)) > 0. \label{eqn:betainD5}
\end{align}
Then subtracting (\ref{eqn:betainD3}) from (\ref{eqn:betainD4}) while using (\ref{eqn:betainD1})-(\ref{eqn:betainD2}) and (\ref{eqn:betainD5}) gives
\begin{align}
0 = \frac{1}{2} \sigma^2 [ v_{\beta}^{\prime}(x_2) - v_{\beta}^{\prime}(x_1)]
   + (x_2 - x_1) \kappa (p - v_{\beta}(x_1)) > 0,
\end{align}
a contradiction. Thus, $\beta \in \mathcal{D}$.
\hfill $\blacksquare$


\paragraph{Proof of Lemma~\ref{lem:vbetaneginf}.}
It follows from Lemma~\ref{lem:betainD} that there exists $x_0 > 0$ such that $v_{\beta}^{\prime} (x_0) < 0$. Then by Lemma~\ref{lem:vinc}, $v_{\beta}$ achieves its maximum at $x^* < x_0$. In particular, $v_{\beta}^{\prime} (x^*) = 0$, which in turn implies by Lemma~\ref{lem:hk} that $\kappa (p - v_{\beta}(x^*)) > 0$. Moreover, because $x^*$ is the maximizer, we conclude that
\begin{align}
\kappa (p - v_{\beta}(x)) \ge \kappa (p - v_{\beta}(x^*)) > 0 \,\, \mbox{for } x>0. \label{eqn:vbetaneginf1}
\end{align}

To facilitate the analysis, define $\tilde{\phi}(y) = \phi(y) - \theta_M y$ for $y \in \Re$ and note that $\tilde{\phi}$ is a decreasing function. Also, note from the IVP($\beta$) (see Equations~(\ref{eqn:ivp1})-(\ref{eqn:ivp2})) that
\begin{align*}
\frac{1}{2} \sigma^2 v_{\beta}^{\prime}(x) &= \beta + \phi (p - v_{\beta}(x)) 
   -x \kappa (p-v_{\beta}(x)) \\
&= \beta + \tilde{\phi} (p - v_{\beta}(x)) + \theta_M ( p - v_{\beta}(x))
   -x\kappa (p-v_{\beta}(x^*)) \\
& \le \beta + \tilde{\phi} (p - v_{\beta}(x^* )) + \theta_M ( p - v_{\beta}(x))
   -x\kappa (p-v_{\beta}(x^*)).
\end{align*}
That is, 
\begin{align}
\frac{1}{2} \sigma^2 v_{\beta}^{\prime}(x) \le  \beta + \tilde{\phi} (p - v_{\beta}(x^* )) 
   + \theta_M  p - \theta_M v_{\beta}(x) -x \kappa (p-v_{\beta}(x^*)). \label{eqn:vbetaneginf2}
\end{align}

We have two cases to consider:

\noindent \emph{Case 1:} $\theta_M < 0$. In this case, we conclude from Equation~(\ref{eqn:vbetaneginf2}) that 
\begin{align*}
\frac{1}{2} \sigma^2 v_{\beta}^{\prime}(x) \le
   \beta + \tilde{\phi} (p - v_{\beta}(x^*)) + \theta_M  p - \theta_M v_{\beta}(x^*)
   -x \kappa (p-v_{\beta}(x^*)).
\end{align*}
Integrating both sides and using $v_{\beta}(0)=0$ give
\begin{align*}
\frac{1}{2} \sigma^2 v_{\beta}(x) \le
   x[ \beta + \tilde{\phi} (p - v_{\beta}(x^*)) + \theta_M  p - \theta_M v_{\beta}(x^*)]
   -\frac{x^2}{2} \kappa (p-v_{\beta}(x^*)),
\end{align*}
which proves that $v_{\beta}(x) \rightarrow - \infty$ as $x \rightarrow \infty$.

\noindent \emph{Case 2:} $\theta_M > 0$. In this case, the proof proceeds by contradiction. Suppose that $v_{\beta}(x)$ is bounded from below, i.e., there exists $0 < K < \infty$ such that
\begin{align*}
v_{\beta}(x) \ge - K \mbox{ for } x > 0.
\end{align*}
In this case, $-\theta_M v_{\beta}(x) \le \theta_M K$. Substituting this into Equation~(\ref{eqn:vbetaneginf2}) gives 
\begin{align*}
\frac{1}{2} \sigma^2 v_{\beta}^{\prime}(x) \le
   [\beta + \tilde{\phi} (p - v_{\beta}(x^*)) + \theta_M  p + \theta_M K]
   -x k(p-v_{\beta}(x^*)).
\end{align*}
Integrating both sides and using $v_{\beta}(0)=0$ give
\begin{align*}
\frac{1}{2} \sigma^2 v_{\beta}(x) \le
   x[ \beta + \tilde{\phi} (p - v_{\beta}(x^*)) + \theta_M  p +\theta_M K]
   -\frac{x^2}{2}\kappa (p-v_{\beta}(x^*)),
\end{align*}
where the right-hand side tends to $-\infty$ as $x \rightarrow \infty$, implying $v_{\beta}(x) \rightarrow - \infty$, contradicting $v_{\beta}(x)$ is bounded from below.
\hfill $\blacksquare$


\paragraph{Proof of Lemma~\ref{lem:IDnotempty}.}
We first prove that $\mathcal{I} \not= \emptyset$. Note from Lemmas~\ref{lem:IDpartition} and~\ref{lem:vbetaneginf} that it suffices to show that there exists $\beta > \underline{\beta}$ such that $\lim_{x \rightarrow \infty} v_{\beta}(x) = \infty$. To this end, recall from IVP($\beta$) that
\begin{align*}
\frac{1}{2} \sigma^2 v_{\beta}^{\prime}(x) &=
   \beta + \phi(p - v_{\beta}(x)) - x \kappa (p-v_{\beta}(x)) \\
& \ge \beta - \underline{\beta} - x\kappa p + \kappa x  v_{\beta}(x),
\end{align*}
where the inequality holds because $\underline{\beta} = -\inf \phi(y)$. By rearranging the terms, we write
\begin{align*}
v_{\beta}^{\prime}(x) - \frac{2\kappa }{\sigma^2} x v_{\beta}(x) \ge \frac{2}{\sigma^2} ( \beta - \underline{\beta})
   - \frac{2 \kappa p}{\sigma^2} x.
\end{align*}
Multiplying both sides with the integrating factor $\exp \left\{- \frac{\kappa }{\sigma^2} x^2 \right\}$ gives:
\begin{align*}
\left[ \exp \left\{ -\frac{\kappa }{\sigma^2} x^2 \right\} v_{\beta}(x) \right]^{\prime}
   \ge \frac{2}{\sigma^2} (\beta - \underline{\beta}) \exp \left\{ -\frac{\kappa }{\sigma^2} x^2 \right\} 
   - \frac{2 \kappa p}{\sigma^2}x \exp \left\{ -\frac{\kappa }{\sigma^2} x^2 \right\}
\end{align*}
Integrating both sides on $[0, x]$ and using the boundary condition $v_{\beta}(0) = 0$ give
\begin{align*}
\exp \left\{ -\frac{\kappa }{\sigma^2} x^2 \right\} v_{\beta}(x) 
   &\ge \frac{2}{\sigma^2} (\beta - \underline{\beta}) \int_0^x\exp \left\{ -\frac{\kappa }{\sigma^2} s^2 \right\} ds
   + p \int_0^x \left( -\frac{2\kappa }{\sigma^2}s \right)
   \exp \left\{ -\frac{\kappa }{\sigma^2} s^2 \right\} ds \\
&\ge \frac{2}{\sigma^2} (\beta - \underline{\beta})  \int_0^x \exp \left\{ -\frac{\kappa }{\sigma^2} s^2 \right\}  ds
   + p \exp \left\{ -\frac{\kappa }{\sigma^2} s^2 \right\} \Bigg|_0^x \\
&\ge \frac{2}{\sigma^2} (\beta - \underline{\beta})  \int_0^x \exp \left\{ -\frac{\kappa }{\sigma^2} s^2 \right\}  ds
   + p \exp \left\{ -\frac{\kappa }{\sigma^2} x^2 \right\} - p.
\end{align*}
Multiplying both sides with $\exp \left\{ \frac{\kappa }{\sigma^2} x^2 \right\}$ gives
\begin{align}
v_{\beta}(x) \ge \left( \frac{2}{\sigma^2} (\beta - \underline{\beta}) 
   \int_0^x \exp \left\{ -\frac{\kappa }{\sigma^2} s^2 \right\} ds - p \right) 
   \exp \left\{ \frac{\kappa  x^2}{\sigma^2} \right\} + p.
\end{align}

Let $\alpha = \int_0^1 \exp \left\{ -\frac{\kappa }{\sigma^2} s^2 \right\} ds > 0$, and note that for $x> 1$, 
\begin{align}
v_{\beta}(x) \ge \left( \frac{2 \alpha}{\sigma^2} (\beta - \underline{\beta}) 
   - p \right) \exp \left\{ \frac{\kappa  x^2}{\sigma^2} \right\}  + p. \label{eqn:vbge}
\end{align}
Note that for $\beta > \underline{\beta} + \frac{\sigma^2}{2 \alpha}p$, we have from Equation~(\ref{eqn:vbge}) that $v_{\beta}(x) \rightarrow \infty$ as $x \rightarrow \infty$, proving that $\beta \in \mathcal{I}$. Thus, $\mathcal{I} \not= \emptyset$.

Next, we prove that $\mathcal{D} \not= \emptyset$. Suppose not, i.e., $\mathcal{D} = \emptyset$. Let $y^* = \arg \inf \{ \phi(y) : y \le p \}$ and note that $y^* \in (0, p]$. Also, let 
\begin{align*}
\hat{x} = \inf \{ x > 0 : p - v_{\beta}(x) \le y^* \}.
\end{align*}
Recall IVP($\beta$) (see Equations~(\ref{eqn:ivp1})-(\ref{eqn:ivp2})) and note that $v_{\beta}^{\prime}(0) > 0$ for $\beta > \underline{\beta}$:
\begin{align}
\frac{1}{2} \sigma^2 v_{\beta}^{\prime}(x) = \beta + \phi(p - v_{\beta}(x))
   -x \kappa (p - v_{\beta}(x)), \,\,\, x>0, \label{eqn:A20}
\end{align}
and consider the following three cases:

\noindent \emph{Case 1:} $\hat{x} = \infty$. In this case, we necessarily have $y^* < p$ and $p - v_{\beta}(x) \ge y^*$ for all $x > 0$. Then by convexity of $\phi$ and that $y^*$ is the minimizer, we deduce that
$\phi(p - v_{\beta}(x)) \le \phi(p)$ for $x > 0$.
Using this, (\ref{eqn:A20}), and that $p - v_{\beta}(x) \ge y^*$ for all $x > 0$, we conclude that
\begin{align*}
\frac{1}{2} \sigma^2 v_{\beta}^{\prime}(x) \le \beta + \phi(p)- x \kappa  y^*
\end{align*}
from which we see that $v_{\beta}^{\prime}(x) < 0$ for $x$ sufficiently large. Then, Lemma~\ref{lem:betainD} implies that $\beta \in \mathcal{D}$, a contradiction.

\noindent \emph{Case 2:} $\hat{x} = 0$. In this case, $y^* = p$, $\theta_M < 0$. Also note that $\phi(p) = - \underline{\beta}$. Note that for $\varepsilon > 0$ sufficiently small, we have that
\begin{align*}
\phi(p - v_{\beta}(x)) = \phi(p) - \theta_M v_{\beta}(x), \,\,\, x \in (0, \varepsilon).
\end{align*}
Substituting this into Equation~(\ref{eqn:A20}) gives: For $x \in (0, \varepsilon)$
\begin{align*}
\frac{1}{2} \sigma^2 v_{\beta}^{\prime}(x) = \beta + \phi(p) + ( | \theta_M| + \kappa x) v_{\beta}(x) -x \kappa p.
\end{align*}
Rearranging the terms and substituting $\phi(p) = -\underline{\beta}$ yields:
\begin{align*}
v_{\beta}^{\prime}(x) - \left( \frac{2 |\theta_M|}{\sigma^2} + \frac{2\kappa }{\sigma^2}x \right) v_{\beta}(x)
   = \frac{2}{\sigma^2} (\beta - \underline{\beta}) - \frac{2 \kappa p}{\sigma^2} x.
\end{align*}
Multiplying both sides by the integrating factor $g(x) = \exp \left\{ -\frac{\kappa }{\sigma^2} x^2 - \frac{2 | \theta_M|}{\sigma^2} x \right\}$ yields the following:
\begin{align*}
[g(x) v_{\beta}(x)]^{\prime} = \frac{2}{\sigma^2} (\beta - \underline{\beta}) g(x) - \frac{2 \kappa p}{\sigma^2}x g(x).
\end{align*}
Integrating both sides gives and using the boundary condition $v_{\beta}(0)=0$ give
\begin{align*}
v_{\beta}(x) = \frac{1}{g(x)} \int_0^x \frac{2}{\sigma^2} (\beta - \underline{\beta} g(s) ds
   -\frac{2 \kappa p}{\sigma^2 g(x)} \int_0^x s g(s) ds, \,\,\, x \in (0, \varepsilon).
\end{align*}
Clearly, it follows that $\lim_{\beta \downarrow \underline{\beta}} v_{\beta}(x) < 0$ for $x \in (0, \varepsilon)$.  Thus, there exists $\tilde{\beta} > \underline{\beta}$ sufficiently small and $x \in (0, \varepsilon)$ such that $v_{\beta}(x) < 0$. Thus, $\tilde{\beta} \in \mathcal{D}$ by Lemma~\ref{lem:betainD}, a contradiction.

\noindent \emph{Case 3:} $\hat{x} \in (0, \infty)$. In this case, $y^* < p$ (otherwise, i.e., if $y^* = p$, then $\hat{x} =0$). Also, note that $\theta_M > 0$. Note that
\begin{align}
p - v_{\beta}(x) \ge y^*, \,\,\, x \in (0, \hat{x}). \label{eqn:A21}
\end{align}
Then by convexity of $\phi$ and that $y^*$ minimizes $\phi$, we conclude that
\begin{align}
\phi(p - v_{\beta}(x)) \le \phi(p), \,\,\, x \in (0, \hat{x}). \label{eqn:A22}
\end{align}
Substituting~(\ref{eqn:A21})-(\ref{eqn:A22}) into IVP($\beta$), we get
\begin{align*}
\frac{1}{2} \sigma^2 v_{\beta}^{\prime}(x) \le \beta + \phi(p) - x\kappa y^* \le \beta + \phi(p),
   \,\,\, x \in (0, \hat{x}).
\end{align*}
Integrating this from 0 to $\hat{x}$ yields
\begin{align*}
\frac{1}{2} \sigma^2 v_{\beta}^{\prime}(x) \le \frac{1}{2} \sigma^2 (p - y^*)
   \le (\beta + \phi(p)) \hat{x},
\end{align*}
where the first inequality follows from~(\ref{eqn:A21}). In particular, we have that
\begin{align}
\hat{x}(\beta) \ge \frac{\sigma^2 (p - y^*)}{2( \beta + \phi(p))}. \label{eqn:A23}
\end{align}

For $\beta < \underline{\beta} + 1$, we write
\begin{align*}
\hat{x}(\beta) \ge \frac{\sigma^2 (p - y^*)}{2( \underline{\beta} + 1 + \phi(p))}. 
\end{align*}
Let $\beta = \underline{\beta} + \varepsilon$ for $\varepsilon \in (0, 1)$ sufficiently small and consider the IVP($\beta$) at $x = \hat{x}(\beta)$:
\begin{align*}
\frac{1}{2} \sigma^2 v_{\beta}^{\prime}(\hat{x}(p) ) =  \underline{\beta} + \varepsilon
   +\phi(y^*) - \hat{x} \kappa y^*.
\end{align*}
Substituting $\phi(y^*) = \underline{\beta}$ and (\ref{eqn:A23}) yields
\begin{align*}
\frac{1}{2} \sigma^2 v_{\beta}^{\prime}(\hat{x}(p) ) \le \varepsilon
   -\frac{\sigma^2(p - y^*)}{\underline{\beta} + 1 + \phi(p)} \kappa  y^*.
\end{align*}
Thus, for $\varepsilon < \min \left\{ 1, \frac{\sigma^2}{4} \frac{p - y^*}{\underline{\beta} + 1 + \phi(p)} \kappa y^* \right\}$, we have that $v_{\beta}^{\prime}( \hat{x} (\beta)) < 0$, and Lemma~\ref{lem:betainD} implies $\beta = \underline{\beta} + \varepsilon \in \mathcal{D}$, a contradiction.
\hfill $\blacksquare$

\paragraph{Proof of Lemma~\ref{lem:betaI}.}
Let $\beta > \underline{\beta}$. First, assume $\beta \in \mathcal{I}$ and $v_{\beta}$ is unbounded. Then, we trivially have that there exists $x_0$ such that $v_{\beta}(x_0) \ge p$. To prove the converse, assume there exists $x_0 > 0$ such that $v_{\beta} (x_0) \ge p$. Without loss of generality, we assume
\begin{align*}
x_0 = \inf \left\{ x > 0 : v_{\beta}(x) = p \right\}.
\end{align*}
Next, we prove that $v_{\beta}(x) \rightarrow \infty$ as $x \rightarrow \infty$. Then because $v_{\beta}$ increases strictly to its supremum by Lemma~\ref{lem:vinc}, we conclude that $\beta \in \mathcal{I}$. To this end, let
\begin{align*}
x_1 = \inf \left\{ x > x_0 : v_{\beta}(x) \le p \right\}.
\end{align*}
We argue that $x_1 = x_0$. Suppose not, i.e., $x_1 > x_0$. By continuity of $v_{\beta}$, we have $v_{\beta}(x_1) = p$. We also note that
\begin{align}
v_{\beta}(x) > p \,\,\, \mbox{for} \,\,\, x \in [x_0, x_1]. \label{eqn:A24}
\end{align}
Because $v_{\beta}(x_0) = v_{\beta}(x_1)$, the Mean Value theorem implies that there exists $\bar{x} \in ( x_0, x_1)$ such that $v_{\beta}^{\prime}(\bar{x})=0$. Consider IVP($\beta$) at $x=\bar{x}$:
\begin{align}
0 = \beta + \phi(p - v_{\beta}(\bar{x})) + \bar{x}\kappa  \left[ v_{\beta}(\bar{x}) 
   - p \right]. \label{eqn:A25}
\end{align}
Because $v_{\beta}(\bar{x}) \ge p$ by~(\ref{eqn:A24}), we conclude that
\begin{align*}
\phi(p - v_{\beta}(\bar{x})) = \theta_0 (p - v_{\beta}(\bar{x})).
\end{align*}

Thus, we have
\begin{align*}
\beta + \theta_0(p - v_{\beta}(\bar{x}) + \bar{x} \kappa \left[ v_{\beta}(\bar{x}) - p \right] >0,
\end{align*}
which contradicts~(\ref{eqn:A25}). Thus $x_1 = x_0$. Consequently, we deduce that
\begin{align}
v_{\beta}(x) > p \,\,\, \mbox{for} \,\,\, x > x_0. \label{eqn:A26}
\end{align}
In particular, we have that
\begin{align}
\phi(p - v_{\beta}(x)) = \theta_0(p - v_{\beta}(x))  \,\,\, \mbox{for} \,\,\, x > x_0. \label{eqn:A27}
\end{align}
Substituting (\ref{eqn:A26})-(\ref{eqn:A27}) into the IVP($\beta$), we write
\begin{align*}
\frac{1}{2} \sigma^2 v_{\beta}^{\prime} (x) 
   &= \beta + \phi(p - v_{\beta}(x))  -x\kappa  \left( p - v_{\beta}(x) \right) \\
&= \beta + | \theta_0| v_{\beta}(x)  - | \theta_0| p + x\kappa  \left( v_{\beta}(x) -p \right),
   \,\,\, x > x_0 \\
&\ge \beta > 0 \,\,\, \mbox{for} \,\,\, x> x_0.
\end{align*}
Integrating both sides (over $(x_0, x)$) gives
\begin{align}
v_{\beta}(x) &\ge \frac{\sigma^2}{2} \left[ \beta 
   (x - x_0) + v_{\beta}(x_0) \right], \,\,\, x > x_0 \\
&= \frac{\sigma^2}{2} \left[ \beta 
   (x - x_0) + p \right]
\end{align}
Hence, we conclude $v_{\beta}(x) \rightarrow \infty$ as $x \rightarrow \infty$, because the right-hand side tends to $+\infty$ as $x \rightarrow \infty$.
\hfill $\blacksquare$

\section{Further Details of the Numerical Study} \label{app:numerical-details}

\setcounter{equation}{0}
\renewcommand{\theequation}{\thesection\arabic{equation}}

\subsection{Volunteers} \label{sub:volunteer-details}

The following shows the calculations for the arrival rate [arrivals/year] of class $j$ volunteers in the $n^{th}$ system, where $n=56{,}000$:
   
\begin{itemize}
   \item \textbf{Corporate volunteers ($j=1$, repeat volunteers)}: There are two types of corporate volunteers: regulars who volunteer on average 3 times every two years (28\%) and sporadic volunteers who volunteer on average once every 4 years (72\%). Therefore, for class $j=1$, the average arrival rate for a volunteer is $r_1^n = 0.28 \,\,(1.5 \, times/year/volunteer) + 0.72 \,\,(0.25 \, times/year/volunteer) = 0.60 \, times/year/volunteer$ and the overall average arrival rate for the class is $r_1^n k_1^n = 0.60 \times 11{,}200= 6{,}720$ arrivals/year.

   \item \textbf{Individual volunteers ($j=2$, repeat volunteers)}  consist of 4 types of volunteers. In decreasing order of volunteering frequency: 
      0.5\% volunteer 52 times per year,
      17.5\% volunteer 12 times per year,
      32\% volunteer 2 times per year, and
      50\% volunteer once every 4 years.
      Therefore, for class $j=2$, the average arrival rate for a volunteer is $r_2^n = 0.005 \,\, (52 \, times/year/volunteer) + 0.175 \,\, (12 \, times/year/volunteer) + 0.32 \,\, (2 \, times/year/volunteer) + 0.5 \,\, (0.25 \, times/year/volunteer) = 3.125 \, times/year/volunteer$ and the overall average arrival rate for the class is $r_2^n k_2^n = 3.125 \times 16{,}800= 52{,}500$ arrivals/year.
      
   \item \textbf{Social group volunteers ($j=3$, one-time volunteers)}: Food Bank~A estimates that approximately 25\% of volunteer arrivals are social group volunteers. Therefore, we estimate that the arrival rate of class $j=3$ is $\lambda_3^n = 19{,}540$ arrivals/year.
         
   \item \textbf{Overall arrival rate:} Combining the arrival rates of all classes, the total annual arrival rate is $r_1^n k_1^n + r_2^n k_2^n + \lambda_3^n = 78{,}760$ arrivals/year. 
   \end{itemize}

\subsection{Engagement Activities} \label{sub:engage-activities}

The food bank can engage in $L=4$ different volunteer engagement activities.

\begin{itemize}
      \item \textbf{Orientation during volunteer shift ($l=1$)}:  A brief orientation at the beginning and closing remarks at the end of each volunteer shift explains the social impact that the volunteer work has on the population in need. This activity affects the arrival rate of volunteer classes $\mathcal{R}_{1} \cup \mathcal{S}_{1} = \{ 1,2,3\}$. Every time the food bank engages in this activity, on average it increases volunteer arrivals by 2 volunteers. Typical frequency of this activity is once per volunteer shift (i.e., 312 times per year). If the food bank engages in this activity at the beginning of each volunteer shift, the estimated impact is an increase of $2 \,vol/activity \times 312 \,activities/year = 624$ class $j=1,2,3$ volunteers per year, which is a 0.7923\% (=624/78{,}760) increase in total yearly arrivals. Assuming a proportional increase in arrivals, this gives $\hat{r}_{11}^n= 0.7923\% \times r_{1}^n = 0.00475$ additional arrivals/year/volunteer (for a total of $0.00475 \times 11{,}200 = 53$ additional arrivals per year), $\hat{r}_{21}^n= 0.7923\% \times r_{2}^n = 0.02476$ additional arrivals/year/volunteer (for a total of $0.02476 \times 16{,}800 = 416$ additional arrivals per year), and $\hat{\lambda}_{31}^n = 0.7923\% \times \lambda_3^n =155$ additional arrivals/year.
Typically, a junior staff would do the orientation but occasionally, a seasoned volunteer may be able to perform this activity. The orientation activity takes approximately 10 minutes: $\$20/hour \times 10/60 \sim \$3$. Therefore, the annual cost of this activity is \$936. This implies that $F_1^n = \$3 * 312 = \$936$/year. There is no per volunteer cost for the activity ($C_1^n = 0$).
   
      \item \textbf{Electronic communication ($l=2$)}: The food bank sends targeted emails and other forms of electronic communication to volunteers to notify them of volunteer activities. This activity affects the arrival rate of volunteer classes $\mathcal{R}_{2} \cup \mathcal{S}_{2} = \{ 1,2,3\}$. Every time the food bank engages in this activity, on average it increases volunteer arrivals by 15 volunteers. Typical frequency of this activity is once per week (i.e., 52 times per year). If the food bank engages in this activity weekly, the estimated impact is an increase of $15 \,vol/activity \times 52 \,activities/year =780$ class $j=1,2,3$ volunteers per year, which is a 0.9904\% (=780/78{,}760) increase in total yearly arrivals. Assuming a proportional increase in arrivals, this gives $\hat{r}_{12}^n= 0.9904\% \times r_{1}^n = 0.00594$ additional arrivals/year/volunteer (for a total of $0.00594 \times 11{,}200 = 66$ additional arrivals per year), $\hat{r}_{22}^n= 0.9904\% \times r_{2}^n = 0.03095$ additional arrivals/year/volunteer (for a total of $0.03095 \times 16{,}800 = 520$ additional arrivals per year), and $\hat{\lambda}_{32}^n = 0.9904\% \times \lambda_3^n = 194$ additional arrivals/year.
Typically, a senior staff member would compose the e-communication (e.g., a newsletter), taking approximately one hour. It costs approximately \$35 each time the food bank runs activity $l=2$, therefore, $F_2^n = \$35*52 = \$1820$/year. There is no per volunteer cost for the activity ($C_2^n = 0$).

      

      \item \textbf{Speaking engagement ($l=3$)}: Food bank staff can make presentations at organizations in the region to raise awareness. This activity affects the arrival rate of volunteer classes $\mathcal{S}_{3} = \{ 3\}$. Every time the food bank engages in this activity, on average it increases volunteer arrivals by 20 volunteers. Typical frequency of this activity is once per month (i.e., 12 times per year).  If the food bank engages in this activity once per month, the estimated impact is an increase of $\hat{\lambda}_{33}^n = 20 \,vol/activity \times 12 \,activities/year =240$ class $j=3$ volunteer arrivals per year.
For employee time (including travel), it costs approximately \$60 each time the food bank runs activity $l=3$, therefore, $F_3^n = \$60*12 = \$720$/year. (employees spend about one hour on travel and another 45-60 min for presentation, hourly rate \$35/hour). There is no per volunteer cost for the activity ($C_3^n = 0$).
   
      \item \textbf{Tabling at a fair ($l=4$)}: 
Food bank staff can set up an information table at fairs throughout the year. This activity affects the arrival rate of volunteer classes $\mathcal{R}_{4} = \{ 1,3\}$. Every time the food bank engages in this activity, on average it increases volunteer arrivals by 30 volunteers. Typical frequency of this activity is once per month (i.e., 12 times per year).  If the food bank engages in this activity once per month, the estimated impact is an increase of $30 \,vol/activity \times 12 \,activities/year =360$ class $j=1,3$ volunteers. 
Over classes $j=1,3$, this is a $1.3709\% (=360/(6720+19{,}540))$ increase in total yearly arrivals. Assuming a proportional increase in arrivals, this gives $\hat{r}_{14}^n= 1.3709\% \times r_{1}^n = 0.00823$ additional arrivals/year/volunteer (for a total of $0.00823 \times 11{,}200 = 92$ additional arrivals per year) and $\hat{\lambda}_{34}^n = 1.3709\% \times \lambda_3^n = 268$ additional arrivals per year.
For employee time (including travel), it costs approximately \$150 each time the food bank runs activity $l=4$, therefore, $F_4^n = \$150*12 = \$1800$/year. There is no per volunteer cost for the activity ($C_4^n = 0$).

      \end{itemize}

\subsection{Other parameters in the $n^{th}$ system} \label{sub:other-params}

\begin{itemize}

  \item $p$: Cost of throughput loss. If a volunteer slot is not filled, the lost value is the forgone value of the meals the volunteer would have made. Data from Food Bank~A indicates that a volunteer makes approximately 135 meals in a shift. The cost per meal is \$0.35 per meal. Therefore, we estimate that the lost value for a unit of idleness (i.e., unfilled volunteer slot) is $p = 135$ meals $\times$ \$0.35/meal $= \$47.25 \approx \$50$.


  
  \item $\gamma_j$: The abandonment rate of class $j$ volunteers. Food Bank~A indicated that the percentage of abandonment was very low. We perform sensitivity analysis using a range of values for $\gamma_j$.

\end{itemize}

\subsection{Parameters in the limit system} \label{sub:limit-params}

\begin{itemize}
	
   \item $\mu_j$: From (\ref{eqn:lambdajn}) and $\mu_1^n =78{,}000$, we have  $\mu_1 = \mu_2 = \mu_3 = \mu_1^n/n = 78{,}000/56{,}000 = 1.393$ volunteers/year.
   
   \item $\alpha_j$: $\alpha_1 = 1.37010$, $\alpha_2 = 7.13595$ and $\alpha_3 = 1.59356$. 
   
   \item $r_j$ and $\lambda_j$: From Equations~(\ref{eqn:rjn}) and (\ref{eqn:lambdajn}), $r_1= 0.60579$ ($r_1 \hat{k}_1 = 0.12116$), $r_2= 3.15515$ ($r_2 \hat{k}_2=0.94655$), and $\lambda_3= 0.35230$.

   \item $\hat{r}_{jl}$, $\hat{\lambda}_{jl}$, $\eta_{l}$: 
   
   \begin{itemize}
   
      \item \textbf{Orientation ($l=1$)}: 
      From Equation~(\ref{eqn:etajln-repeat})and~(\ref{eqn:etajln-onetime}), we have:
      \begin{align*}
         \hat{r}_{11} &=\sqrt{n} \, \hat{r}_{11}^n = \sqrt{56{,}000} \times 0.00475,\\
         \hat{r}_{21} &=\sqrt{n} \, \hat{r}_{21}^n = \sqrt{56{,}000} \times 0.02476, \\
         \hat{\lambda}_{31} &= \frac{\hat{\lambda}_{31}^n }{ \sqrt{n}} = \frac{155}{ \sqrt{56{,}000} } .
      \end{align*}
      From Equation~(\ref{eqn:preserve-eta}), we have:
      \begin{align*}
         \eta_1 = \frac{\hat{k}_1 \hat{r}_{11}}{\mu_1}  + \frac{\hat{k}_2 \hat{r}_{21}}{\mu_2}  + \frac{\hat{\lambda}_{31}}{\mu_3}
         = \frac{\frac{ 11{,}200 \times  0.00475}{\sqrt{56{,}000}} }{1.393} + \frac{\frac{ 16{,}800 \times 0.02476}{\sqrt{56{,}000}}}{1.393}
            + \frac{\frac{155}{ \sqrt{56{,}000} }}{1.393} = 1.89315.
      \end{align*}

      \item \textbf{E-communication ($l=2$)}: 
      From Equation~(\ref{eqn:etajln-repeat}) and~(\ref{eqn:etajln-onetime}), we have:
      \begin{align*}
         \hat{r}_{12} &=\sqrt{n} \, \hat{r}_{12}^n = \sqrt{56{,}000} \times 0.00594,\\
         \hat{r}_{22} &=\sqrt{n} \, \hat{r}_{22}^n = \sqrt{56{,}000} \times 0.03095, \\
         \hat{\lambda}_{32} &= \frac{\hat{\lambda}_{32}^n }{ \sqrt{n}} = \frac{194}{ \sqrt{56{,}000} }.
      \end{align*}
      From Equation~(\ref{eqn:preserve-eta}), we have:
      \begin{align*}
         \eta_2 = \frac{\hat{k}_1 \hat{r}_{12}}{\mu_1}  + \frac{\hat{k}_2 \hat{r}_{22}}{\mu_2}  + \frac{\hat{\lambda}_{32}}{\mu_3}
         = \frac{\frac{11{,}200 \times 0.00594}{\sqrt{56{,}000}} }{1.393} + \frac{\frac{16{,}800\times 0.03095}{\sqrt{56{,}000}}}{1.393}
            + \frac{\frac{194}{ \sqrt{56{,}000}}}{1.393} = 2.36643.
      \end{align*}
         
      \item \textbf{Speaking ($l=3$)}: 
      From Equation~(\ref{eqn:etajln-onetime}), we have 
      \begin{align*}
         \hat{\lambda}_{33} =  \frac{\hat{\lambda}_{33}^n }{ \sqrt{n}} = \frac{240}{ \sqrt{56{,}000} }.
      \end{align*}
      From Equation~(\ref{eqn:preserve-eta}), we have:
      \begin{align*}
         \eta_3 = \frac{\hat{\lambda}_{32}}{\mu_3}
         = \frac{\frac{240}{ \sqrt{56{,}000} }}{1.393} = 0.72813.
      \end{align*}
          
      \item \textbf{Tabling ($l=4$)}: 
      From Equation~(\ref{eqn:etajln-repeat}) and~(\ref{eqn:etajln-onetime}), we have:
      \begin{align*}
         \hat{r}_{14} &=\sqrt{n} \, \hat{r}_{14}^n = \sqrt{56{,}000} \times 0.00823,\\
				 \hat{\lambda}_{34} &= \frac{\hat{\lambda}_{32}^n }{ \sqrt{n}} = \frac{268}{ \sqrt{56{,}000} }.
      \end{align*}
      From Equation~(\ref{eqn:preserve-eta}), we have:
      \begin{align*}
         \eta_2 = \frac{\hat{k}_1 \hat{r}_{14}}{\mu_1}  + \frac{\hat{k}_2 \hat{r}_{24}}{\mu_2} 
         = \frac{\frac{11{,}200 \times 0.00823}{\sqrt{56{,}000}} }{1.393} + \frac{\frac{268}{\sqrt{56{,}000}} }{1.393}
         = 1.09220.
      \end{align*}

   \end{itemize}

   \item Fixed cost, $F_l$: From Equation~(\ref{eqn:costF}), we have
   
      \begin{itemize}
         
         \item \textbf{Orientation ($l=1$)}: $F_1 = F_1^n/\sqrt{n} = 936/\sqrt{56{,}000} = \$3.96$/year.
         
         \item \textbf{E-communication ($l=2$)}: $F_2 = F_2^n/\sqrt{n} = 1820/\sqrt{56{,}000} = \$7.69$/year.

         \item \textbf{Speaking ($l=3$)}: $F_3 = F_3^n/\sqrt{n} = 720/\sqrt{56{,}000} = \$3.04$/year.
         
         \item \textbf{Tabling ($l=4$)}: $F_4 = F_4^n/\sqrt{n} = 1800/\sqrt{56{,}000} = \$7.61$/year.
             
      \end{itemize}

   \item $p = p^n/n$: Penalty (cost) rate of throughput loss. We assume that $\hat{c}_1 < \hat{c}_2 < \dots < \hat{c}_L < p$ (see (\ref{eqn:cjmlm})).
      
   \item From~(\ref{eqn:handk}), $\kappa = \sum_{j=1}^J (r_j + \gamma_j ) x_j + \sum_{j = J+1}^{J+\tilde{J}} \gamma_j x_j = 2.12367$ for $\gamma_j=0.01$. 
   
   \item Variance of Brownian motion $X(t)$: $\sigma_w^2 = \sum_{j=1}^{J} 2 r_j \hat{k}_j/\mu_j^2 + \sum_{j=J+1}^{J+\tilde{J}} 2 \lambda_j / \mu_j^2 = 1.436$. 

\end{itemize}

\subsection{Optimal Policy Thresholds} \label{sub:policythresh}


From Proposition~\ref{prop:valuefunc}, for $x \in [\tau_l, \tau_{l-1})$ and $l=1, \ldots, L+1$, we have $\tau_l = v^{-1}(p-\hat{c}_l)$ and
\begin{align}
 v(x) &=  p  - \hat{c}_l \exp \left\{ \frac{- 2 \theta_{l-1} (x-\tau_l) + \kappa (x^2-\tau_l^2)}{\sigma^2} \right\}  \nonumber \\ 
& +  \frac{2(\beta  - c(\theta_{l-1}))\sqrt{\pi}\exp \left\{ \frac{(\kappa x - \theta_{l-1})^2}{\kappa\sigma^2} \right\}}{\sigma \sqrt{\kappa}}   
 \left[ \Phi \left( \frac{\kappa x - \theta_{l-1}}{\sigma \sqrt{\kappa/2}} \right) - \Phi \left( \frac{\kappa \tau_l- \theta_{l-1}}{\sigma \sqrt{\kappa/2}} \right) \right]. \label{eqn:v-ul}
\end{align}
Also recall that we defined $\hat{c}_{L+1}=p$, $\tau_{L+1} = 0$, and $\tau_0 = \infty$. Figure~\ref{fig:bellmansoln} shows the solution to the Bellman equation (and relationship between $\tau_l$, $p$, and $\hat{c}_l$).

\begin{figure}[htbp]
\centering
\includegraphics[scale=0.4]{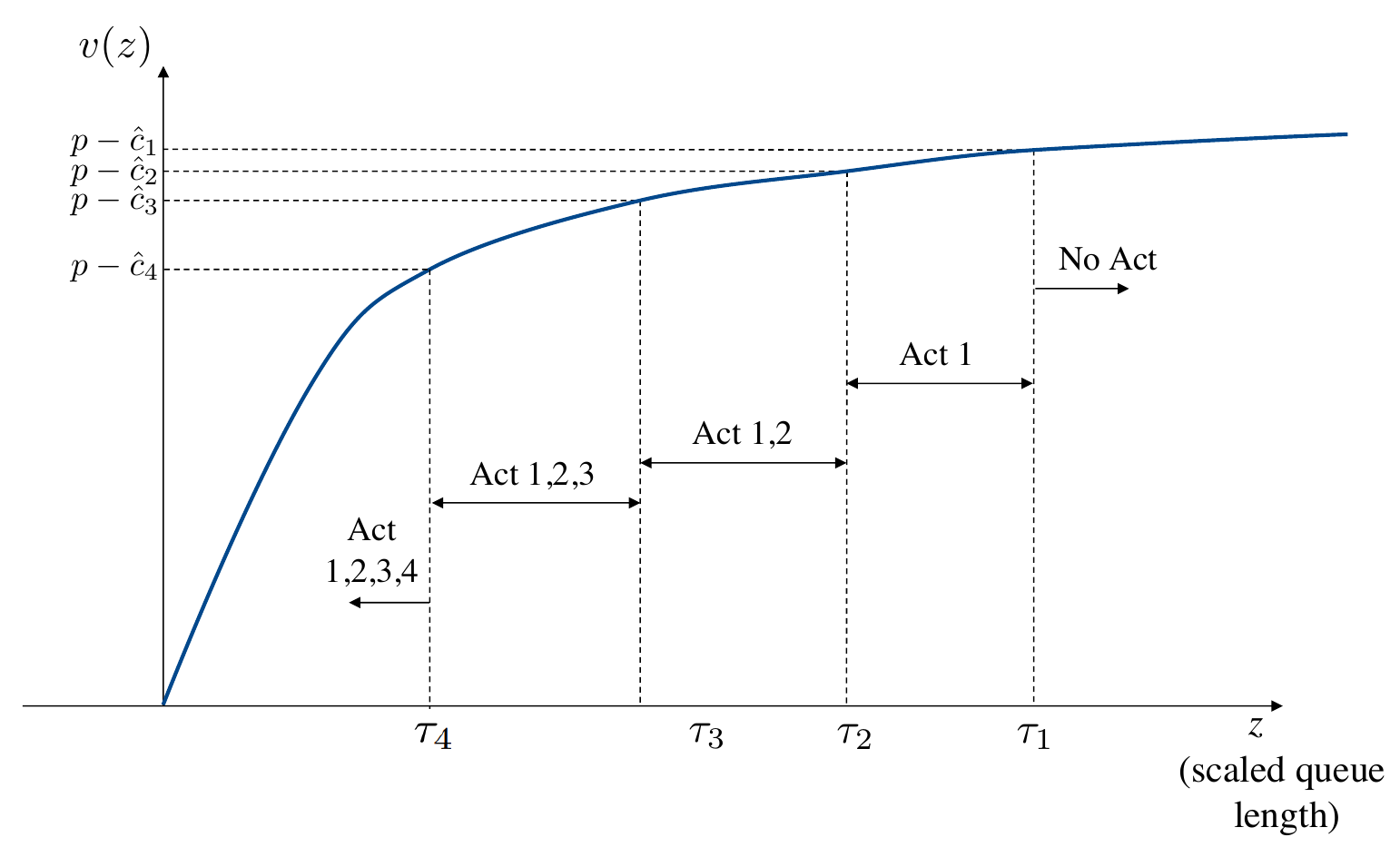}
\caption{The solution to the Bellman equation. }
\label{fig:bellmansoln}
\end{figure}

In the numerical study (Section~\ref{sec:numerical}) there are $L=4$ activities, $\tau_5 = 0$ and $\tau_4 < \ldots < \tau_{1}$. 
We begin with finding the optimal threshold $\tau_4$. Substituting $l=5$ in Equation (\ref{eqn:v-ul}) yields the following implicit equation for $x \in [0, \tau_4)$,
\begin{align}
v(x) &=  p  -  p \exp \left\{ \frac{- 2 \theta_4 (x) + \kappa (x^2)}{\sigma^2} \right\}  \nonumber \\ 
& +  \frac{2(\beta  - c(\theta_4))\sqrt{\pi}}{\sigma \sqrt{\kappa}} 
\exp \left\{ \frac{(\kappa x - \theta_4)^2}{\kappa\sigma^2} \right\} \left[ \Phi \left( \frac{\kappa x - \theta_4}{\sigma \sqrt{\kappa/2}} \right) - \Phi \left( \frac{- \theta_4}{\sigma \sqrt{\kappa/2}} \right) \right], \label{eqn:um1}
\end{align}
where $\Phi$ denotes the cumulative distribution function for a standard normal random variable. We solve for $\tau_4$ in $v(\tau_4) = p-\hat{c}_4$, see Figure~\ref{fig:bellmansoln}.

Similarly, substituting $l=4, \ldots, 1$ in Equation (\ref{eqn:v-ul}) yields Equations (\ref{eqn:um3}), (\ref{eqn:um4}), and (\ref{eqn:um5}), respectively, and using $v(\tau_{l}) = p-\hat{c}_{l}$ as the starting point in the range $[\tau_{l}, \tau_{l-1})$ we solve for $\tau_{l-1}$ in $v(\tau_{l-1}) = p-\hat{c}_{l-1}$ to find the optimal threshold $\tau_{l-1}$.

%

%
For $x \in [\tau_4, \tau_3)$, we have that
\begin{align}
v(x) &=  p  - \hat{c}_4 \exp \left\{ \frac{- 2 \theta_{3} (x-\tau_4) + \kappa (x^2-\tau_4^2)}{\sigma^2} \right\}  \nonumber \\ 
& +  \frac{2(\beta  - c(\theta_{3}))\sqrt{\pi}}{\sigma \sqrt{\kappa}}   
\exp \left\{ \frac{(\kappa x - \theta_{3})^2}{\kappa\sigma^2} \right\} \left[ \Phi \left( \frac{\kappa x - \theta_{3}}{\sigma \sqrt{\kappa/2}} \right) - \Phi \left( \frac{\kappa \tau_4- \theta_{3}}{\sigma \sqrt{\kappa/2}} \right) \right]. \label{eqn:um3}
\end{align}
Using this, we solve for $\tau_{3}$ in $v(\tau_{3}) = p-\hat{c}_{3}$.

For $x \in [\tau_3, \tau_2)$, we have that
\begin{align}
v(x) &=  p  - \hat{c}_3 \exp \left\{ \frac{- 2 \theta_2 (x-\tau_3) + \kappa (x^2-\tau_3^2)}{\sigma^2} \right\}  \nonumber \\ 
& +  \frac{2(\beta  - c(\theta_2))\sqrt{\pi}}{\sigma \sqrt{\kappa}}   
\exp \left\{ \frac{(\kappa x - \theta_2)^2}{\kappa\sigma^2} \right\} \left[ \Phi \left( \frac{\kappa x - \theta_2}{\sigma \sqrt{\kappa/2}} \right) - \Phi \left( \frac{\kappa \tau_3- \theta_2}{\sigma \sqrt{\kappa/2}} \right) \right]. \label{eqn:um4}
\end{align}
Using this, we solve for $\tau_{2}$ in $v(\tau_{2}) = p-\hat{c}_{2}$.

For $x \in [\tau_2, \tau_1)$, we have that
\begin{align}
v(x) &=  p  - \hat{c}_2 \exp \left\{ \frac{- 2 \theta_1 (x-\tau_2) + \kappa (x^2-\tau_2^2)}{\sigma^2} \right\}  \nonumber \\ 
& +  \frac{2(\beta  - c(\theta_1))\sqrt{\pi}}{\sigma \sqrt{\kappa}}   
\exp \left\{ \frac{(\kappa x - \theta_1)^2}{\kappa\sigma^2} \right\} \left[ \Phi \left( \frac{\kappa x - \theta_1}{\sigma \sqrt{\kappa/2}} \right) - \Phi \left( \frac{\kappa \tau_2- \theta_1}{\sigma \sqrt{\kappa/2}} \right) \right]. \label{eqn:um5}
\end{align}
Using this, we solve for $\tau_{1}$ in $v(\tau_{1}) = p-\hat{c}_{1}$.

For completeness, we note that for $x \geq \tau_1$, we have that
\begin{align}
v(x) &=  p  - \hat{c}_1 \exp \left\{ \frac{- 2 \theta_0 (x-\tau_1) + \kappa (x^2-\tau_1^2)}{\sigma^2} \right\}  \nonumber \\ 
& +  \frac{2(\beta)\sqrt{\pi}}{\sigma \sqrt{\kappa}} \exp \left\{ \frac{(\kappa x - \theta_0)^2}{\kappa\sigma^2} \right\} \left[ \Phi \left( \frac{\kappa x - \theta_0}{\sigma \sqrt{\kappa/2}} \right) - \Phi \left( \frac{\kappa \tau_1- \theta_0}{\sigma \sqrt{\kappa/2}} \right) \right]. \nonumber
\end{align}

\subsection{Simulation results details} \label{sub:simresultsdetails}

Details of the simulation results are given in Tables~\ref{tab:simdetailsbase}, \ref{tab:simdetailsrepeat}, and \ref{tab:simdetailsonetime}. Tables~\ref{tab:simdetailsrepeat} and~\ref{tab:simdetailsonetime} show results for simulations where all volunteer classes are repeat volunteers and one-time volunteers, respectively. Table~\ref{tab:simdetailsbase} contains more rows (more values of $\gamma_j$) because it is our base case, with both repeat and one-time volunteer classes.

\begin{landscape}
\begin{table} [tbp]
\centering
\begin{tabular}{ccccccccccc}
Policy  &  $\gamma_j$   &  Act. cost & Idle cost & Total cost & Idle  & Abandon  & Act1 usage &
   Act2 usage & Act3 usage & Act4 usage \\ \hline
Static (1,2)& 0.005 &  \$2756 & \$82 & \$2838 & 0.0021\% & 0.8543\% & 100\% & 100\% & 0\% &0\% \\ 
Dynamic & 0.005 &  \$1767 & \$203 & \$1970 & 0.0052\% & 0.6077\% & 95\% & 36\% & 16\% &6\% \\
Static (1,2,3)& 0.01 &  \$3476 & \$252 & \$3728 & 0.0065\% & 1.3805\% & 100\% & 100\% & 100\% &0\% \\
Dynamic& 0.01 &  \$2495 & \$629 & \$3123 & 0.0161\% & 1.0668\% & 99\% & 60\% & 31\% &14\% \\

Static (1,2,3)& 0.015 &  \$3476 & \$1595 & \$5071 & 0.0409\% & 1.6193\% & 100\% & 100\% & 100\% & 0\% \\
Dynamic& 0.015 &  \$3854 & \$782 & \$4636 & 0.0201\% & 1.6312\% & 100\% & 96\% & 69\% & 37\% \\ 

Static (1,2,3,4)& 0.02 &  \$5276 & \$1352 & \$6628 & 0.0347\% & 2.0865\% & 100\% & 100\% & 100\% &100\% \\
Dynamic& 0.02 &  \$4628 & \$1748 & \$6376 & 0.0448\% & 1.9667\% & 100\% & 100\% & 93\% &67\% \\ 

Static (1,2,3,4)& 0.025 &  \$5276 & \$3593 & \$8869 & 0.0921\% & 2.2403\% & 100\% & 100\% & 100\% & 100\% \\
Dynamic& 0.025 &  \$5023 & \$3822 & \$8854 & 0.0980\% & 2.1949\% & 100\% & 100\% & 99\% & 86\% \\ \hline
\end{tabular}
\caption{Simulation results details for base case ($j=1,2$ are repeat volunteers and $j=3$ are one-time volunteers). }
\label{tab:simdetailsbase}
\end{table}

\begin{table} [tbp]
\centering
\begin{tabular}{ccccccccccc}
Policy  &  $\gamma_j$   &  Act. cost & Idle cost & Total cost & Idle  & Abandon & Act1 usage &
   Act2 usage & Act3 usage & Act4 usage \\ \hline
Static (1,2)& 0.005 &  \$2756 & \$175 & \$2931 & 0.0045\% & 0.7909\% & 100\% & 100\% & 0\% &0\% \\ 
Dynamic & 0.005 &  \$1930 & \$301 & \$2231 & 0.0077\% & 0.5879\% & 96\% & 41\% & 19\% &8\% \\
Static (1,2,3)& 0.01 &  \$3476 & \$504 & \$3980 & 0.0129\% & 1.3056\% & 100\% & 100\% & 100\% &0\% \\
Dynamic& 0.01 &  \$2906 & \$574 & \$3480 & 0.0147\% & 1.1099\% & 100\% & 75\% & 40\% &18\% \\

Static (1,2,3,4)& 0.02 &  \$5276 & \$1822 & \$7098 & 0.0467\% & 2.0076\% & 100\% & 100\% & 100\% &100\% \\
Dynamic& 0.02 &  \$4744 & \$2166 & \$6910 & 0.0555\% & 1.9127\% & 100\% & 100\% & 94\% &73\% \\ \hline

\end{tabular}
\caption{Simulation results details for $j=1,2,3$ are repeat volunteers. }
\label{tab:simdetailsrepeat}
\end{table}

\begin{table} [tbp]
\centering
\begin{tabular}{ccccccccccc}
Policy  &  $\gamma_j$   &  Act. cost & Idle cost & Total cost & Idle  & Abandon & Act1 usage &
   Act2 usage & Act3 usage & Act4 usage \\ \hline
Static (none)& 0.005 &  \$0 & \$526 & \$526 & 0.0135\% & 0.9486\% & 0\% & 0\% & 0\% &0\% \\ 
Dynamic & 0.005 &  \$77 & \$0 & \$77 & 0.0000\% & 0.9934\% & 7\% & 1\% & 0\% &0\% \\
Static (1)& 0.01 &  \$936 & \$138 & \$1074 & 0.0035\% & 1.7165\% & 100\% & 0\% & 0\% &0\% \\
Dynamic& 0.01 &  \$534 & \$55 & \$589 & 0.0014\% & 1.3120\% & 38\% & 7\% & 4\% &1\% \\

Static (1,2)& 0.02 &  \$2756 & \$186 & \$2942 & 0.0048\% & 2.6708\% & 100\% & 100\% & 0\% &0\% \\
Dynamic& 0.02 &  \$1823 & \$379 & \$2202 & 0.0097\% & 2.1349\% & 98\% & 35\% & 17\% &8\% \\ \hline

\end{tabular}
\caption{Simulation results details for $j=1,2,3$ are one-time volunteers. }
\label{tab:simdetailsonetime}
\end{table}

\end{landscape}

\subsection{Simulation robustness checks} \label{sub:simrobust}

\subsubsection{Transition analysis} \label{sub:transitionsim}

We compare two simulations to investigate the performance of the system during the transition when the food bank switches from the best static policy to the dynamic policy. In each simulation, there is a warmup period of 20~years (Years~1-20), a period of 5 years (Period~1, starting in Year~21), and a period of 25~years (Period~2, starting in Year~26). In the first simulation (Sim~A), we use the dynamic policy from Section~\ref{sec:numerical} throughout, i.e., during the warmup period, Period~1, and Period~2. In the second simulation (Sim~B), we use the best static policy from Section~\ref{sec:numerical} during the warmup period and Period~1, but we switch to the dynamic policy in Period~2 (see Table~\ref{tab:sima-simb}). 
\begin{table} [htbp]
\centering
\begin{tabular}{c|c|c|c|c}
   & Years & Quarters, $q$ & Sim A policy & Sim B policy \\ \hline
      warmup & $1-20$ & $1-80$ & dynamic & static \\
      Period 1 & $21-25$ & $81-100$ & dynamic & static \\
      Period 2 & $26-50$ & $101-200$ & dynamic &dynamic \\ \hline
\end{tabular}
\caption{Structure of Sim A and Sim B. }
\label{tab:sima-simb}
\end{table}

We collect metrics on a quarterly basis to analyze changes in the transition period on a more granular timeframe. Quarter numbers start at $q=1$ at the beginning of Year~1 and increment by one each quarter until $q=200$ at the end of Year~50. Each simulation is replicated~50 times. We use common random numbers for Sim~A and Sim~B. That is, each replication uses a different seed, but Sim~A and Sim~B for the same replication number use the same seed.

Figure~\ref{fig:queuelength} shows the system state (i.e., queue length) by quarter for Period~1 and the first 5~years of Period~2 (the first 5~years of Period~2 are sufficient to show the behaviour during the transition). We see that that in Period~1, Sim~B's average queue length (in the range of $420-450$ volunteers) is consistently and noticeably higher than that of Sim~A (in the range of $386-405$ volunteers). Sim~B is using the static policy in Period~1, which overuses activities to engage volunteers, thereby resulting in more arrivals and longer queues. At the start of Period~2 ($q=101$), Sim~B switches to the dynamic policy and immediately, the average queue length decreases. In $q=101$, the difference in average queue length between Sim~B and Sim~A is 14 volunteers, but from quarter $q=102$ on, the average queue lengths of the two simulations are virtually identical. This suggests that it takes one quarter (or less) for the system state to reach steady state after we switch from the static policy to the dynamic policy. 

\begin{figure}[htbp]
\centering
\includegraphics[scale=0.5]{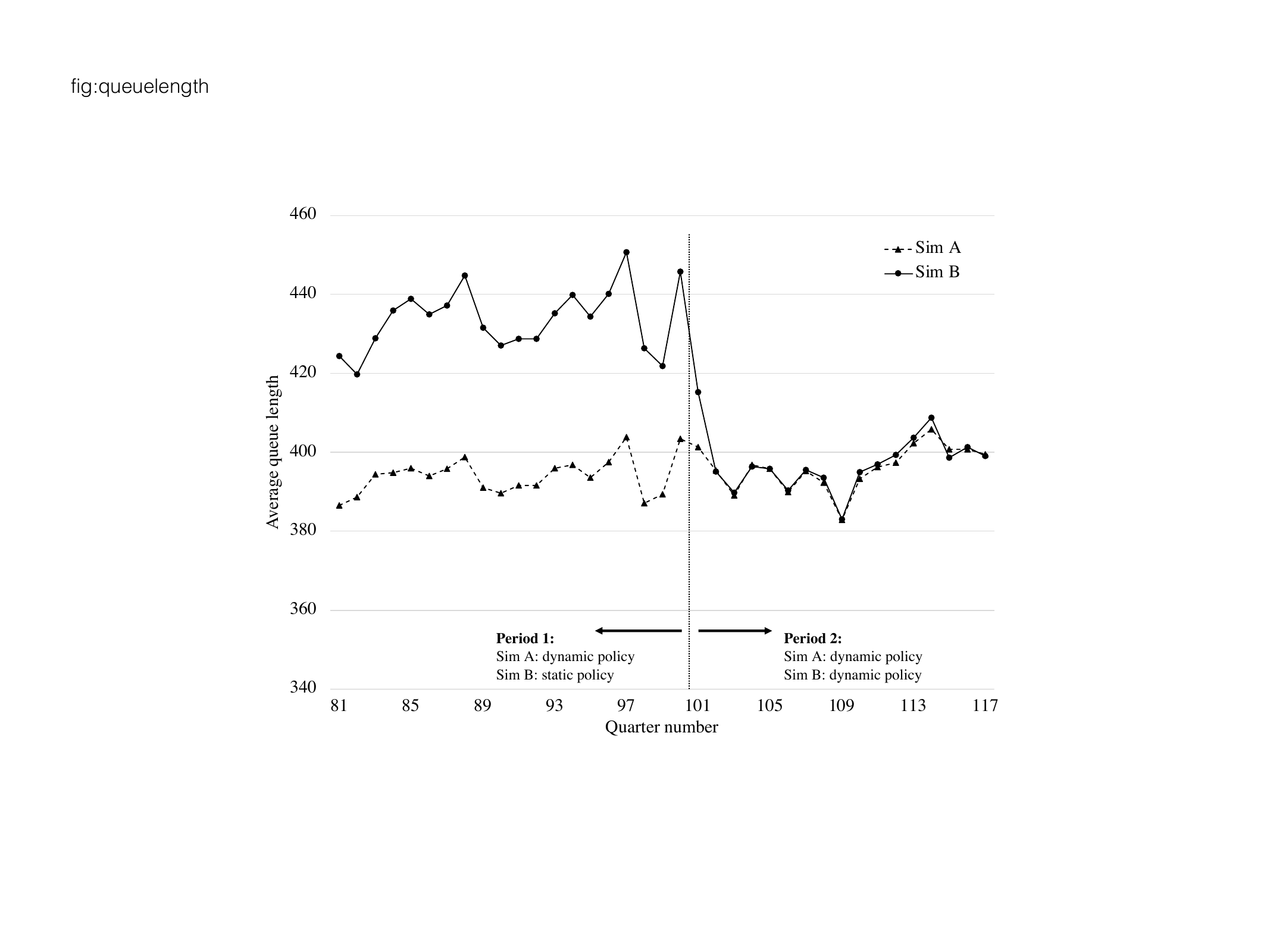}
\caption{Average queue length for Period 1 ($q=81, \dots, 100$) and the first 5 years of Period 2 ($q=101, \dots , 120$).  }
\label{fig:queuelength}
\end{figure}

The average abandonment percentage (Figure~\ref{fig:abandon}) tells a story that is consistent with the average queue length. Abandonments in Period~1 are higher for Sim~B because the queue length is higher. When Sim~B switches to the dynamic policy in Period~2 ($q=101$), there is one quarter of transition time and from $q=102$ onwards, the average abandonment percentages of Sim~A and Sim~B are virtually identical.

\begin{figure}[htbp]
\centering
\includegraphics[scale=0.5]{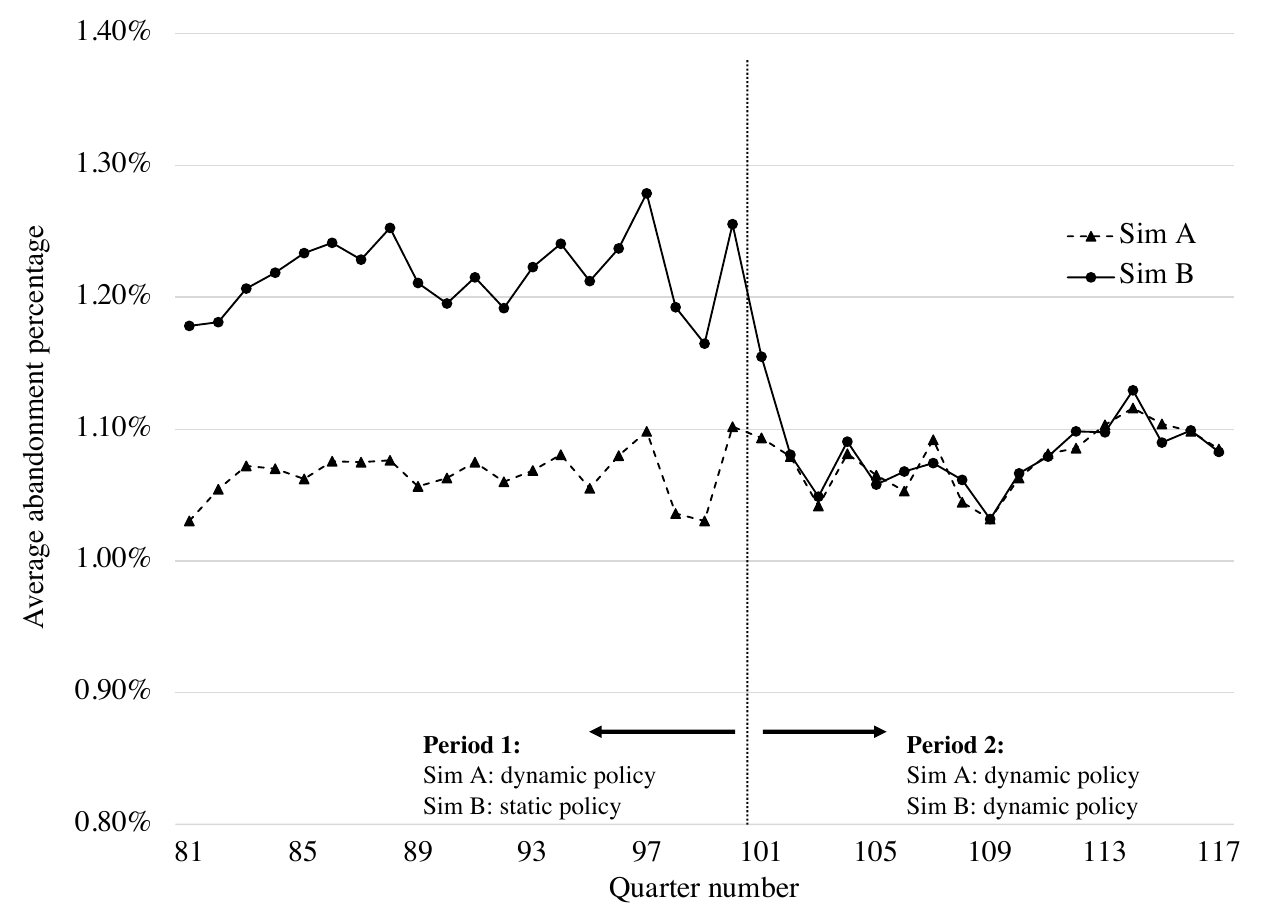}
\caption{Average abandonment percentage for Period 1 ($q=81, \dots, 100$) and the first 5 years of Period 2 ($q=101, \dots , 120$).  }
\label{fig:abandon}
\end{figure}

\begin{figure}[htbp]
\centering
\includegraphics[scale=0.5]{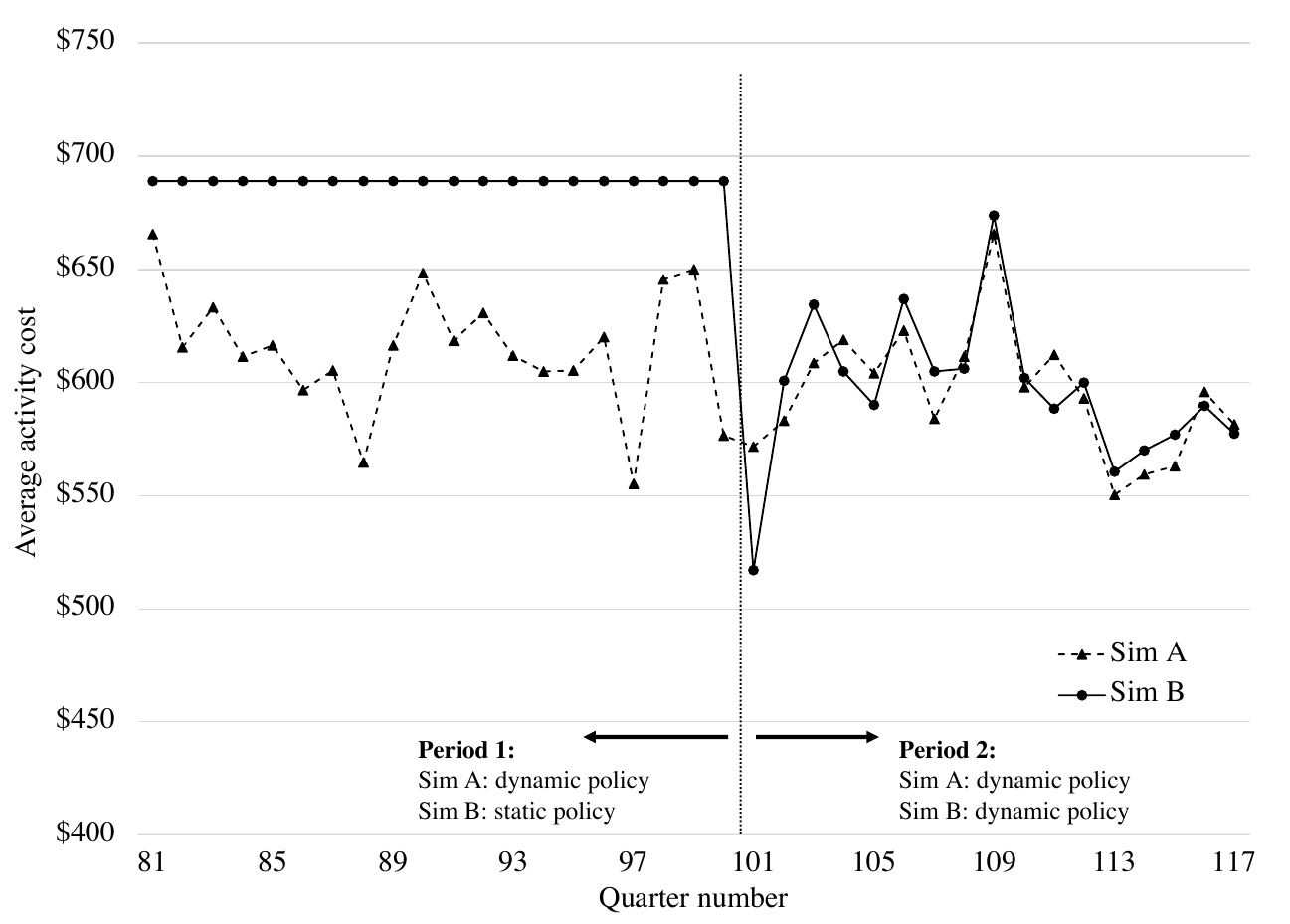}
\caption{Average activity cost for Period 1 ($q=81, \dots, 100$) and the first 5 years of Period 2 ($q=101, \dots , 120$).  }
\label{fig:actcost}
\end{figure}

In terms of cost, because the policy change is immediate, the change in activity cost is immediate. Figure~\ref{fig:actcost} shows the average activity cost per quarter for Period~1 and the first 5~years of Period~2.  In Period~1, we see that Sim~B has a constant activity cost that is consistently higher than that of Sim~A. This accounts for the longer queue in Period~1 -- Sim~A deploys more activities while using the static policy. At the start of Period~2 (Sim~B switches to the dynamic policy), Sim~B's activity cost drops immediately and thereafter closely tracks with the activity cost of Sim~A. The activity cost in $q=101$ is particularly low because the queue length carried over from the static policy in $q=100$ is high. Therefore, to draw down the queue to the optimal length, the usage of activities is particularly low. 

Figures~\ref{fig:queuelength}-\ref{fig:actcost} collectively suggest that the convergence to the new steady state under a policy change happens quickly. Additionally, hypothesis testing shows that average queue lengths, abandonment percentages, and activity costs for Sim~A and Sim~B converge in one quarter. This analysis is available from the authors upon request.

%
%

\subsubsection{Random number of volunteer slots per day} \label{sub:randomservice}

In Section~\ref{sec:numerical}, we set the number of volunteer slots per day to be constant at 250 slots per day. However, it's possible that there may be slow days or busier days when fewer or more meals need to be assembled (based on the client organizations served on a particular day). To capture this, we assume the number of volunteer slots on a given day is a random variable distributed according to the following discrete distribution:
\begin{align}
\mbox{number of volunteer slots per day} = \left\{
   \begin{array}{ll}
      a_1, & p=0.25 \\
      a_2, & p=0.5 \\
      a_3, & p=0.25
   \end{array}
   \right. \label{eqn:slotdist}
\end{align}
We simulate two scenarios with differing parameter values for distribution~(\ref{eqn:slotdist}); see Table~\ref{tab:slots}. Both scenarios preserve an average of 250 volunteer slots per day, but the variance of Scenario~2 is higher than Scenario~1. 

\begin{table} [htbp]
\centering
\begin{tabular}{c|c|c|c|c}
      & $a_1$ & $a_2$ & $a_3$ & \mbox{variance} \\ \hline
      Scenario 1 & $240$ & $250$ & 260 & 50 \\
      Scenario 2 & $230$ & $250$ & 270 & 200 \\ \hline
\end{tabular}
\caption{Two scenarios with differing parameter values for distribution~(\ref{eqn:slotdist}). }
\label{tab:slots}
\end{table}

The results for Scenario~1 are given in Table~\ref{tab:simrobust1}, Rows~3-4. We see that the total cost under the dynamic policy (\$3372) is 9.5\% lower than the total cost under the static policy (\$3726). Although the idle cost under the dynamic policy increases by $\$993-\$250 = \$742$, the dynamic policy uses activities more effectively, thus reducing the activity cost by $\$3476-\$2379 = \$1097$. The results for Scenario~2 are directionally the same (see Table~\ref{tab:simrobust1}, Rows 5-6). The total cost under the dynamic policy (\$4766) is 2.5\% lower than under the static policy (\$4886). The idle cost increases by $\$2099-\$1410 = \$689$, but the activity cost decreases by $\$3476-\$2667 = \$809$. The results in Scenarios~1 and~2 are consistent with the base case in Section~\ref{sec:numerical}, in that the dynamic policy performs better than the static policy (i.e., lower total cost). However, in Scenario~2, where the variance is higher, the improvement in performance is lower in magnitude than in Scenario~1. This result is consistent with our general intuition that higher uncertainty degrades system performance.

\subsubsection{Time to abandonment distributed according to Gamma distribution} \label{sub:abandongamma}

In our model, we assumed the time to abandon is exponentially distributed with mean $1/\gamma_j$, with $\gamma_j = 0.01$ in Section~\ref{sec:numerical}. We perform a robustness check here to see the impact of changing this assumption by using a Gamma$(\alpha, \beta)$ distribution. By using the Gamma distribution, we include our base case exponential distribution, which is equivalent to Gamma$(1, \gamma_j)$, and are able to study the impact of changing the variance without changing the mean. We simulate two scenarios with differing Gamma distribution parameters, as shown in Table~\ref{tab:gamma}. Both scenarios preserve the mean of $1/\gamma_j$ (to be consistent with the base case). However, Scenario~1 has a lower variance than the base case and Scenario~2 has a higher variance than the base case.

\begin{table} [htbp]
\centering
\begin{tabular}{c|c|c|c}
      & $\alpha$ & $\beta$ & variance \\ \hline
      Scenario 1 & 2 & $2 \gamma_j$ & $\frac{1}{2 \gamma_j^2}$ \\
      Scenario 2 & 0.9 & $0.9 \gamma_j$ & $\frac{1}{0.9 \gamma_j^2}$  \\ \hline
\end{tabular}
\caption{Two scenarios with differing Gamma distribution parameters. }
\label{tab:gamma}
\end{table}

The results for Scenario~1 are given in Table~\ref{tab:simrobust2}, Rows 3-4. We see that the total cost under the dynamic policy (\$853) is 69\% lower than under the static policy (\$2756). The static policy is very inefficient at compensating for the abandonments in this scenario. Although the dynamic policy increases idle cost by $\$38-\$0=\$38$, it has the flexibility to deploy activities in alignment with when abandonments occur, and thus it reduces the activity cost by $\$2756-\$815 = \$1941$. In Scenario~2, the dynamic policy also performs better than the static policy (see Table~\ref{tab:simrobust2}, Rows 5-6), but the improvement is lower in magnitude than Scenario~1. The total cost under the dynamic policy (\$4605) is 8.7\% lower than under the static policy (\$5043). In this scenario, the dynamic policy decreases idle cost by $\$1567-\$1188=\$379$ and decreases activity cost by $\$3476-\$3417 = \$59$. The results in Scenarios~1 and~2 are consistent with the base case in Section~\ref{sec:numerical}, in that the dynamic policy performs better than the static policy (i.e., lower total cost). However, in Scenario~2, where the variance is higher, the improvement in performance is lower in magnitude than in Scenario~1. This result is consistent with our general intuition that higher uncertainty degrades system performance.

\begin{landscape}
\begin{table} [tbp]
\centering
\begin{tabular}{lccccccccc}
  &  Act.  & Idle  & Total  &   &   & Act1  &
   Act2  & Act3  & Act4  \\ 
Simulation description &   cost &  cost &  cost & Idle  & Abandon  &  usage &
    usage &  usage &  usage \\\hline
Base case - static &  \$3476 & \$252 & \$3728 & 0.01\% & 1.38\% & 100\% & 100\% & 100\% &0\% \\
Base case - dyn &  \$2495 & \$629 & \$3123 & 0.02\% & 1.07\% & 99\% & 60\% & 31\% &14\% \\
Scenario 1, $\{a_1=240, a_2=250, a_3=260\} $ - static &  \$3476 & \$250 & \$3726 & 0.01\% & 1.44\% & 100\% & 100\% & 100\% &0\% \\
Scenario 1, $\{a_1=240, a_2=250, a_3=260\} $ - dyn &  \$2379 & \$993 & \$3372 & 0.03\% & 1.09\% & 98\% & 55\% & 28\% &14\% \\ 
Scenario 2, $\{a_1=230, a_2=250, a_3=270\}$ - static &  \$3476 & \$1410 & \$4886 & 0.04\% & 1.44\% & 100\% & 100\% & 100\% &0\% \\
Scenario 2, $\{a_1=230, a_2=250, a_3=270\}$ - dyn &  \$2667 & \$2099 & \$4766 & 0.05\% & 1.18\% & 99.5\% & 64\% & 35\% &17\% \\  \hline
\end{tabular}
\caption{Robustness check: Simulation results when the number of volunteer slots is distributed according to~(\ref{eqn:slotdist}). }
\label{tab:simrobust1}
\end{table}

\begin{table} [tbp]
\centering
\begin{tabular}{lccccccccc}
 &  Act.  & Idle  & Total  &   &   & Act1  &
   Act2  & Act3  & Act4  \\ 
Simulation description   &   cost &  cost &  cost & Idle  & Abandon  &  usage &
    usage &  usage &  usage \\\hline
Base case - static &  \$3476 & \$252 & \$3728 & 0.01\% & 1.38\% & 100\% & 100\% & 100\% &0\% \\
 Base case - dyn &  \$2495 & \$629 & \$3123 & 0.02\% & 1.07\% & 99\% & 60\% & 31\% &14\% \\
Scenario 1, Gamma$(2, 2\gamma_j)$ - static &  \$2756 & \$0 & \$2756 & 0.00\% & 0.17\% & 100\% & 100\% & 0\% &0\% \\
Scenario 1, Gamma$(2, 2\gamma_j)$  - dyn &  \$ 815& \$38 & \$853 & 0.00\% & 0.04\% &59\% & 11\% & 5\% & 2\% \\ 
Scenario 2, Gamma$(0.9, 0.9 \gamma_j)$  - static &  \$3476 & \$1567 & \$5043 & 0.04\% & 1.61\% & 100\% & 100\% & 100\% &0\% \\
Scenario 2, Gamma$(0.9, 0.9 \gamma_j)$ - dyn &  \$3417 & \$1188 & \$4605 & 0.03\% & 1.52\% & 100\% & 87\% & 54\% & 29\% \\  \hline
\end{tabular}
\caption{Robustness check: Simulation results when the time to abandon is distributed according to Gamma$(\alpha, \beta)$ distribution.}
\label{tab:simrobust2}
\end{table}

\end{landscape}


\begin{thebibliography}{}

\bibitem[\protect\citeauthoryear{Adusumilli and Hasenbein}{Adusumilli and
  Hasenbein}{2010}]{Hasenbein_2010}
Adusumilli, K.~M. and J.~J. Hasenbein (2010).
\newblock Dynamic admission and service rate control of a queue.
\newblock {\em Queueing Systems\/}~{\em 66\/}(2), 131--154.


\bibitem[\protect\citeauthoryear{Alwan and Ata}{Alwan and
  Ata}{2020}]{alwan-ata-2020}
Alwan, A. and B.~Ata (2020).
\newblock A diffusion approximation framework for ride-sharing with delay
  nodes.
\newblock Working Paper, Booth School of Business, University of Chicago.

\bibitem[\protect\citeauthoryear{AmeriCorps \& Senior Corps}{AmeriCorps \&
  Senior Corps}{2019}]{americorps-2019}
AmeriCorps \& Senior Corps (2019).
\newblock {\em Research - Overview Statistics}.
\newblock AmeriCorps \& Senior Corps.
\newblock https://www.nationalservice.gov/serve/via/research. Accessed May 21,
  2019.

\bibitem[\protect\citeauthoryear{Ata}{Ata}{2003}]{ata-2003}
Ata, B. (2003).
\newblock Dynamic control of stochastic networks.
\newblock Ph.D. Dissertation, Stanford University.

\bibitem[\protect\citeauthoryear{Ata}{Ata}{2005}]{Ata_2005}
Ata, B. (2005).
\newblock Dynamic power control in a wireless static channel subject to a
  quality-of-service constraint.
\newblock {\em Operations Research\/}~{\em 53\/}(5), 842--851.


\bibitem[\protect\citeauthoryear{Ata}{Ata}{2006}]{Ata_Thin_Arrival_Streams}
Ata, B. (2006).
\newblock Dynamic control of a multiclass queue with thin arrival streams.
\newblock {\em Operations Research\/}~{\em 54\/}(5), 876--892.


\bibitem[\protect\citeauthoryear{Ata and Barjesteh}{Ata and
  Barjesteh}{2019}]{ata-barjesteh-2019}
Ata, B. and N.~Barjesteh (2019).
\newblock {\em Dynamic Pricing of a Multiclass Make-to-Stock Queue}.
\newblock Available at SSRN: https://ssrn.com/abstract=3464763 or
  http://dx.doi.org/10.2139/ssrn.3464763.

\bibitem[\protect\citeauthoryear{Ata, Barjesteh, and Kumar}{Ata
  et~al.}{2020}]{ata-etal-2020}
Ata, B., N.~Barjesteh, and S.~Kumar (2020).
\newblock Dynamic dispatch and centralized relocation of cars in ride-sharing
  platforms.
\newblock Working Paper, Booth School of Business, University of Chicago.

\bibitem[\protect\citeauthoryear{Ata, Harrison, and Shepp}{Ata
  et~al.}{2005}]{ata-harrison-shepp-2005}
Ata, B., J.~M. Harrison, and L.~A. Shepp (2005).
\newblock Drift rate control of a brownian processing system.
\newblock {\em The Annals of Applied Probability\/}~{\em 15\/}(2), 1145--1160.


\bibitem[\protect\citeauthoryear{Ata and Kumar}{Ata and
  Kumar}{2005}]{ata-kumar-2005}
Ata, B. and S.~Kumar (2005).
\newblock Heavy traffic analysis of open processing networks with complete
  resource pooling: Asymptotic optimality of discrete review policies.
\newblock {\em Annals of Applied Probability\/}~{\em 15}, 331--391.


\bibitem[\protect\citeauthoryear{Ata, Lee, and Sonmez}{Ata
  et~al.}{2019}]{ata-etal-2019}
Ata, B., D.~Lee, and E.~Sonmez (2019).
\newblock Dynamic volunteer staffing in multicrop gleaning operations.
\newblock {\em Operations Research\/}~{\em 67}, 295--314.


\bibitem[\protect\citeauthoryear{Ata and Olsen}{Ata and
  Olsen}{2009}]{Ata_Olsen_2009}
Ata, B. and T.~L. Olsen (2009).
\newblock Near-optimal dynamic lead-time quotation and scheduling under
  convex-concave customer delay costs.
\newblock {\em Operations Research\/}~{\em 57\/}(3), 753--768.


\bibitem[\protect\citeauthoryear{Ata and Olsen}{Ata and
  Olsen}{2013}]{Ata_Olsen_Queueing_2013}
Ata, B. and T.~L. Olsen (2013).
\newblock Congestion-based leadtime quotation and pricing for revenue
  maximization with heterogeneous customers.
\newblock {\em Queueing Systems\/}~{\em 73\/}(1), 35--78.


\bibitem[\protect\citeauthoryear{Ata and Shneorson}{Ata and
  Shneorson}{2006}]{Ata_Shneorson_2006}
Ata, B. and S.~Shneorson (2006).
\newblock Dynamic control of an m/m/1 service system with adjustable arrival
  and service rates.
\newblock {\em Management Science\/}~{\em 52\/}(11), 1778--1791.


\bibitem[\protect\citeauthoryear{Ata and Tongarlak}{Ata and
  Tongarlak}{2013}]{Ata_Tongarlak_Queueing_2013}
Ata, B. and M.~H. Tongarlak (2013).
\newblock On scheduling a multiclass queue with abandonments under general
  delay costs.
\newblock {\em Queueing Systems\/}~{\em 74\/}(1), 65--104.


\bibitem[\protect\citeauthoryear{Ata and Zachariadis}{Ata and
  Zachariadis}{2007}]{Ata_Zachariadis_2007}
Ata, B. and K.~E. Zachariadis (2007).
\newblock Dynamic power control in a fading downlink channel subject to an
  energy constraint.
\newblock {\em Queueing Systems\/}~{\em 55\/}(1), 41--69.


\bibitem[\protect\citeauthoryear{Ataseven, Nair, and Ferguson}{Ataseven
  et~al.}{2018}]{ataseven-etal-2018}
Ataseven, C., A.~Nair, and M.~Ferguson (2018).
\newblock An examination of the relationship between intellectual capital and
  supply chain integration in humanitarian aid organizations: A survey-based
  investigation of food banks.
\newblock {\em Decision Sciences Journal\/}~{\em 49}, 827--862.


\bibitem[\protect\citeauthoryear{Bell and Williams}{Bell and
  Williams}{2005}]{bell-williams-2005}
Bell, S. and R.~Williams (2005).
\newblock Dynamic scheduling of a parallel server system in heavy traffic with
  complete resource pooling: Asymptotic optimality of a threshold policy.
\newblock {\em Electronic Journal of Probability\/}~{\em 10\/}(33), 1044--1115.


\bibitem[\protect\citeauthoryear{Bell and Williams}{Bell and
  Williams}{2001}]{Bell_Williams_2001}
Bell, S.~L. and R.~J. Williams (2001).
\newblock Dynamic scheduling of a system with two parallel servers in heavy
  traffic with resource pooling: asymptotic optimality of a threshold policy.
\newblock {\em The Annals of Applied Probability\/}~{\em 11\/}(3), 608--649.


\bibitem[\protect\citeauthoryear{Berenguer and Shen}{Berenguer and
  Shen}{2020}]{berenguer-shen-2020}
Berenguer, G. and Z.-J.~M. Shen (2020).
\newblock Challenges in managing nonprofit operations: An operations management
  perspective.
\newblock {\em Manufacturing \& Service Operations Management\/}~{\em 22},
  888--905.


\bibitem[\protect\citeauthoryear{Brayko, Houmanfar, and Ghezzi}{Brayko
  et~al.}{2016}]{brayko-etal-2016}
Brayko, C.~A., R.~A. Houmanfar, and E.~L. Ghezzi (2016).
\newblock Organized cooperation: A behavioral approach.
\newblock {\em Behavior and Social Issues\/}~{\em 25}, 77--98.


\bibitem[\protect\citeauthoryear{Budhiraja, Ghosh, and Lee}{Budhiraja
  et~al.}{2011}]{Budhiraja_2011}
Budhiraja, A., A.~P. Ghosh, and C.~Lee (2011).
\newblock Ergodic rate control problem for single class queueing networks.
\newblock {\em SIAM Journal on Control and Optimization\/}~{\em 49\/}(4),
  1570--1606.


\bibitem[\protect\citeauthoryear{Bussell and Forbes}{Bussell and
  Forbes}{2002}]{bussell-forbes-2002}
Bussell, H. and D.~Forbes (2002).
\newblock Understanding the volunteer market: The what, where, who and why of
  volunteering.
\newblock {\em International Journal of Nonprofit and Voluntary Sector
  Marketing\/}~{\em 7}, 244--257.


\bibitem[\protect\citeauthoryear{Chevalier and Wein}{Chevalier and
  Wein}{1993}]{chevalier-wein-1993}
Chevalier, P. and L.~Wein (1993).
\newblock Scheduling of network of queues: Heavy traffic analysis of a
  multistation closed network.
\newblock {\em Operations Research\/}~{\em 41}, 743--758.


\bibitem[\protect\citeauthoryear{Clary, Snyder, Ridge, Miene, and Haugen}{Clary
  et~al.}{1994}]{clary-etal-1994}
Clary, E.~G., M.~Snyder, R.~D. Ridge, P.~K. Miene, and J.~A. Haugen (1994).
\newblock Matching messages to motives in persuasion: A functional approach to
  promoting volunteerism.
\newblock {\em Journal of Applied Psychology\/}~{\em 24}, 1129--1149.


\bibitem[\protect\citeauthoryear{Cnaan and Cascio}{Cnaan and
  Cascio}{1998}]{cnaan-cascio-1998}
Cnaan, R.~A. and T.~A. Cascio (1998).
\newblock Performance and commitment.
\newblock {\em Journal of Social Service Research\/}~{\em 24}, 1--37.


\bibitem[\protect\citeauthoryear{Crabill}{Crabill}{1972}]{Crabill_72}
Crabill, T.~B. (1972).
\newblock Optimal control of a service facility with variable exponential
  service times and constant arrival rate.
\newblock {\em Management Science\/}~{\em 18\/}(9), 560--566.


\bibitem[\protect\citeauthoryear{Crabill}{Crabill}{1974}]{Crabill_74}
Crabill, T.~B. (1974).
\newblock Optimal control of a maintenance system with variable service rates.
\newblock {\em Operations Research\/}~{\em 22\/}(4), 736--745.


\bibitem[\protect\citeauthoryear{Csorgo and Horvath}{Csorgo and
  Horvath}{1993}]{csorgo-horvath-1993}
Csorgo, M. and L.~Horvath (1993).
\newblock {\em Weighted Approximations in Probability and Statistics}.
\newblock Wiley.


\bibitem[\protect\citeauthoryear{Einolf}{Einolf}{2018}]{einolf-2018}
Einolf, C. (2018).
\newblock Evidence-based volunteer management: a review of the literature.
\newblock {\em Voluntary Sector Review\/}~{\em 9\/}(2), 153--176.


\bibitem[\protect\citeauthoryear{Eisner, Grimm~Jr., Maynard, and
  Washburn}{Eisner et~al.}{2009}]{eisner-etal-2009}
Eisner, D., R.~Grimm~Jr., S.~Maynard, and S.~Washburn (2009).
\newblock The new volunteer workforce.
\newblock {\em Stanford Social Innovation Review\/}~{\em Winter}, 32--37.


\bibitem[\protect\citeauthoryear{Ellis}{Ellis}{2010}]{ellis-2010}
Ellis, S. (2010).
\newblock {\em From the Top Down: The Executive Role in Successful Volunteer
  Involvement}.
\newblock Philadelphia, PA: Energize Inc.


\bibitem[\protect\citeauthoryear{Foster-Bey, Grimm~Jr., and Dietz}{Foster-Bey
  et~al.}{2007}]{foster-bey-etal-2007}
Foster-Bey, J., R.~Grimm~Jr., and N.~Dietz (2007).
\newblock Keeping baby boomers volunteering.
\newblock Corporation for National and Community Service, Washington, D.C.
\newblock
  https://www.nationalservice.gov/pdf/07\_0307\_boomer\_report\_summary.pdf.
  Accessed July 5, 2019.

\bibitem[\protect\citeauthoryear{Gallus}{Gallus}{2017}]{gallus-2017}
Gallus, J. (2017).
\newblock Fostering public good contributions with symbolic awards: A
  large-scale natural field experiment at wikipedia.
\newblock {\em Management Science\/}~{\em 63}, 3999--4015.


\bibitem[\protect\citeauthoryear{Gazley}{Gazley}{2012}]{gazley-2012}
Gazley, B. (2012).
\newblock Predicting a volunteer’s future intentions in professional
  associations: A test of the penner model.
\newblock {\em Nonprofit and Voluntary Sector Quartery\/}~{\em 42}, 1245--1267.


\bibitem[\protect\citeauthoryear{George and Harrison}{George and
  Harrison}{2001}]{George_Harrison_2001}
George, J.~M. and J.~M. Harrison (2001).
\newblock Dynamic control of a queue with adjustable service rate.
\newblock {\em Operations research\/}~{\em 49\/}(5), 720--731.


\bibitem[\protect\citeauthoryear{Ghamami and Ward}{Ghamami and
  Ward}{2013}]{Ghamami_Ward_2013}
Ghamami, S. and A.~R. Ward (2013).
\newblock Dynamic scheduling of a two-server parallel server system with
  complete resource pooling and reneging in heavy traffic: Asymptotic
  optimality of a two-threshold policy.
\newblock {\em Mathematics of Operations Research\/}~{\em 38\/}(4), 761--824.


\bibitem[\protect\citeauthoryear{Ghosh and Weerasinghe}{Ghosh and
  Weerasinghe}{2007}]{Ghosh_2007}
Ghosh, A.~P. and A.~P. Weerasinghe (2007).
\newblock Optimal buffer size for a stochastic processing network in heavy
  traffic.
\newblock {\em Queueing Systems\/}~{\em 55\/}(3), 147--159.


\bibitem[\protect\citeauthoryear{Ghosh and Weerasinghe}{Ghosh and
  Weerasinghe}{2010}]{Ghosh_2010}
Ghosh, A.~P. and A.~P. Weerasinghe (2010).
\newblock Optimal buffer size and dynamic rate control for a queueing system
  with impatient customers in heavy traffic.
\newblock {\em Stochastic Processes and Their Applications\/}~{\em 120\/}(11),
  2103--2141.


\bibitem[\protect\citeauthoryear{Goheen}{Goheen}{2018}]{goheen-2018}
Goheen, M. (2018).
\newblock {\em Personal Communications}.
\newblock Volunteer Engagement Manager, Three Square.

\bibitem[\protect\citeauthoryear{Harrison and Wein}{Harrison and
  Wein}{1990a}]{harrison-wein-1990}
Harrison, J. and L.~Wein (1990a).
\newblock Scheduling of network of queues: Heavy traffic analysis of a
  two-station closed network.
\newblock {\em Operations Research\/}~{\em 38}, 1052--1064.


\bibitem[\protect\citeauthoryear{Harrison, Williams, and Chen}{Harrison
  et~al.}{1990}]{harrison-etal-1990}
Harrison, J., R.~Williams, and H.~Chen (1990).
\newblock Brownian models of closed queueing networks with homogeneous customer
  populations.
\newblock {\em Stochastics and Stochastic Reports\/}~{\em 29}, 37--74.


\bibitem[\protect\citeauthoryear{Harrison}{Harrison}{1988}]{Harrison_88}
Harrison, J.~M. (1988).
\newblock {Brownian} models of queueing networks with heterogeneous customer
  populations.
\newblock In {\em Stochastic differential systems, stochastic control theory
  and applications}, pp.\  147--186. Springer.

\bibitem[\protect\citeauthoryear{Harrison}{Harrison}{1996}]{Harrison_96}
Harrison, J.~M. (1996).
\newblock The {BIGSTEP} approach to flow management in stochastic processing
  networks.
\newblock {\em Stochastic Networks: Theory and Applications\/}~{\em 4},
  147--186.


\bibitem[\protect\citeauthoryear{Harrison}{Harrison}{1998}]{Harrison_98}
Harrison, J.~M. (1998).
\newblock Heavy traffic analysis of a system with parallel servers: Asymptotic
  optimality of discrete-review policies.
\newblock {\em Annals of applied probability\/}, 822--848.


\bibitem[\protect\citeauthoryear{Harrison}{Harrison}{2000}]{Harrison2000}
Harrison, J.~M. (2000).
\newblock Brownian models of open processing networks: Canonical representation
  of workload.
\newblock {\em The Annals of Applied Probability\/}~{\em 10\/}(1), 75--103.


\bibitem[\protect\citeauthoryear{Harrison}{Harrison}{2013}]{harrison-2013}
Harrison, J.~M. (2013).
\newblock {\em Brownian Models of Performance and Control\/} (1st ed.).
\newblock Cambridge University Press.


\bibitem[\protect\citeauthoryear{Harrison and Wein}{Harrison and
  Wein}{1989}]{Harrison_Wein_89}
Harrison, J.~M. and L.~M. Wein (1989).
\newblock Scheduling networks of queues: Heavy traffic analysis of a simple
  open network.
\newblock {\em Queueing Systems\/}~{\em 5\/}(4), 265--279.


\bibitem[\protect\citeauthoryear{Harrison and Wein}{Harrison and
  Wein}{1990b}]{Harrison_Wein_90}
Harrison, J.~M. and L.~M. Wein (1990b).
\newblock Scheduling networks of queues: Heavy traffic analysis of a
  two-station closed network.
\newblock {\em Operations research\/}~{\em 38\/}(6), 1052--1064.


\bibitem[\protect\citeauthoryear{Haski-Leventhal, Hustinx, and
  Handy}{Haski-Leventhal et~al.}{2011}]{haski-leventhal-etal-2011}
Haski-Leventhal, D., L.~Hustinx, and F.~Handy (2011).
\newblock What money cannot buy: The distinctive and multidimensional impact of
  volunteers.
\newblock {\em Journal of Community Practice\/}~{\em 19}, 138--158.


\bibitem[\protect\citeauthoryear{Henderson and Sowa}{Henderson and
  Sowa}{2019}]{henderson-sowa-2019}
Henderson, A.~C. and J.~Sowa (2019).
\newblock Volunteer satisfaction at the boundary of public and nonprofit:
  Organizational- and individual-level determinants.
\newblock {\em Public Performance \& Management Review\/}~{\em 42}, 162--189.


\bibitem[\protect\citeauthoryear{Hewitt, Nowak, and Gala}{Hewitt
  et~al.}{2015}]{hewitt-etal-2015}
Hewitt, M., M.~Nowak, and L.~Gala (2015).
\newblock Consolidating home meal delivery with limited operational disruption.
\newblock {\em European Journal of Operational Research\/}~{\em 243}, 281--291.


\bibitem[\protect\citeauthoryear{Keller}{Keller}{2018}]{Keller_2018}
Keller, H.~B. (2018).
\newblock {\em Numerical Methods for Two-Point Boundary-Value Problems}.
\newblock Dover Publications.


\bibitem[\protect\citeauthoryear{Kogan and Lipster}{Kogan and
  Lipster}{1993}]{kogan-lipster-1993}
Kogan, Y. and R.~Lipster (1993).
\newblock Limit non-stationary behavior of large closed queueing networks with
  bottlenecks.
\newblock {\em Queueing Systems\/}~{\em 14\/}(1-2), 33--55.


\bibitem[\protect\citeauthoryear{Krichagina and Puhalskii}{Krichagina and
  Puhalskii}{1986}]{kogan-etal-1986}
Krichagina, E. and A.~Puhalskii (1986).
\newblock Gaussian diffusion approximation of closed markov model of computer
  networks.
\newblock {\em Problems of Information Transmission\/}~{\em 22}, 38--51.


\bibitem[\protect\citeauthoryear{Krichagina and Puhalskii}{Krichagina and
  Puhalskii}{1997}]{krichagina-puhalskii-1997}
Krichagina, E. and A.~Puhalskii (1997).
\newblock A heavy-traffic analysis of a closed queueing system with a
  gi/$\infty$ service center.
\newblock {\em Queueing Systems\/}~{\em 25}, 235--280.


\bibitem[\protect\citeauthoryear{Kumar, Lewis, and Topaloglu}{Kumar
  et~al.}{2013}]{Lewis_2013}
Kumar, R., M.~E. Lewis, and H.~Topaloglu (2013).
\newblock Dynamic service rate control for a single-server queue with
  markov-modulated arrivals.
\newblock {\em Naval Research Logistics (NRL)\/}~{\em 60\/}(8), 661--677.


\bibitem[\protect\citeauthoryear{Maglaras}{Maglaras}{1999}]{Maglaras_1999}
Maglaras, C. (1999).
\newblock Dynamic scheduling in multiclass queueing networks: Stability under
  discrete-review policies.
\newblock {\em Queueing Systems\/}~{\em 31\/}(3-4), 171--206.


\bibitem[\protect\citeauthoryear{Maglaras}{Maglaras}{2000}]{Maglaras_2000}
Maglaras, C. (2000).
\newblock Discrete-review policies for scheduling stochastic networks:
  Trajectory tracking and fluid-scale asymptotic optimality.
\newblock {\em The Annals of Applied Probability\/}~{\em 10\/}(3), 897--929.


\bibitem[\protect\citeauthoryear{Manshadi and Rodilitz}{Manshadi and
  Rodilitz}{2021}]{manshadi-rodilitz-2021}
Manshadi, V. and S.~Rodilitz (2021).
\newblock {\em Online Policies for Efficient Volunteer Crowdsourcing}.
\newblock Available at SSRN: https://ssrn.com/abstract=3802624 or
  http://dx.doi.org/10.2139/ssrn.3802624.

\bibitem[\protect\citeauthoryear{McBride and Lee}{McBride and
  Lee}{2012}]{mcbride-lee-2012}
McBride, A.~M. and Y.~Lee (2012).
\newblock Institutional predictors of volunteer retention: The case of
  americorps national service.
\newblock {\em Administration \& Society\/}~{\em 44}, 343--366.


\bibitem[\protect\citeauthoryear{Nesbit, Christensen, and Brudney}{Nesbit
  et~al.}{2018}]{nesbit-etal-2018}
Nesbit, R., R.~K. Christensen, and J.~L. Brudney (2018).
\newblock The limits and dpossibilities of volunteering: A framework for
  explaining the scope of volunteer involvement in public and nonprofit
  organizations.
\newblock {\em Public Administration Review\/}~{\em 78}, 502--513.


\bibitem[\protect\citeauthoryear{Reed, Ward, and Zhan}{Reed
  et~al.}{2013}]{reed-etal-2013}
Reed, J., A.~Ward, and D.~Zhan (2013).
\newblock On the generalized skorokhod problem in one dimension.
\newblock {\em Journal of Applied Probability\/}~{\em 50\/}(1), 16--26.


\bibitem[\protect\citeauthoryear{Rehnberg}{Rehnberg}{2009}]{rehnberg-2009}
Rehnberg, S.~J. (2009).
\newblock Strategic volunteer engagement: A guide for nonprofit and public
  sector leaders.
\newblock RGK Center for Philanthropy \& Community Service.
\newblock
  https://www.volunteeralive.org/docs/Strategic\%20Volunteer\%20Engagement.pdf.
  Access June 19, 2019.

\bibitem[\protect\citeauthoryear{Royden}{Royden}{1998}]{royden-1998}
Royden, H. (1998).
\newblock {\em Real Analysis\/} (3rd ed.).
\newblock Englewood Cliffs, New Jersey: Prentice-Hall.


\bibitem[\protect\citeauthoryear{Rubino and Ata}{Rubino and
  Ata}{2009}]{Ata_MakeToOrder_2009}
Rubino, M. and B.~Ata (2009).
\newblock Dynamic control of a make-to-order, parallel-server system with
  cancellations.
\newblock {\em Operations Research\/}~{\em 57\/}(1), 94--108.


\bibitem[\protect\citeauthoryear{Sampson}{Sampson}{2006}]{sampson-2006}
Sampson, S.~E. (2006, June).
\newblock Optimization of volunteer labor assignments.
\newblock {\em Journal of Operations Management\/}~{\em 24\/}(4), 363--377.


\bibitem[\protect\citeauthoryear{Simmonds}{Simmonds}{2014}]{simmonds-2014}
Simmonds, C. (2014, June 12).
\newblock Quality volunteers are vital and are becoming part of charity
  workforce.
\newblock The Guardian.
\newblock
  https://www.theguardian.com/voluntary-sector-network/2014/jun/12/volunteering-charity-value-quality-ability.

\bibitem[\protect\citeauthoryear{Smorodinskii}{Smorodinskii}{1986}]{smorodinskii-1986}
Smorodinskii, A. (1986).
\newblock Asymptotic distribution of the queue length of a service system (in
  russian).
\newblock {\em Avtomatika i Telemekhanika\/}~{\em 2}.


\bibitem[\protect\citeauthoryear{Snyder and Omoto}{Snyder and
  Omoto}{2008}]{snyder-omoto-2008}
Snyder, M. and A.~Omoto (2008).
\newblock Volunteerism: Social issues perspectives and social policy
  implications.
\newblock {\em Social Issues and Policy Review\/}~{\em 2}, 1--36.


\bibitem[\protect\citeauthoryear{Stidham and Weber}{Stidham and
  Weber}{1989}]{Stidham_Weber_89}
Stidham, S. and R.~R. Weber (1989).
\newblock Monotonic and insensitive optimal policies for control of queues with
  undiscounted costs.
\newblock {\em Operations research\/}~{\em 37\/}(4), 611--625.


\bibitem[\protect\citeauthoryear{Sun}{Sun}{2020}]{Sun_2020}
Sun, X. (2020).
\newblock Scheduling queues with customer transfers.
\newblock {\em Working Paper\/}.


\bibitem[\protect\citeauthoryear{Tang, Morrow-Howell, and Songiee}{Tang
  et~al.}{2009}]{tang-etal-2009}
Tang, F., N.~Morrow-Howell, and H.~Songiee (2009).
\newblock Institutional facilitation in sustained volunteering among older
  adult volunteers.
\newblock {\em Social Work Research\/}~{\em 33}, 172--182.


\bibitem[\protect\citeauthoryear{Three Square}{Three
  Square}{2021}]{threesquare-2021}
Three Square (2021).
\newblock {\em three square overview}.
\newblock Three Square.
\newblock https://www.threesquare.org/about-us/three-square-overview. Accessed
  May 23, 2019.

\bibitem[\protect\citeauthoryear{Urban Institute}{Urban
  Institute}{2004}]{urbaninstitute-2004}
Urban Institute (2004).
\newblock {\em Volunteer Management Capacity in America’s Charities and
  Congregations: A Briefing Report}.
\newblock Urban Institute.
\newblock
  http://webarchive.urban.org/UploadedPDF/410963\_VolunteerManagment.pdf.
  Accessed July 5, 2019.

\bibitem[\protect\citeauthoryear{Urrea, Pedranza-Martinex, and Besiou}{Urrea
  et~al.}{2019}]{urrea-etal-2019}
Urrea, G., A.~J. Pedranza-Martinex, and M.~Besiou (2019).
\newblock Volunteer management in charity storehouses: Experience, congestion
  and operational performance.
\newblock {\em Product and Operations Management\/}~{\em 28\/}(10), 2653--2671.


\bibitem[\protect\citeauthoryear{Vecina, Chac\'on, Sueiro, and Barr\'on}{Vecina
  et~al.}{2012}]{vecina-etal-2012}
Vecina, M.~L., F.~Chac\'on, M.~Sueiro, and A.~Barr\'on (2012).
\newblock Volunteer engagement: Does engagement predict the degree of
  satisfaction among new volunteers and the commitment of those who have been
  active longer?
\newblock {\em Applied Psychology: An International Review\/}~{\em 61},
  130--148.


\bibitem[\protect\citeauthoryear{Wilson, Wansink, Swigert, and Waxman}{Wilson
  et~al.}{2015}]{wilson-etal-2015}
Wilson, N., B.~Wansink, J.~Swigert, and E.~Waxman (2015).
\newblock Hunger relief programs and behavioral economics: An introduction.
\newblock Paper Presented at 2015 AAEA Annual Meeting.
\newblock San Francisco CA, 26-28 July.

\bibitem[\protect\citeauthoryear{Wisner, Stringfellow, Youngdahl, and
  Parker}{Wisner et~al.}{2005}]{wisner-etal-2005}
Wisner, P., A.~Stringfellow, W.~Youngdahl, and L.~Parker (2005).
\newblock The service volunteer – loyalty chain: An exploratory study of
  charitable not-for-profit service organizations.
\newblock {\em Journal of Operations Management\/}~{\em 23\/}(2), 143--161.


\bibitem[\protect\citeauthoryear{Wymer~Jr. and Starnes}{Wymer~Jr. and
  Starnes}{2001}]{wymer-starnes-2001}
Wymer~Jr., W.~W. and B.~J. Starnes (2001).
\newblock Conceptual foundations and practical guidelines for recruiting
  volunteers to service in local nonprofit organizations: Part i.
\newblock {\em Journal of Nonprofit \& Public Sector Marketing\/}~{\em 9},
  63--96.


\end{thebibliography}
\end{document}